\newtheorem{proposition}{Proposition}
\newtheorem{theorem}{Theorem}
\newtheorem{corollary}{Corollary}
\newtheorem{claim}{Claim}[theorem]
\newtheorem{lemma}[claim]{Lemma}
\theoremstyle{definition}
\newcommand{\auf}{\langle}
\newcommand{\zu}{\rangle}
\newcommand{\Wh}{W_h}
\newcommand{\Wv}{W_v}
\newcommand{\Rh}{R_h}
\newcommand{\Rv}{R_v}
\newcommand{\Rvr}{R_v^+}
\newcommand{\mprod}{\times}
\newcommand{\dprod}{\times^\delta}
\newcommand{\sqprod}{\times^\delta_{\textit{sq}}}     
\newcommand{\sqfprod}{\times^\delta_{\textit{sqf}}}
\newcommand{\OnSetu}[1]{\Sigma_{#1}^\irun}   
\newcommand{\OnSetd}[1]{\Sigma_{#1}^\lrun}   
\newcommand{\linOnSetd}{\Delta^\lrun_i}
\newcommand{\linOnSetu}{\Delta^\irun_i}
\newcommand{\sub}{\textit{sub}}
\newcommand{\hd}{\textit{hd}}
\newcommand{\vd}{\textit{vd}}
\newcommand{\D}{\Diamond}
\newcommand{\B}{\Box}
\newcommand{\Bh}{\B_{h}}
\newcommand{\Dh}{\D_{h}}
\newcommand{\Bv}{\B_{v}}
\newcommand{\Dv}{\D_{v}}
\newcommand{\Uv}{\Bv^+}
\newcommand{\Uh}{\Bh^+}
\newcommand{\Ev}{\Dv^+}
\newcommand{\Bhr}{\blacksquare_{h}}
\newcommand{\finci}{c_i^{++}}
\newcommand{\fdeci}{c_ i^{--}}
\newcommand{\ftest}{c_i^{??}}
\newcommand{\fincj}{c_j^{++}}
\newcommand{\fdecj}{c_ j^{--}}
\newcommand{\ftestj}{c_j^{??}}
\newcommand{\lrun}{\circ}
\newcommand{\irun}{\bullet}
\newcommand{\stepi}{\mathop{\to}^{\alpha}}
\newcommand{\stepin}{\mathop{\to}^{\alpha_n}}
\newcommand{\lstepin}{\mathop{\to}_{\textit{\scriptsize lossy}}^{\alpha_n}}
\newcommand{\istepin}{\mathop{\to}_{\textit{\scriptsize i\_err}}^{\alpha_n}}
\newcommand{\lstepi}{\mathop{\to}_{\textit{\scriptsize lossy}}^{\alpha}}
\newcommand{\istepi}{\mathop{\to}_{\textit{\scriptsize i\_err}}^{\alpha}}
\newcommand{\init}{\mathsf{ini}}
\newcommand{\fin}{\mathsf{fin}}
\newcommand{\qini}{q_{\init}}
\newcommand{\ierror}{insertion-error}
\newcommand{\id}{\mathsf{id}}
\newcommand{\univf}{\mathsf{univ}^\delta}
\newcommand{\gridfw}{\mathsf{grid}}
\newcommand{\gridfwfin}{\mathsf{grid}^{\textit{fin}}}
\newcommand{\fixu}{\mathsf{fix}^\irun}   
\newcommand{\incu}{\mathsf{inc}^\irun}   
\newcommand{\decu}{\mathsf{dec}^\irun}  
\newcommand{\fixd}{\mathsf{fix}^\lrun}  
\newcommand{\incd}{\mathsf{inc}^\lrun}   
\newcommand{\decd}{\mathsf{dec}^\lrun}   
\newcommand{\lexei}{\mathsf{do}^\lrun(\alpha)}
\newcommand{\lexein}{\mathsf{do}^\lrun(\alpha_n)}
\newcommand{\lexeinp}{\mathsf{do}^\lrun(\alpha_{n+1})}
\newcommand{\iexei}{\mathsf{do}^\irun(\alpha)}
\newcommand{\iexein}{\mathsf{do}^\irun(\alpha_n)}
\newcommand{\iexeinp}{\mathsf{do}^\irun(\alpha_{n+1})}
\newcommand{\mfw}{\varphi_M}
\newcommand{\mfwfin}{\varphi_M^{\textit{fin}}}
\renewcommand{\stop}{\mathsf{end}}
\newcommand{\St}{\widehat{\mathsf{S}}}  
\newcommand{\lingridfw}{\mathsf{lingrid}}
\newcommand{\couf}{\xi_M}
\newcommand{\linfixu}{\mathsf{lin\_fix}^\irun}
\newcommand{\linincu}{\mathsf{lin\_inc}^\irun}
\newcommand{\lindecu}{\mathsf{lin\_dec}^\irun}
\newcommand{\linfixd}{\mathsf{lin\_fix}^\lrun}
\newcommand{\linincd}{\mathsf{lin\_inc}^\lrun}
\newcommand{\lindecd}{\mathsf{lin\_dec}^\lrun}
\newcommand{\linlexei}{\mathsf{lin\_do}^\lrun(\alpha)}
\newcommand{\liniexei}{\mathsf{lin\_do}^\irun(\alpha)}
\newcommand{\linlexein}{\mathsf{lin\_do}^\lrun(\alpha_n)}
\newcommand{\linlexeinp}{\mathsf{lin\_do}^\lrun(\alpha_{n+1})}
\newcommand{\liniexein}{\mathsf{lin\_do}^\irun(\alpha_{n})}
\newcommand{\liniexeinp}{\mathsf{lin\_do}^\irun(\alpha_{n+1})}
\newcommand{\linmfw}{\psi_M}
\newcommand{\lexeibw}{\mathsf{bw\_do}^\lrun(\alpha)}
\newcommand{\iexeibw}{\mathsf{bw\_do}^\irun(\alpha)}
\newcommand{\fixubw}{\mathsf{bw\_fix}^\irun}   
\newcommand{\incubw}{\mathsf{bw\_inc}^\irun}   
\newcommand{\decubw}{\mathsf{bw\_dec}^\irun}  
\newcommand{\fixdbw}{\mathsf{bw\_fix}^\lrun}  
\newcommand{\incdbw}{\mathsf{bw\_inc}^\lrun}   
\newcommand{\decdbw}{\mathsf{bw\_dec}^\lrun}   
\newcommand{\pvar}{\mathsf{P}}
\newcommand{\rvar}{\mathsf{R}}
\newcommand{\cminus}{\mathsf{C}_i^\lrun}
\newcommand{\cplus}{\mathsf{C}_i^\irun}
\newcommand{\linminus}{\mathsf{In}_i^\lrun}
\newcommand{\loutminus}{\mathsf{Out}_i^\lrun}
\newcommand{\linplus}{\mathsf{In}_i^\irun}
\newcommand{\loutplus}{\mathsf{Out}_i^\irun}
\newcommand{\tvar}{\mathsf{tick}}
\newcommand{\F}{\mathfrak{F}}
\newcommand{\G}{\mathfrak{G}}
\newcommand{\HH}{\mathfrak{H}}
\newcommand{\M}{\mathfrak{M}}
\newcommand{\Mfw}{\M_\infty}
\newcommand{\efw}{\mu}
\newcommand{\N}{\mathfrak{N}}
\newcommand{\linMfw}{\N_\infty}
\newcommand{\linefw}{\nu}
\newcommand{\C}{\mathcal{C}}
\newcommand{\Ch}{\C_h}
\newcommand{\Cv}{\C_v}
\newcommand{\Fr}{\mathsf{Fr}\,}
\newcommand{\Log}{\mathsf{Logic\_of}}
\newcommand{\disKfourt}{\mathbf{DisK4.3}}
\newcommand{\Diff}{\mathbf{Diff}}
\newcommand{\Kfour}{\mathbf{K4}}
\newcommand{\Kfourt}{\mathbf{K4.3}}
\newcommand{\Sfive}{\mathbf{S5}}
\newcommand{\K}{\mathbf{K}}
\newcommand{\T}{\mathbf{T}}
\newcommand{\GLt}{\mathbf{GL.3}}
\newcommand{\Grzt}{\mathbf{Grz.3}}
\newcommand{\Sfourt}{\mathbf{S4.3}}
\newcommand{\Alt}{\mathbf{Alt}}
\newcommand{\DAlt}{\mathbf{DAlt}}
\begin{document}

\title{The decision problem of modal product logics\\ with a diagonal, and faulty counter machines}

\author{C. Hampson$^1$, S.Kikot$^2$, and A. Kurucz$^1$\\[10pt]
{\small ${}^1$Department of Informatics}\\
{\small King's College London, U.K.}\\[7pt]
{\small ${}^2$Institute for Information Transmission Problems}\\
{\small Moscow Institute for Physics and Technology}\\
{\small Moscow, Russia}}


\maketitle


\begin{abstract}
In the propositional modal (and algebraic) treatment of two-variable first-order logic equality
is modelled by a `diagonal' constant, interpreted in square products of universal frames as the 
identity (also known as the `diagonal') relation.
Here we study the decision problem of  products of two \emph{arbitrary} modal logics equipped with
such a diagonal. As the presence or absence of equality in two-variable first-order logic does not 
influence the complexity of its satisfiability problem, one might expect that
adding a diagonal to product logics in general is similarly harmless.
We show that this is far from being the case, and there can be quite a big jump in complexity,
even from decidable to the highly undecidable.
Our undecidable logics can also be viewed as new fragments of first-order logic
where adding equality changes a decidable fragment to undecidable.
We prove our results by a novel application of counter machine problems.
While our formalism apparently cannot force reliable counter machine computations directly, the presence of
a unique diagonal in the models makes it possible to encode both lossy and
\ierror\ computations, for the \emph{same} sequence of instructions. We show that,
given such a pair of faulty computations, it is then possible to reconstruct a reliable run from them.
\end{abstract}



\section{Introduction}\label{intro}

It is well-known that the first-order quantifier $\forall x$ can be considered as an `$\Sfive$-box': a
propositional modal $\Box$-operator interpreted over universal frames (that is, relational structures 
$\auf W,R\zu$ where $R=W\mprod W$).
The so-called `standard translation', mapping modal formulas to first-order ones,
establishes a validity preserving, bijective connection between the modal logic
$\Sfive$ and the one-variable fragment of classical first-order logic \cite{Wajsberg33}.
The idea of generalising such a propositional approach to full first-order logic was suggested and
thoroughly investigated both in modal setting \cite{Quine71,Kuhn80,Venema91},
and in algebraic logic \cite{Halmos62,Henkinetal85}.
In particular,
the bimodal logic $\Sfive\mprod\Sfive$ over two-dimensional (2D) \emph{squares}
of universal frames corresponds to the equality and substitution free fragment of two-variable first-order logic, via a translation that maps propositional variables $\pvar$ to binary predicates $\pvar(x,y)$, the modal boxes $\Box_0$ and $\Box_1$ to the first-order quantifiers $\forall x$ and $\forall y$, and the Boolean connectives to themselves.
In this setting, \emph{equality} between the two first-order variables can be modally `represented' by 
extending the bimodal language with a constant $\delta$,  
interpreted in square frames with universe $W\times W$ as the \emph{diagonal} set
\[
\{\auf x,x\zu : x\in W\}.
\]
The resulting three-modal logic (algebraically, representable 2D 
cylindric algebras \cite{Henkinetal85})
is now closer to the full two-variable fragment
(though $\pvar(y,x)$-like transposition of variables is still not expressible in it).
The generalisation of the modal treatment of full two-variable first-order logic to \emph{products}
of two arbitrary modal logics equipped with a diagonal constant (together
with modal operators `simulating' the substitution and transposition of first-order variables) was suggested in \cite{Segerberg73,Shehtman78}.
The product construction as a general combination method on modal logics was introduced in 
\cite{Gabbay&Shehtman98}, and has been extensively studied ever since (see \cite{gkwz03,Kurucz07} for surveys and references).
Two-dimensional product logics can
 not only be regarded as generalisations of the first-order quantifiers \cite{Kurucz13}, but
 they are also connected to several other 
logical formalisms, such as the one-variable fragment of modal and temporal logics, modal and temporal description logics, and spatio-temporal logics.
At first sight, the diagonal constant can only be meaningfully used
in applications where the domains of the two component frames consist of
objects of similar kinds, or at least overlap. However, as modal
languages cannot distinguish between isomorphic frames, in fact \emph{any}
subset $D$ of a Cartesian product $\Wh \times \Wv$ can be considered as an interpretation
of the diagonal constant, as long as it is both `\emph{horizontally}' and `\emph{vertically}' \emph{unique} in the following sense:
\begin{align}
\label{unique1}
&\forall x\in \Wh,\,\forall y,y'\in \Wv\
\bigl(\auf x,y\zu,\auf x,y'\zu\in D\ \to\ y=y'\bigr),\\
\label{unique2}
&\forall x,x'\in \Wh,\,\forall y\in Wv\
\bigl(\auf x,y\zu,\auf x',y\zu\in D\ \to\  x=x'\bigr).
\end{align}
So, say, 
in the one-variable constant-domain fragment of first-order temporal (or modal) logics, the diagonal constant can be added in order
to single out a set of special `time-stamped' objects of the domain, provided no special object is chosen twice and
at every moment of time (or world along the modal accessibility relation) at most one special object is chosen.

In this paper we study the decision problem of $\delta$-\emph{product logics\/}: arbitrary 2D product logics equipped with a diagonal.
It is well-known that
the presence or absence of equality in the two-variable fragment of first-order logic does not 
influence the {\sc coNExpTime}-completeness of its validity problem
\cite{Scott62,Mortimer75,Graedeletal97}.
 So one might expect that
adding a diagonal to product logics in general is similarly harmless.
The more so that decidable product logics like $\K\mprod\K$ (the bimodal logic of all product frames) remain decidable when one adds
modal operators `simulating' the substitution and transposition of first-order variables \cite{Shehtman11}.
However, we show that adding the diagonal is more dangerous,
and there can be quite a big jump in complexity.
In some cases, the global consequence relation of product logics can be reduced the validity-problem
of the corresponding $\delta$-products (Prop.~\ref{p:globalred}).
We also show (Theorems~\ref{t:kundec}, \ref{t:linundec}) 
that if $L$ is any logic having an infinite rooted frame where each point can be
accessed by at most one step from the root, then both $\K\dprod L$ and $\Kfourt\dprod L$ 
are undecidable (here $\K$ is the unimodal logic of all frames, and $\Kfourt$ is the unimodal
logic of linear orders).
Some notable consequences of these results are:
\begin{enumerate}
 \item[(i)]
 $\K\dprod\Sfive$ is undecidable,
 (while $\K\mprod\Sfive$ is {\sc coNExpTime}-complete \cite{Marx99}, and even the global
 consequence relation of $\K\mprod\Sfive$ is decidable in {\sc co2NExp\-Time} \cite{Wolter99,Schmidt&Tishkovsky02}).
\item[(ii)]
$\Kfourt\dprod\Sfive$ is undecidable (while $\Kfourt\mprod\Sfive$ is decidable in {\sc 2ExpTime}
\cite{Reynolds97}).
 \item[(iii)]
 $\K\dprod\K$ is undecidable
 (while $\K\mprod\K$ is decidable \cite{Gabbay&Shehtman98}, though not in {\sc ElementaryTime}
 \cite{GollerJL12}).
\end{enumerate}
See also Table~\ref{t:results} for some known results on product logics, and how our present results on
$\delta$-products compare with them.
\begin{table}
\begin{center}
\begin{tabular}{l||l|l|l|}
& & {\small global} & \\
&  {\small validity of} & \ {\small consequence of} & {\small\bf validity of} \\
& \qquad\qquad{\small product logic} & \ {\small product logic} & \,{\small\bf $\delta$-product logic}\\[5pt]
\hline\hline
&&&\\[-5pt]
& {\small {\sc coNExpTime}-complete} &  {\small same} & {\small {\sc coNExpTime}-}\\
{\small $\Sfive\mprod\Sfive$} & \ \  {\small \cite{Scott62,Mortimer75,Graedeletal97,Marx99}} & \ \ {\small as validity} & \ \ {\small complete}\\
&  && \ \ {\small \cite{Scott62,Mortimer75,Graedeletal97}}\\[5pt]
\hline
&&&\\[-5pt]
& {\small {\sc coNExpTime}-complete} &  {\small decidable in} & {\small\bf undecidable}\\
{\small $\K\mprod\Sfive$} & \ \ {\small  \cite{Marx99}} & \ {\small {\sc co2NExp\-Time}} &
 \ \ {\small\bf Cor.~\ref{co:kundec}}\\
 && \ {\small \cite{Wolter99,Schmidt&Tishkovsky02}} &\\[5pt]
\hline
&&&\\[-5pt]
& {\small decidable \cite{Gabbay&Shehtman98}} &  & {\small\bf undecidable}\\
{\small $\K\mprod\K$} & {\small not in {\sc ElementaryTime}} & {\small undecidable \cite{Marx99}} & \ \ {\small\bf Cor.~\ref{co:globalred}} \\
& \ {\small \cite{GollerJL12}} &&\\ [5pt]
\hline
&&&\\[-5pt]
& {\small decidable} & {\small same} & {\small\bf undecidable}\\
{\small $\Kfourt\mprod\Sfive$} & \ \ {\small in {\sc 2ExpTime} \cite{Reynolds97}}& \ \ {\small as validity} & \ \ {\small\bf Cor.~\ref{co:linundec}}\\
& {\small {\sc coNExpTime}-hard \cite{Marx99}} &  &\\[5pt]
\hline
&&&\\[-5pt]
 &  {\small decidable} & {\small same}  & \\
{\small $\Kfour\mprod\Sfive$} & \ \ {\small in {\sc coN2ExpTime} \cite{Gabbay&Shehtman98}} & \ \ {\small as validity} & \qquad\quad \framebox{{\bf ?}}\\
& {\small {\sc coNExpTime}-hard \cite{Marx99}} & &\\[5pt]
\hline
&&&\\[-5pt]
& {\small decidable \cite{Wolter99}} & {\small undecidable \cite{Gutierrez-BasultoJ014}} & {\small\bf undecidable}\\
{\small $\Kfour\mprod\K$} &  {\small not in {\sc ElementaryTime}} && \ \  {\small\bf Cor.~\ref{co:globalred}} \\
& \ {\small \cite{GollerJL12}} && \\[5pt]
\hline
&&&\\[-5pt]
{\small $\Kfour\mprod\Kfour$} & {\small undecidable \cite{gkwz05a}} & {\small same} & {\small\bf undecidable} \\
& & \ \ {\small as validity} & \ \  {\small\bf Prop.~\ref{p:cons}} \\[3pt]
\hline
&&&\\[-5pt]
& {\small decidable} & & {\small\bf decidable}\\
{\small $\K\!\mprod\!\Alt(n)$} &  \;{\small in {\sc coNExpTime} ($n>1$)} & {\small undecidable} & {\small {\bf in} {\sc coNExpTime}} \\
& \;{\small in {\sc ExpTime} ($n=1$) \cite{gkwz03}} & & \ \ {\small\bf Thm.~\ref{t:dec}}\\[5pt]
\hline
\end{tabular}
\caption{Product vs. $\delta$-product logics.}\label{t:results}
\end{center}
\end{table}

While all the above $\delta$-product logics are recursively enumerable (Theorem~\ref{t:re}),
we also show that in some cases decidable product logics can turn highly undecidable by adding
a diagonal. For instance, both $\K\dprod\Sfive$ and $\K\dprod\K$ when restricted to finite (but unbounded) product frames
result in non-recursively enumerable logics (Theorem~\ref{t:kundecfin}). Also,
$\Log\auf\omega,<\zu\dprod\Sfive$ is $\Pi_1^1$-hard (Theorem~\ref{t:linundecdisc}).
On the other hand, the unbounded width of the second-component frames 
seems to be essential in obtaining these results. Adding a diagonal to decidable product logics of the form $\K\mprod\Alt(n)$,
$\Sfive\mprod\Alt(n)$, and $\Alt(m)\mprod\Alt(n)$ result in decidable logics, sometimes even with the same
upper bounds that are known for
the products (Theorems~\ref{t:dec} and \ref{t:altnp}) (here $\Alt(n)$ is the unimodal logic of
frames where each point has at most $n$ successors for some $0<n<\omega$).

Our undecidable $\delta$-product logics can also be viewed as new fragments of first-order logic
where adding equality changes a decidable fragment to undecidable. (A well-known such 
fragment is the G\"odel class \cite{Godel33,Goldfarb84}.) In particular, consider the following `2D extension' of
the standard translation \cite{Gabbay&Shehtman99}, from bimodal formulas to three-variable first-order formulas having two free variables $x$ and $y$ and a
built-in binary predicate $\rvar$:
\begin{align*}
\pvar^\dag & :=\  \pvar(x,y),\quad\mbox{for propositional variables $\pvar$},\\
(\neg\phi)& :=\ \neg\phi^\dag\quad\mbox{ and }\quad (\phi\land\psi)^\dag:=\ \phi^\dag\land\psi^\dag,\\
(\Box_0\phi)^\dag & :=\  \forall z\, \bigl(\rvar(x,z)\to \phi^\dag(z/x,y)\bigr),\\
(\Box_1\phi)^\dag & :=\  \forall z\, \bigl(\rvar(y,z)\to \phi^\dag(x,z/y)\bigr).
\end{align*}
It is straightforward to see that, for any bimodal formula $\phi$, $\phi$ is satisfiable in
the (decidable)
 modal product logic $\K\mprod\K$ 
iff $\phi^\dag$ is satisfiable in first-order logic. So the image of ${}^\dag$ is a decidable fragment
of first-order logic that becomes undecidable when equality is added.

Our results show that in many cases the presence of a \emph{single} proposition (the diagonal)
with the `horizontal' and `vertical' uniqueness properties \eqref{unique1}--\eqref{unique2} is
enough to cause undecidability of 2D product logics. If each of the component logics has 
a \emph{difference operator}, then their product can express `horizontal' and `vertical' uniqueness
of \emph{any} proposition. For example, 
this is the case when each component is either the unimodal logic $\Diff$ of all frames of the 
form $\auf W,\ne\zu$, or a logic determined by strict linear orders such as $\Kfourt$ or $\Log\auf\omega,<\zu$.
So our Theorems~\ref{t:linundec} and \ref{t:linundecdisc} can be regarded as generalisations of the
undecidability results of \cite{Reynolds&Z01} on `linear'$\mprod$`linear'-type products, and those
of \cite{Hampson&Kurucz14} on `linear'$\mprod\Diff$-type products.

\paragraph{On the proof methods.}
Even if 2D product structures are always grid-like by definition, there are two issues one needs to deal with in order to encode grid-based complex problems into them:
\begin{itemize}
\item[(i)]
to generate infinity, even when some component structure is not transitive, and
\item[(ii)]
somehow to `access' or `refer to' neighbouring-grid points, even when there is no `next-time' operator in the language, and/or the component structures are transitive or even universal.
\end{itemize}
When both component structures are transitive, then (i) is not a problem.
If in addition component structures of arbitrarily large depths are available, then 
(ii) is usually solved
by `diagonally' encoding the $\omega\times\omega$-grid, and then use reductions of tiling or Turing machine problems \cite{Marx&Reynolds99,Reynolds&Z01,gkwz05a}.
When both components can express the uniqueness of any proposition (like strict linear orders or the difference operator), then it is also possible to make direct use of the grid-like nature of product structures and obtain undecidability by forcing reliable counter machine computations \cite{Hampson&Kurucz14}.
However, $\delta$-product logics of the form $L\dprod\Sfive$
apparently neither can force such computations directly, nor they can diagonally encode the $\omega\times\omega$-grid. Instead,
we prove our lower bound results by a novel application of counter machine problems.
The presence of
a unique diagonal in the models makes it possible to encode both \emph{lossy} and
\emph{\ierror\/} computations, for the \emph{same} sequence of instructions.
We then show (Prop.~\ref{p:approx}) that,
given such a pair of faulty computations, one can actually reconstruct a reliable run from them.
The upper bound results are shown by a straightforward selective filtration.

\medskip
The structure of the paper is as follows.
Section~\ref{dprod} provides all the necessary definitions. 
In Section~\ref{conn}
we establish connections between our logics and other formalisms, and discuss
some consequences of these connections on the decision problem of $\delta$-products.
In Section~\ref{cm} we introduce counter machines, and discuss how reliable counter machine computations can be approximated by faulty (lossy and
\ierror) ones. Then in Sections~\ref{kprod} and \ref{linprod}
we state and prove our undecidability results on $\delta$-products having a $\K$ or a
`linear' component, respectively. The decidability results are proved in Section~\ref{dec}.
Finally, in Section~\ref{disc} we discuss some related open problems.


\section{$\delta$-product logics}\label{dprod}

In what follows we assume that the reader is familiar with the basic notions in modal logic
and its possible world semantics (see \cite{Blackburnetal01,cz} for reference).
Below we
summarise the necessary notions and notation for our 3-modal case only, but we will use them
throughout for the uni- and bimodal cases as well.
We define our \emph{formulas} by the following grammar:
\[
\phi:=\ \pvar\mid\delta\mid\neg\phi\mid\phi\land\psi\mid\Bh\phi\mid\Bv\phi,
\]
where $\pvar$ ranges over an infinite set of propositional variables.
We use the usual abbreviations $\lor$, $\to$, $\leftrightarrow$, 
$\bot:=\pvar\land\neg\pvar$, 
$\D_i:=\neg\B_i\neg$, and also
%
\[
\D_i^+\phi:=\ \phi\lor\D_i\phi,\hspace*{3cm} \B_i^+\phi:=\ \phi\land\B_i\phi,
%
\]
for $i=h,v$. (The subscripts are indicative of the 2D intuition: $h$ for `horizontal' and
$v$ for `vertical'.)

A $\delta$-\emph{frame} is a tuple $\F=\auf W,\Rh,\Rv,D\zu$ where $R_i$ are binary relations on the non-empty set $W$, and $D$ is a subset of $W$. We call $\F$ \emph{rooted} if there is some $w$
such that $wR^\ast v$ for all $v\in W$, for the reflexive and transitive closure $R^\ast$ of $R:=\Rh\cup\Rv$.
A \emph{model based on} $\F$ is a pair $\M=\auf\F,\nu\zu$, where $\nu$ is a
function mapping propositional variables to subsets of $W$. The \emph{truth relation}
$\M,w\models\phi$ is defined, for all $w\in W$, by induction on $\phi$ as usual. In particular,
\[
\M,w\models\delta\quad\mbox{ iff }\quad w\in D.
\]
We say that $\phi$ is \emph{satisfied in} $\M$, if there is $w\in W$ with $\M,w\models\phi$.
We write $\M\models\phi$, if $\M,w\models\phi$ for every $w\in W$. 
Given a set $L$ of formulas, we write $\M\models L$ if
$\M\models\phi$ for every $\phi$ in $L$.
Given formulas $\phi$ and $\psi$,
we write $\phi\models^\ast_L\psi$ iff $\M\models\psi$ for every model 
$\M$ such that $\M\models L\cup\{\phi\}$.

We say that $\phi$ is \emph{valid in\/} $\F$, 
if $\M\models\phi$ for every model $\M$ based on $\F$.
If every formula
in a set $L$ is valid in $\F$, then we say that $\F$ is a \emph{frame for}  $L$.
We let $\Fr L$ denote the class of all frames for $L$.
For any class $\C$ of $\delta$-frames, we let
\[
\Log\,\C:=\{\phi :\phi\mbox{ is a formula valid in every member of }\C\}.
\]
We call a set $L$ of formulas a \emph{Kripke complete logic} if $L=\Log\,\C$ for some
class $\C$. A Kripke complete logic $L$ such that for all formulas $\phi$ and $\psi$,
$\phi\models^\ast_L\psi$ iff $\M\models\phi$ implies $\M\models\psi$ for every model $\M$
based on a frame for $L$, is called \emph{globally Kripke complete}.

We are interested in some special `two-dimensional' $\delta$-frames.
Given unimodal Kripke frames $\F_h=\auf \Wh,\Rh\zu$ and $\F_v=\auf \Wv,\Rv\zu$,
their \emph{product} is the bimodal frame
\[
\F_h\mprod\F_v:=\auf\Wh\times\Wv,\overline{R}_h,\overline{R}_v\zu,
\]
where $\Wh\times\Wv$ is the Cartesian product of sets $\Wh$ and $\Wv$
and the binary relations $\overline{R}_h$ and $\overline{R}_v$ are defined by taking, 
for all $x,x'\in \Wh$, $y,y'\in \Wv$,
\begin{gather*}
\auf x,y\zu \overline{R}_h \auf x',y'\zu\quad \text{ iff }\quad x\Rh x'
\mbox{ and }y=y',\\
\auf x,y\zu \overline{R}_v \auf x',y'\zu\quad \text{ iff }\quad y\Rv y'
\mbox{ and }x=x'.
\end{gather*}
The $\delta$-\emph{product} of $\F_h$ and $\F_v$ is
the $\delta$-frame
\[
\F_h\dprod\F_v:=\auf\Wh\times\Wv,\overline{R}_h,\overline{R}_v,\id\zu,
\]
where $\auf\Wh\times\Wv,\overline{R}_h,\overline{R}_v\zu=\F_h\mprod\F_v$ and
\[
\id=\{\auf x,x\zu: x\in \Wh\cap\Wv\}.
\]
For classes $\C_h$ and $\C_v$ of unimodal frames, we define
\[
\C_h\dprod\C_v=\{\F_h\dprod \F_v : \F_i\in\C_i,\mbox{ for $i=h,v$}\}.
\]
Now, for $i=h,v$, let $L_i$ be a Kripke complete unimodal logic in the language with $\Diamond_i$.
The $\delta$-\emph{product} of $L_h$ and $L_v$ is defined as 
\[
L_h\dprod L_v: =\Log\,(\Fr L_h\dprod\Fr L_v).
\]
As a generalisation of the modal approximation of two-variable first-order logic, it might
be more `faithful' to consider
\begin{multline*}
L_h\sqprod L_v: =\{\phi : \phi \mbox{ is valid in $\F_h\dprod\F_v$, for some rooted $\mathfrak F_i=\auf W_i,R_i\zu$}\\
\mbox{in $\Fr L_i,\ i=h,v$, such that  $\Wh=\Wv$}\},
\end{multline*}
or, in case $L_h=L_v=L$, even
\[
L\sqfprod L: =\{\phi : \phi \mbox{ is valid in $\F\dprod\F$, for some rooted $\F\in\Fr L$}\}.
\]
Then $\Sfive\sqprod\Sfive=\Sfive\sqfprod\Sfive$ indeed corresponds to the transposition-free fragment of
two-variable first-order logic. However, $\Sfive\dprod\Sfive$ is properly contained
in $\Sfive\sqprod\Sfive$: for instance $\Dh\delta$ belongs to the latter but not to the former.
In general, clearly we always have 
$
L_h\dprod L_v\subseteq L_h\sqprod L_v
$
and
$
L\sqprod L\subseteq L\sqfprod L,
$
whenever $L_h=L_v=L$. Also,
it is not hard to give examples when the three definitions result in three different logics.
Throughout, we formulate all our results for the $L_h\dprod L_v$ cases only, but each and every of
them holds for the corresponding $L_h\sqprod L_v$ as well (and also for $L\sqfprod L$ when
it is meaningful to consider the same $L$ as
both components).

Given a set $L$ of formulas, we are interested in the following decision problems:

\medskip
\noindent
\underline{$L$-{\sc validity:}}\ \
Given a formula $\phi$, does it belong to $L$?

\medskip
\noindent
If this problem is (un)decidable, we simply say that `$L$ is (un)decidable'.
$L$-validity is the `dual' of 

\medskip
\noindent
\underline{$L$-{\sc satisfiability:}}\ \ 
\parbox[t]{12.2cm}{Given a formula $\phi$, is there a model $\M$ such that $\M\models L$ and $\phi$ is satisfied in $\M$?}

\medskip
\noindent
Clearly, if $L=\Log\,\C$ then $L$-satisfiability is the same as

\medskip
\noindent
\underline{$\C$-{\sc satisfiability:}}\ \ 
\parbox[t]{12.2cm}{Given a formula $\phi$, is there a frame $\F\in\C$ such that $\phi$ is satisfied in a model based on $\F$?}

\medskip
\noindent
We also consider

\medskip
\noindent
\underline{{\sc Global} $L$-{\sc consequence:}}\ \ 
Given formulas $\phi$ and $\psi$, does $\phi\models^\ast_L\psi$ hold?

\paragraph{Notation.}
Our notation is mostly standard. In particular, we denote by $R^+$ the \emph{reflexive closure} of a binary relation $R$. The cardinality of a set $X$ is denoted by $|X|$.
For each natural number $k<\omega$, we also consider $k$ as the finite ordinal
$k=\{0,\dots,k-1\}$.


\section{Decidability of $\delta$-products: what to expect?}\label{conn}

To begin with, the following proposition 
is straightforward from the definitions:

\begin{proposition}\label{p:cons}
$L_h\dprod L_v$ is always a conservative
extension of $L_h\mprod L_v$. 
\end{proposition}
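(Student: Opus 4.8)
The plan is to unwind the definition of a conservative extension. Since $L_h\mprod L_v=\Log\,(\Fr L_h\mprod\Fr L_v)$ consists precisely of the $\delta$-free bimodal formulas (built from $\pvar$, $\neg$, $\land$, $\Bh$, $\Bv$) that are valid on every product frame $\F_h\mprod\F_v$ with $\F_i\in\Fr L_i$, while $L_h\dprod L_v$ consists of the three-modal formulas valid on every $\delta$-product frame $\F_h\dprod\F_v$ with $\F_i\in\Fr L_i$, the statement amounts to showing that for every $\delta$-free $\phi$ one has $\phi\in L_h\dprod L_v$ iff $\phi\in L_h\mprod L_v$. Everything will follow once I establish that the truth of a $\delta$-free formula is completely insensitive to the diagonal set $\id$.

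The key observation is that the map sending a $\delta$-product frame $\F_h\dprod\F_v=\auf\Wh\times\Wv,\overline{R}_h,\overline{R}_v,\id\zu$ to its underlying product frame $\F_h\mprod\F_v=\auf\Wh\times\Wv,\overline{R}_h,\overline{R}_v\zu$ is a bijection between the classes $\Fr L_h\dprod\Fr L_v$ and $\Fr L_h\mprod\Fr L_v$: the two frames share the same universe and the same accessibility relations $\overline{R}_h,\overline{R}_v$, the $\delta$-product merely carrying in addition the fixed set $\id$. Hence every model $\M=\auf\F_h\dprod\F_v,\nu\zu$ and the model $\M'=\auf\F_h\mprod\F_v,\nu\zu$ obtained by forgetting $\id$ are based on the same points, relations, and valuation $\nu$.

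I would then prove, by a routine induction on the structure of a $\delta$-free formula $\phi$, that $\M,w\models\phi$ iff $\M',w\models\phi$ for every $w\in\Wh\times\Wv$. The base case for propositional variables and the Boolean cases are immediate, since $\nu$ is unchanged; the modal cases for $\Bh$ and $\Bv$ refer only to $\overline{R}_h,\overline{R}_v$, which are identical in $\M$ and $\M'$; and there is no clause for $\delta$ precisely because $\phi$ is $\delta$-free. Consequently a $\delta$-free $\phi$ is valid on $\F_h\dprod\F_v$ exactly when it is valid on $\F_h\mprod\F_v$.

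Combining this equivalence with the bijection between the two frame classes yields both directions simultaneously: a $\delta$-free $\phi$ is valid on every member of $\Fr L_h\dprod\Fr L_v$ iff it is valid on every member of $\Fr L_h\mprod\Fr L_v$, that is, $\phi\in L_h\dprod L_v$ iff $\phi\in L_h\mprod L_v$. There is no genuine obstacle here, as the proposition is purely definitional; the only point deserving explicit (if minimal) care is to record that the validity of $\delta$-free formulas does not depend on the interpretation of the constant $\delta$.
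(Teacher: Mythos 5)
Your argument is correct and matches the paper, which offers no explicit proof but declares the proposition ``straightforward from the definitions''; your write-up simply fills in the expected details (truth of $\delta$-free formulas is insensitive to the diagonal, and the forgetful map from $\delta$-product frames onto product frames is surjective). Nothing further is needed.
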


So it follows from the undecidability results of \cite{gkwz05a} on the
corresponding product logics that  $L_h\dprod L_v$ is undecidable, whenever both $L_h$ and $L_v$
have only \emph{transitive} frames and have frames of 
\emph{arbitrarily large depths\/}. For example, $\Kfour\dprod\Kfour$ is undecidable, where
$\Kfour$ is the unimodal logic of all transitive frames.

Next, we establish connections between  the global consequence
relation of some product logics and the corresponding $\delta$-products. To begin with,
we introduce an operation on frames that we call \emph{disjoint union with a spy-point\/}.
Given unimodal frames $\F_i=\auf W_i,R_i\zu$, $i\in I$, for some index set $I$, and a fresh point $r$, we let
\[
\bigcup^r_{i\in I}\F_i:=\auf W,R\zu,
\]
where
\begin{align*}
& W=\{r\}\cup \{\auf w,i\zu : i\in I,\ w\in W_i\},\quad\mbox{and}\quad\\
& R=\bigl\{\bigl\auf r,\auf w,i\zu\bigr\zu : w\in W_i,\ i\in I\bigr\}
\cup
\bigl\{\bigl\auf\auf w,i\zu,\auf w', i\zu\bigr\zu : w,w'\in W_i,\ wR_iw',\ i\in I\bigr\}.
\end{align*}
Note that the spy-point technique is well-known in hybrid logic \cite{Blackburn&Seligman95}.

\begin{proposition}\label{p:globalred}
If $L_h$ and $L_v$ are Kripke complete logics such that both 
$\Fr L_h$ and $\Fr L_v$ are closed under the `disjoint union with a spy-point' operation
and $L_h\mprod L_v$ is globally Kripke complete,
then the global $L_h\mprod L_v$-consequence is reducible to $L_h\dprod L_v$-validity.
\end{proposition}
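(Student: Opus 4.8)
\medskip
\noindent\textbf{Proof plan.}
The plan is to build a computable translation $(\phi,\psi)\mapsto\chi$ with $\phi\models^\ast_{L_h\mprod L_v}\psi$ iff $\chi\in L_h\dprod L_v$, mimicking the standard reduction of a global consequence $\phi\models^\ast\psi$ to the validity of $[\forall]\phi\to\psi$, where the role of the universal modality $[\forall]$ is played by the composite $\Bh\Bv$ evaluated at a \emph{spy-point} of a $\delta$-product that is singled out by the diagonal. First I would fix a fresh propositional variable $\mathsf{g}$ (to demarcate the intended `grid'), let $\phi^\mathsf{g},\psi^\mathsf{g}$ be the relativisations obtained by recursively replacing each $\Bh\theta$ by $\Bh(\mathsf{g}\to\theta^\mathsf{g})$ and each $\Bv\theta$ by $\Bv(\mathsf{g}\to\theta^\mathsf{g})$, and set
\[
\chi:=\bigl(\delta\,\land\,\Bh\Bv\mathsf{g}\,\land\,\mathsf{Cl}\,\land\,\Bh\Bv(\mathsf{g}\to\phi^\mathsf{g})\bigr)\ \to\ \Bh\Bv(\mathsf{g}\to\psi^\mathsf{g}),
\]
where $\mathsf{Cl}$ is a `closure clause' whose intended effect is to force the set marked by $\mathsf{g}$ to be (isomorphic to) a \emph{generated} subframe of a product $\G_h\dprod\G_v$ with $\G_i\in\Fr L_i$.

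For the direction ``$\phi\not\models^\ast_{L_h\mprod L_v}\psi\Rightarrow\chi\notin L_h\dprod L_v$'' I would start, using global Kripke completeness, from a product model $\M$ on $\F_h\mprod\F_v$ with $\F_i\in\Fr L_i$, $\M\models\phi$ and $\M\not\models\psi$; adjoin one common spy-point $r$ to both components to form $\F_i':=\bigcup^{r}\F_i\in\Fr L_i$ (this is where closure under disjoint union with a spy-point is used), and pass to $\F_h'\dprod\F_v'$, in which $\langle r,r\rangle$ lies on the diagonal. Since in $\bigcup^{r}\F_i$ the point $r$ accesses \emph{every} point of $\F_i$ in a single step, from $\langle r,r\rangle$ the composite $\Bh\Bv$ reaches \emph{exactly} the whole grid $\Wh\times\Wv$, and the grid is $\overline{R}_h,\overline{R}_v$-closed (spy-points have no predecessors). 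Copying the valuation of $\M$ onto the grid and letting $\mathsf{g}$ be true precisely there then makes $\Bh\Bv\mathsf{g}$, $\mathsf{Cl}$ and $\Bh\Bv(\mathsf{g}\to\phi^\mathsf{g})$ hold while $\Bh\Bv(\mathsf{g}\to\psi^\mathsf{g})$ fails at $\langle r,r\rangle$, so $\chi$ is refuted.

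For the converse ``$\chi\notin L_h\dprod L_v\Rightarrow\phi\not\models^\ast_{L_h\mprod L_v}\psi$'' I would take a $\delta$-product model refuting $\chi$ at a diagonal point $p=\langle x,x\rangle$ and pass to the subframe generated by $p$; being a generated subframe of a product it is again a product $\G_h\dprod\G_v$ with $\G_i\in\Fr L_i$, because validity is inherited by generated subframes. The point of $\mathsf{Cl}$ is that on this model the region marked by $\mathsf{g}$ is a closed sub-product, so the relativisation gives that on the induced submodel---a model on a frame for $L_h\mprod L_v$---the formula $\phi$ holds \emph{everywhere} while $\psi$ fails somewhere; global Kripke completeness then yields $\phi\not\models^\ast_{L_h\mprod L_v}\psi$.

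The main obstacle is the design and verification of $\mathsf{Cl}$, i.e.\ the faithful simulation of the universal modality. The one-step composite $\Bh\Bv$ behaves as a universal modality over the grid only at a genuine spy-point, which sees every grid point \emph{directly}; at an arbitrary diagonal point $\langle x,x\rangle$ it reaches only the immediate rectangle $\{a:xR_h a\}\times\{b:xR_v b\}$, which for a non-transitive component need be neither $\overline{R}_h,\overline{R}_v$-closed nor a sub-product of frames for $L_h$ and $L_v$. Thus $\mathsf{Cl}$ must be simultaneously strong enough that any refutation of $\chi$ pins down a closed generated sub-product carrying $\phi$ \emph{globally} (so that global Kripke completeness applies), and weak enough not to be spuriously violated at non-spy diagonal points when $\phi\models^\ast_{L_h\mprod L_v}\psi$ does hold. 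Reconciling the depth-one reach of $\Bh\Bv$ with this need to speak about the entire, possibly deep, generated region---using the diagonal together with closure under disjoint union with a spy-point and preservation of validity under generated subframes---is the crux of the argument.
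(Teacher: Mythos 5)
Your plan has a genuine gap, and it sits exactly where you say the crux is: the closure clause $\mathsf{Cl}$ is never defined, and without it neither direction of your reduction can be verified. In the direction ``$\chi\notin L_h\dprod L_v\Rightarrow\phi\not\models^\ast_{L_h\mprod L_v}\psi$'' you need the antecedent of $\chi$, when true at an \emph{arbitrary} diagonal point $\auf x,x\zu$, to hand you a model of $L_h\mprod L_v$ on which $\phi$ holds \emph{globally}; but $\Bh\Bv$ only reaches the depth-one rectangle around $\auf x,x\zu$, and no amount of relativising to $\mathsf{g}$ changes that unless some formula forces $x$ to see, in one step, every point of the generated subframe. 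That forcing is the entire content of the proposition, and it is the one thing your proposal does not supply. (Your other steps are essentially fine: adding a spy-point to each component in the soundness direction is a legitimate use of the closure hypothesis, and the generated subframe of a product frame at a single point is indeed the product of the component generated subframes.)

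The paper's solution to precisely this obstacle is to dispense with $\mathsf{g}$, $\mathsf{Cl}$ and relativisation altogether and use the formula
\[
\univf:=\ \Bh\Dv\delta\land \Bh\Bh\Dv\delta\land \Bv\Dh\delta\land \Bv\Bv\Dh\delta,
\]
reducing $\phi\models^\ast_{L_h\mprod L_v}\psi$ to the validity of $(\univf\land\Bh\Bv\phi)\to\Bh\Bv\psi$. The key lemma is an induction on path length showing that if $\univf$ holds at $\auf r_h,r_v\zu$ then $r_h$ is a direct $\Rh$-predecessor of every point in the subframe of $\F_h$ generated by any $\Rh$-successor of $r_h$ (and symmetrically for $r_v$): if $r_h\Rh w$ and $w\Rh w'$, then $\Bh\Bh\Dv\delta$ puts a diagonal point in the column of $w'$, whose uniqueness forces $w'\in\Wv$ with $r_v\Rv w'$, and then $\Bv\Dh\delta$ bounces this back to give $r_h\Rh w'$. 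This use of the diagonal to shuttle reachability between the two components is the idea your proposal is missing; with it, $\Bh\Bv$ at the root genuinely behaves as a universal modality over the relevant generated sub-product, so $\Bh\Bv\phi$ directly yields $\phi$ globally and global Kripke completeness applies. If you want to salvage your scheme, $\univf$ is in effect the $\mathsf{Cl}$ you were looking for, and once you have it the variable $\mathsf{g}$ and the relativisation become unnecessary.
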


\begin{proof}
We show that for all bimodal ($\delta$-free) formulas $\phi$, $\psi$,
\[
\phi\models^\ast_{L_h\mprod L_v}\psi\qquad\mbox{ iff }\qquad 
\bigl((\univf\land\Bh\Bv\phi)\to\Bh\Bv\psi \bigr)
\in L_h\dprod L_v,
\]
where 
\[
\univf:=\ \Bh\Dv\delta\land \Bh\Bh\Dv\delta\land \Bv\Dh\delta\land \Bv\Bv\Dh\delta.
\]

$\Rightarrow$:
Suppose that 
$\M,\auf r_h,r_v\zu\models \univf\land\Bh\Bv\phi\land\Dh\Dv\neg\psi$ in a model $\M$
that is based on $\F_h\dprod\F_v$, for some frames $\F_i=\auf W_i,R_i\zu$ in $\Fr L_i$, $i=h,v$.
Then there exist $x_h$, $x_v$ such that $r_h\Rh x_h$, $r_v\Rv x_v$ and 
$\M,\auf x_h,x_v\zu\models \neg\psi$.
For $i=h,v$, let $\G_i$ be the subframe of $\F_i$ generated by point $x_i$, 
and let $\N$ be the
restriction of $\M$ to $\G_h\mprod\G_v$. Then 
\begin{equation}\label{models}
\N\models L_h\mprod L_v
\quad\mbox{and}\quad
\N,\auf x_h,x_v\zu\models \neg\psi.
\end{equation}
We claim that
\begin{equation}\label{univ}
\mbox{$r_iR_i w$, for all $w$ in $\G_i$ and $i=h,v$}.
\end{equation}
Indeed, let $i=h$. We prove \eqref{univ} by induction on the smallest number $n$ of $\Rh$-steps needed
to access $w$ from $x_h$. If $n=0$ then we have $r_h\Rh x_h$. Now
suppose inductively that  \eqref{univ} holds for all $w$ in $\G_h$ that are accessible in $\leq n$ 
$\Rh$-steps
from $x_h$ for some $n<\omega$, and let $w'$ be accessible in $n+1$ $\Rh$-steps. Then there
is $w$ in $\G_h$ that is accessible in $n$ steps and $w\Rh w'$. Thus $r_h \Rh w$ by the IH, and so
$\M,\auf w',r_v\zu\models\Dv\delta$ by $\univf$. Therefore, we have $w'\in \Wv$ and $r_v\Rv w'$.
Then $\M,\auf r_h,w'\zu\models\Dh\delta$  again by $\univf$,
and so $r_h \Rh w'$ as required. The $i=v$ case is similar.

Now it follows from $\M,\auf r_h,r_v\zu\models\Bh\Bv\phi$ and \eqref{univ} that
$\N\models\phi$. Therefore, $\phi\not\models^\ast_{L_h\mprod L_v}\psi$ by \eqref{models}.

\smallskip
$\Leftarrow$:
Suppose that $\M\models\phi$ and $\M,w\models\neg\psi$ in some model $\M$ with
$\M\models L_h\mprod L_v$. As $L_h\mprod L_v$ is globally Kripke complete, we may
assume that 
$\M=\auf\F_h\mprod\F_v,\mu\zu$ for some frames $\F_i=\auf W_i,R_i\zu$ in $\Fr L_i$, $i=i,h$.
Let $\F_h^\alpha$, $\alpha<|\Wv|$, be $|\Wv|$-many copies of $\F_h$,
and $\F_v^\beta$, $\beta<|\Wh|$, be $|\Wh|$-many copies of $\F_v$.
Take some fresh point $r$ and define
\[
\G_h=\auf U_h,S_h\zu:=\bigcup^r_{\alpha<|\Wv|}\F_h^\alpha\qquad\mbox{and}\qquad
\G_v=\auf U_v,S_v\zu:=\bigcup^r_{\beta<|\Wh|}\F_v^\beta.
\]
Then by our assumption, $\G_i$ is a frame for $L_i$, for $i=h,v$.
Define a model $\N:=\auf\G_h\dprod\G_v,\nu\zu$ by taking, for all propositional variables $\pvar$,
\[
\nu(\pvar):=\bigl\{\bigl\auf\auf x,\alpha\zu,\auf y,\beta\zu\bigr\zu : \auf x,y\zu\in\mu(\pvar)\bigr\}.
\]
Then $\N,\auf r,r\zu\models \Bh\Bv\phi\land\Dh\Dv\neg\psi$.
As $|U_h|=|U_v|$ and $\Fr L_i$ is closed under isomorphic copies for $i=h,v$, we can actually assume that $U_h=U_v$, and so $\N,\auf r,r\zu\models \univf$.
\end{proof}

\begin{corollary}\label{co:globalred}
$\K\dprod\K$  and $\K\dprod\Kfour$ are both undecidable.
\end{corollary}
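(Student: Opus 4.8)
The plan is to derive both undecidability claims from Proposition~\ref{p:globalred}, by choosing the component logics so that (i) each component's frame class is closed under the disjoint-union-with-a-spy-point operation, (ii) the product logic is globally Kripke complete, and (iii) the global consequence of the product logic is already known to be undecidable. Under (i)--(ii), Proposition~\ref{p:globalred} reduces the global $L_h\mprod L_v$-consequence to $L_h\dprod L_v$-validity, so (iii) transfers undecidability to the $\delta$-product. Thus the work is entirely in verifying the hypotheses for the two instances and in locating the cited sources of undecidable global consequence.

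For $\K\dprod\K$ I would take $L_h=L_v=\K$. Here $\Fr\K$ is the class of all frames, so it is trivially closed under $\bigcup^r$. The product $\K\mprod\K$ is globally Kripke complete (the standard fact for this canonical product), so both structural hypotheses of Proposition~\ref{p:globalred} hold. Since the global $\K\mprod\K$-consequence is undecidable \cite{Marx99}, the reduction immediately yields that $\K\dprod\K$-validity is undecidable.

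For $\K\dprod\Kfour$ I would take $L_h=\K$, $L_v=\Kfour$. Again $\Fr\K$ is trivially closed under $\bigcup^r$; for $\Fr\Kfour$ I would check that the spy-point disjoint union of transitive frames is transitive. In $\bigcup^r_{i\in I}\F_i$ the only edges run from $r$ to every non-root point and, inside each component, along the copies of the $R_i$. Since there are no edges into $r$ and no edges between distinct components, for any composable pair $aRb$, $bRc$ either $a=r$ --- and then $r$ already sees $c$ --- or $a$ lies in some component $\F_i$, which forces $b$ and $c$ into that same component, where $aRc$ follows from transitivity of $R_i$; hence the union is transitive. Granting that $\K\mprod\Kfour$ is globally Kripke complete, Proposition~\ref{p:globalred} reduces the global $\K\mprod\Kfour$-consequence to $\K\dprod\Kfour$-validity. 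Finally, the global $\Kfour\mprod\K$-consequence is undecidable \cite{Gutierrez-BasultoJ014}, and since product frames are symmetric under swapping the two coordinates --- which merely interchanges $\Bh$ and $\Bv$ and so maps global $\K\mprod\Kfour$-consequence bijectively onto global $\Kfour\mprod\K$-consequence --- the global $\K\mprod\Kfour$-consequence is undecidable as well; therefore $\K\dprod\Kfour$ is undecidable.

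The routine part is the frame-closure check for $\Kfour$. The genuinely delicate ingredient, and the one I would be most careful to pin down, is the global Kripke completeness of $\K\mprod\K$ and $\K\mprod\Kfour$: this is exactly the hypothesis of Proposition~\ref{p:globalred} invoked in its `$\Leftarrow$' direction to replace an arbitrary global model of the product by one based on an actual product frame, and it does not follow formally from ordinary (local) Kripke completeness. I would justify it from the canonicity of the component logics $\K$ and $\Kfour$ together with the known completeness theory for products of canonical unimodal logics, using the coordinate-swap symmetry once more to transfer completeness of $\Kfour\mprod\K$ to $\K\mprod\Kfour$ if that is more convenient.
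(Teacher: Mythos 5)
Your proposal is correct and takes essentially the same route as the paper: both cases follow from Proposition~\ref{p:globalred} after verifying closure of the component frame classes under the spy-point disjoint union and global Kripke completeness of the products (which the paper sources from \cite[Thm.5.12]{gkwz03}), combined with the known undecidability of the relevant global product consequence. The only cosmetic difference is that for $\K\dprod\Kfour$ the paper obtains the undecidability of global $\K\mprod\Kfour$-consequence by combining the lifted reduction of \cite{Gutierrez-BasultoJ014} with the undecidability of $\Kfour\mprod\Kfour$ from \cite{gkwz05a}, whereas you cite that consequence relation's undecidability directly (plus a harmless coordinate-swap argument), which amounts to the same fact.
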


\begin{proof}
It is not hard to check that
the 2D product logics $\K\mprod\K$ and \mbox{$\K\mprod\Kfour$} satisfy the requirements in 
Prop.~\ref{p:globalred} (cf.\ \cite[Thm.5.12]{gkwz03} for global Kripke completeness).
A  reduction of, say, the $\omega\times\omega$-tiling problem \cite{Berger66} shows
that global $\K\mprod\K$-consequence is undecidable \cite{Marx99},
and so the undecidability of $\K\dprod\K$ follows by Prop.~\ref{p:globalred}.
It is shown in \cite{Gutierrez-BasultoJ014} that the reduction of $\Kfour$ to global $\K$-consequence \cite{Tobies01}
can be `lifted' to the product level, and so $\Kfour\mprod\Kfour$ is reducible to global 
$\K\mprod\Kfour$-consequence. Therefore, the latter is undecidable \cite{gkwz05a}, and so the 
undecidability of $\K\dprod\Kfour$ follows by Prop.~\ref{p:globalred}.
\end{proof}

Note that we can also make Prop.~\ref{p:globalred} work for logics having only \emph{reflexive}
frames by making the `spy-point' reflexive, and using a slightly different `translation':
\begin{multline*}
\phi\models^\ast_{L_h\mprod L_v}\psi\qquad\mbox{ iff }\\
\bigl((\univf\land\Bh\pvar\land\Bv\pvar\land\Bh\Bv(\neg\pvar\to\phi)\to\Bh\Bv(\neg\pvar\to\psi) \bigr)
\in L_h\dprod L_v,
\end{multline*}
where $\pvar$ is a fresh propositional variable.

However, logics having only symmetric frames (like $\Sfive$), or having only frames with bounded width (like $\Kfourt$ or $\Alt(n)$) are not closed
under the `disjoint union with a spy-point' operation, and so Prop.~\ref{p:globalred} does
not apply to their products.
It turns out that in some of these cases such a reduction is either not useful in establishing
undecidability of $\delta$-products, or does not even exist.
While global $\K\mprod\Sfive$-consequence is reducible to \mbox{${\bf PDL}\mprod\Sfive$}-validity%
\footnote{Here ${\bf PDL}$ denotes Propositional Dynamic Logic.}%
, and so decidable in {\sc co2NExpTime} \cite{Wolter99,Schmidt&Tishkovsky02},
$\K\dprod\Sfive$ is shown to be undecidable in Theorem~\ref{t:kundec} below.
While $\K\dprod\Alt(n)$ is decidable by Theorem~\ref{t:dec} below,
the undecidability of global $\K\mprod\Alt(n)$-consequence can again be shown by a 
straightforward reduction
of the $\omega\times\omega$-tiling problem.

Finally,
the following general result is a straightforward generalisation of the similar theorem of
 \cite{Gabbay&Shehtman98} on product logics. 
It is an easy consequence of the recursive enumerability of the consequence
relation of (many-sorted) first-order logic:

\begin{theorem}\label{t:re}
If $L_h$ and $L_v$ are Kripke complete logics such that both 
$\Fr L_h$ and $\Fr L_v$ 
are recursively first-order definable in the language having a binary predicate symbol, then 
$L_h\dprod L_v$ is recursively enumerable.
\end{theorem}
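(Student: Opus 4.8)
The plan is to reduce $L_h\dprod L_v$-validity to the consequence relation of (two-sorted) first-order logic, which is r.e.\ by the completeness of a recursive proof system. I would work with a two-sorted first-order language: a sort $h$ carrying a binary predicate $\Rh$, a sort $v$ carrying a binary predicate $\Rv$, a cross-sort binary predicate $D(x,y)$ (with $x$ of sort $h$ and $y$ of sort $v$) standing for the diagonal, and, for each propositional variable $\pvar$, a cross-sort binary predicate $\overline{\pvar}(x,y)$ standing for the valuation. A two-sorted structure $\mathcal A$ for this language thus encodes component frames $\F_h=\auf\Wh,\Rh\zu$, $\F_v=\auf\Wv,\Rv\zu$, a set $D\subseteq\Wh\times\Wv$, and a valuation on $\F_h\dprod\F_v$. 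I would then define the standard translation $\phi\mapsto\phi^t(x,y)$ by $\pvar^t:=\overline{\pvar}(x,y)$, $\delta^t:=D(x,y)$, commuting with the Booleans, $(\Bh\phi)^t:=\forall x'\,(\Rh(x,x')\to\phi^t(x'/x,y))$ and $(\Bv\phi)^t:=\forall y'\,(\Rv(y,y')\to\phi^t(x,y'/y))$, and verify the routine truth lemma: for the model $\M$ encoded by $\mathcal A$, one has $\M,\auf x,y\zu\models\phi$ iff $\mathcal A\models\phi^t[x,y]$. Here the translation of $\Bh$ and $\Bv$ automatically captures that $\overline{\Rh}$ moves only the $h$-coordinate and $\overline{\Rv}$ only the $v$-coordinate.

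The conceptual crux — and the step I expect to need the most care — is matching the diagonal. The frames in $\Fr L_h\dprod\Fr L_v$ carry the specific diagonal $\id=\{\auf x,x\zu:x\in\Wh\cap\Wv\}$, which is not directly first-order expressible across two disjoint sorts. I would circumvent this using the isomorphism-invariance of modal truth, as indicated in the Introduction: $\id$ is a partial bijection between $\Wh$ and $\Wv$, and conversely any $D\subseteq\Wh\times\Wv$ satisfying the horizontal and vertical uniqueness conditions \eqref{unique1}--\eqref{unique2} can be turned into such an identity diagonal by relabelling the $\Wv$-points occurring in $D$ to match their $\Wh$-partners. Since $\Fr L_v$ is closed under isomorphism, this relabelling stays inside $\Fr L_v$, so $\Fr L_h\dprod\Fr L_v$ and the isomorphism-closed class of all $\F_h\dprod\F_v$ equipped with an arbitrary uniqueness-satisfying $D$ determine the same logic. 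The latter class is first-order axiomatizable: let $\mathrm{Uniq}$ be the two sentences \eqref{unique1}--\eqref{unique2} written with the symbol $D$.

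It then remains to assemble the theory and invoke first-order completeness. Let $\Phi_h$ and $\Phi_v$ be the recursive first-order definitions of $\Fr L_h$ and $\Fr L_v$, read respectively on sort $h$ (with $\Rh$) and on sort $v$ (with $\Rv$), and put $T:=\Phi_h\cup\Phi_v\cup\mathrm{Uniq}$. Unwinding the truth lemma together with the diagonal reduction, $\phi\in L_h\dprod L_v$ holds iff every model on a frame in the class makes $\phi$ true everywhere, iff every two-sorted $\mathcal A$ with $\mathcal A\models T$ satisfies $\forall x\,\forall y\,\phi^t(x,y)$ — where, since $T$ does not constrain the predicates $\overline{\pvar}$, letting $\mathcal A$ range over all models of $T$ amounts to ranging over all valuations — iff
\[
T\models\forall x\,\forall y\,\phi^t(x,y).
\]
Here $T$ is a recursive set of sentences, as $\Phi_h,\Phi_v$ are recursively given and $\mathrm{Uniq}$ is fixed and finite, and $\phi\mapsto\forall x\,\forall y\,\phi^t(x,y)$ is computable. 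Since the consequence relation of many-sorted first-order logic is r.e.\ (e.g.\ by the usual relativisation to single-sorted logic followed by enumeration of proofs), the set of all $\phi$ satisfying the displayed condition is r.e., and hence so is $L_h\dprod L_v$.
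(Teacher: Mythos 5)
Your proposal is correct and follows exactly the route the paper intends: the paper states Theorem~\ref{t:re} without proof, remarking only that it is ``an easy consequence of the recursive enumerability of the consequence relation of (many-sorted) first-order logic'', and your two-sorted standard translation plus the recursive theory $\Phi_h\cup\Phi_v\cup\mathrm{Uniq}$ is precisely that argument spelled out. Your treatment of the diagonal --- replacing the identity diagonal by an arbitrary relation satisfying \eqref{unique1}--\eqref{unique2} and justifying this via the relabelling/isomorphism-invariance observation already made in the Introduction --- is the one nontrivial point, and you handle it correctly.
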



\section{Reliable counter machines and faulty approximations}\label{cm}

A \emph{Minsky} \cite{Minsky67} 
or  \emph{counter machine} $M$ is described by a finite set $Q$ of states, 
an initial state $\qini\in Q$,
a set $H\subseteq Q$ of terminal states, 
a finite set $C=\{c_0,\dots,c_{N-1}\}$ of counters with $N>1$,
a finite nonempty set  $I_q\subseteq \textit{Op}_C\times Q$ of instructions, for each $q\in Q-H$, where
each operation in $\textit{Op}_C$ is one of the following forms, for some $i<N$:
\begin{itemize}
\item
$\finci$ (\emph{increment counter} $c_i$ \emph{by one}),
\item
$\fdeci$ (\emph{decrement counter} $c_i$ \emph{by one}),
\item
$\ftest$ (\emph{test whether counter} $c_i$ \emph{is empty}).
\end{itemize}
For each $\alpha\in \textit{Op}_C$,
we will consider three different kinds of semantics:
\emph{reliable} (as described above), 
\emph{lossy} \cite{Mayr00} (when counters can spontaneously decrease, both before and after performing $\alpha$), and
\emph{\ierror} \cite{OuaknineW06} (when counters can spontaneously increase, both before and after performing $\alpha$).

A \emph{configuration} of $M$ is a tuple $\auf q,\vec{c}\zu$ with $q\in Q$ representing the 
current state, and an $N$-tuple
$\vec{c}=\auf c_0,\dots,c_{N-1}\zu$ of natural numbers representing the current contents of the counters. 
Given $\alpha\in \textit{Op}_C$,
we say that \emph{there is a reliable} $\alpha$-\emph{step} between 
configurations $\auf q,\vec{c}\,\zu$ and $\auf q',\vec{c}\,'\zu$ (written
$\auf q,\vec{c}\,\zu\stepi\auf q',\vec{c}\,'\zu$) iff $\auf\alpha,q'\zu\in I_q$ and
\begin{itemize}
\item
if $\alpha=\finci$ then $c_i'=c_i+1$ and $c_j'=c_j$ for $j\ne i$, $j<N$;
\item
if $\alpha=\fdeci$ then $c_i'=c_i-1$ and $c_j'=c_j$ for $j\ne i$, $j<N$; 
\item
if $\alpha=\ftest$ then $c_i'=c_i=0$ and $c_j'=c_j$ for $j<N$.
\end{itemize}
We say that \emph{there is a lossy} $\alpha$-\emph{step} between 
configurations $\auf q,\vec{c}\,\zu$ and $\auf q',\vec{c}\,'\zu$ (and we write
$\auf q,\vec{c}\,\zu\lstepi\auf q',\vec{c}\,'\zu$) iff $\auf\alpha,q'\zu\in I_q$ and
\begin{itemize}
\item
if $\alpha=\finci$ then $c_i'\leq c_i+1$ and $c_j'\leq c_j$ for $j\ne i$, $j<N$;
\item
if $\alpha=\fdeci$ then $c_i'\leq c_i-1$ and $c_j'\leq c_j$ for $j\ne i$, $j<N$; 
\item
if $\alpha=\ftest$ then $c_i'=0$ and $c_j'\leq c_j$ for $j<N$.
\end{itemize}
Finally, we say that \emph{there is an \ierror\/} $\alpha$-\emph{step} between 
configurations $\auf q,\vec{c}\,\zu$ and $\auf q',\vec{c}\,'\zu$ (written
$\auf q,\vec{c}\,\zu\istepi\auf q',\vec{c}\,'\zu$) iff $\auf\alpha,q'\zu\in I_q$ and
\begin{itemize}
\item
if $\alpha=\finci$ then $c_i'\geq c_i+1$ and $c_j'\geq c_j$ for $j\ne i$, $j<N$;
\item
if $\alpha=\fdeci$ then $c_i'\geq c_i-1$ and $c_j'\geq c_j$ for $j\ne i$, $j<N$; 
\item
if $\alpha=\ftest$ then $c_i=0$ and $c_j'\geq c_j$ for $j<N$.
\end{itemize}
Now suppose that a sequence $\vec{\tau}=\bigl\auf\auf\alpha_n,q_n\zu: 0<n<B\bigr\zu$ of instructions of $M$
is given for some $0<B\leq\omega$. We say that a sequence $\vec{\varrho}=\bigl\auf\auf q_n,\vec{c}(n)\zu : n<B\bigr\zu$ of configurations is a \emph{reliable} 
$\vec{\tau}$-\emph{run of\/} $M$ if
\begin{itemize}
\item[(i)]
$q_0=\qini$, $\vec{c}(0)=\vec{0}$, and
\item[(ii)]
$\auf q_{n-1},\vec{c}(n-1)\zu\stepin\auf q_n,\vec{c}(n)\zu$
holds for every $0<n<B$.
\end{itemize}
A \emph{reliable run} is a reliable $\vec{\tau}$-run for some $\vec{\tau}$.
Similarly, a sequence $\vec{\varrho}$ satisfying (i) is called a \emph{lossy} $\vec{\tau}$-\emph{run} 
if we have $\auf q_{n-1},\vec{c}(n-1)\zu\lstepin\auf q_n,\vec{c}(n)\zu$,
and an \emph{\ierror} $\vec{\tau}$-\emph{run} if we have
$\auf q_{n-1},\vec{c}(n-1)\zu\istepin\auf q_n,\vec{c}(n)\zu$, for every $0<n<B$.
(Note that in order to simplify the presentation, in each case we only consider runs that start at state $\qini$ with all-zero counters.)

Observe that, for any given $\vec{\tau}$, if there exists a reliable $\vec{\tau}$-run, then it is unique.
The following statement says that this unique reliable $\vec{\tau}$-run can be `approximated' by 
a $\auf$lossy,\;\ierror$\zu$-pair of $\vec{\tau}$-runs:

\begin{proposition}\label{p:approx} {\bf (faulty approximation)}\\
Given any sequence $\vec{\tau}$ of instructions, there exists a reliable $\vec{\tau}$-run iff
there exist both lossy and \ierror\ $\vec{\tau}$-runs.
\end{proposition}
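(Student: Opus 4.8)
The forward implication is immediate from inspecting the step definitions: replacing each equality in the reliable clauses by the corresponding inequality shows that every reliable $\alpha$-step is at once a lossy and an \ierror\ $\alpha$-step. Hence a reliable $\vec\tau$-run, when it exists, is simultaneously a lossy and an \ierror\ $\vec\tau$-run, which settles the `only if' direction.

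For the converse the plan is to fix a lossy $\vec\tau$-run with counter contents $c_i^\lrun(n)$ and an \ierror\ $\vec\tau$-run with counter contents $c_i^\irun(n)$. Both are driven by the same instruction sequence $\vec\tau$, so they share the state sequence $q_0=\qini,q_1,\dots$ and both start from $\vec{0}$; the only freedom lies in the counters. The key device is to introduce the \emph{intended} value $d_i(n)$ of counter $c_i$ after $n$ steps, namely the number of increments of $c_i$ minus the number of decrements of $c_i$ among $\alpha_1,\dots,\alpha_n$ (with $d_i(0)=0$). This $\vec d(n)$ is precisely the contents a reliable execution of $\vec\tau$ would assign, so the whole problem reduces to showing that $\auf\auf q_n,\vec d(n)\zu:n<B\zu$ really is a reliable $\vec\tau$-run.

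The heart of the argument is the two-sided estimate
\[
c_i^\lrun(n)\ \leq\ d_i(n)\ \leq\ c_i^\irun(n)\qquad\text{for all }i<N\text{ and all }n,
\]
which I would prove by induction on $n$. The inductive step is a uniform comparison of single-step changes: whatever $\alpha_n$ is, the lossy step-change $c_i^\lrun(n)-c_i^\lrun(n-1)$ is bounded above by the reliable change $d_i(n)-d_i(n-1)$, while the \ierror\ step-change $c_i^\irun(n)-c_i^\irun(n-1)$ is bounded below by it; summing these inequalities telescopically from the common starting point $\vec 0$ then yields the display. Checking the per-step bounds amounts to running through the three operation types $\finci$, $\fdeci$, $\ftest$, and is the only place any computation occurs.

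Finally the two halves of the estimate supply exactly the two side conditions a reliable run must satisfy. Since counters are non-negative, $d_i(n)\geq c_i^\lrun(n)\geq 0$, so no decrement in the intended sequence ever underflows --- this is where the lossy run is used. And whenever $\alpha_n=\ftest$, validity of the corresponding \ierror\ step forces $c_i^\irun(n-1)=0$, so $0\leq d_i(n-1)\leq c_i^\irun(n-1)=0$ gives $d_i(n-1)=0$; thus every test in the intended sequence reads an empty counter --- this is where the \ierror\ run is used. Together these mean $\auf q_{n-1},\vec d(n-1)\zu\stepin\auf q_n,\vec d(n)\zu$ holds for every $0<n<B$, so $\vec d$ is the desired reliable $\vec\tau$-run. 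The main obstacle is thus entirely in the backward direction: getting the sandwich estimate right, and in particular recognising that the non-negativity constraint is certified by the lossy run whereas the emptiness-of-tested-counter constraint is certified by the \ierror\ run, so that neither approximation alone suffices.
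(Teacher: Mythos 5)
Your proposal is correct and follows essentially the same route as the paper: your ``intended'' values $d_i(n)$ coincide with the paper's recursively defined $\vec{c}(n)$, the sandwich estimate $c_i^\lrun(n)\leq d_i(n)\leq c_i^\irun(n)$ is exactly the paper's inductive invariant (a), and the division of labour --- the lossy run certifying non-negativity of decrements, the \ierror\ run certifying that tested counters are zero --- is the same (the paper spells out only the test case and leaves the rest to the reader).
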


\begin{proof}
The $\Rightarrow$ direction is obvious, as each reliable $\vec{\tau}$-run is both a lossy and
an \ierror\ $\vec{\tau}$-run as well. For the $\Leftarrow$ direction, suppose that 
$\vec{\tau}=\bigl\auf\auf\alpha_n,q_n\zu: 0<n<B\bigr\zu$ for some $B\leq\omega$,
$\bigl\auf\auf q_n,\vec{c}^{\;\lrun}(n)\zu : n<B\bigr\zu$ is a lossy $\vec{\tau}$-run, and
$\bigl\auf\auf q_n,\vec{c}^{\;\irun}(n)\zu : n<B\bigr\zu$ is an \ierror\ $\vec{\tau}$-run.
We claim that there is a sequence $\auf\vec{c}(n) : n<B\zu$ of $N$-tuples of natural numbers
such that, for every $n<B$,
\begin{itemize}
\item[(a)]
$c^\lrun_i(n)\leq c_i(n)\leq c^\irun_i(n)$ for every $i<N$,
\item[(b)]
if $n>0$ then $\auf q_{n-1},\vec{c}(n-1)\zu\stepin\auf q_n,\vec{c}(n)\zu$.
\end{itemize}
It would follow that $\bigl\auf\auf q_n,\vec{c}(n)\zu : n<B\bigr\zu$ is a reliable $\vec{\tau}$-run
as required. 

We prove the claim by induction on $n$. To begin with, we let $\vec{c}(0):=\vec{0}$. Now suppose
that (a) and (b) hold for all $k< n$ for some $n$ with $0<n<B$. For each $i<N$, we let
\[
c_i(n):=\left\{
\begin{array}{ll}
c_i(n-1)+1, & \mbox{if $\alpha_n=\finci$},\\
c_i(n-1)-1, & \mbox{if $\alpha_n=\fdeci$},\\
c_i(n-1), & \mbox{if $\alpha_n=\ftest$ or $\alpha_n\in\{\fincj,\fdecj,\ftestj\}$ for $j\ne i$}.
\end{array}
\right.
\]
We need to check that (a) and (b) hold for $n$. There are several cases, depending on $\alpha_n$.
If $\alpha_n=\ftest$ then, by $\auf q_{n-1},\vec{c}^{\;\lrun}(n-1)\zu\lstepin\auf q_n,\vec{c}^{\;\lrun}(n)\zu$, the IH(a), and $\auf q_{n-1},\vec{c}^{\;\irun}(n-1)\zu\istepin\auf q_n,\vec{c}^{\;\irun}(n)\zu$, we have
\[
c_j^\lrun(n)\leq c_j^\lrun(n-1)\leq c_j(n-1)=c_j(n)\leq c_j^\irun(n-1)\leq c_j^\irun(n)\quad
\mbox{for all $j\ne i$}.
\]
Also, $c_i^\irun(n-1)=0$ by $\auf q_{n-1},\vec{c}^{\;\irun}(n-1)\zu\istepin\auf q_n,\vec{c}^{\;\irun}(n)\zu$.
So by the IH(a), we have $c_i(n-1)=0$, and so 
$c_i(n)=0$ and $\auf q_{n-1},\vec{c}(n-1)\zu\stepin\auf q_n,\vec{c}(n)\zu$.
As $\auf q_{n-1},\vec{c}^{\;\lrun}(n-1)\zu\lstepin\auf q_n,\vec{c}^{\;\lrun}(n)\zu$, we have
$c_i^\lrun(n)=0$. Thus
$c_i^\lrun(n)=c_i(n)=c_i^\irun(n-1)=0\leq c_i^\irun(n)$, as required.
The other cases are straightforward and left to the reader.
\end{proof}

In each of our lower bound proofs
we will use `faulty approximation', together with one of the following problems
on reliable counter machine runs:

\medskip
\noindent
\underline{{\sc CM non-termination:}} ($\Pi_1^0$-hard \cite{Minsky67})\\[3pt]
Given a counter machine $\mathcal{M}$, does $\mathcal{M}$ have an infinite reliable run?

\medskip
\noindent
\underline{{\sc CM reachability:}} ($\Sigma_1^0$-hard \cite{Minsky67})\\[3pt]
Given a counter machine $\mathcal{M}$, and a state $q_{\fin}$, does $\mathcal{M}$ have a reliable run reaching $q_{\fin}$?

 \medskip
\noindent
\underline{{\sc CM recurrence:}} ($\Sigma_1^1$-hard \cite{Alur&Henzinger})\\[3pt]
Given a counter machine $\mathcal{M}$ and a state $q_r$, does $\mathcal{M}$ have
a reliable run that visits $q_r$ infinitely often?


\section{Undecidable $\delta$-products with a $\K$-component}\label{kprod}

For each $0<k\leq\omega$, we call any frame $\auf k,R\zu$ a $k$-\emph{fan} if 
\begin{equation}\label{fan}
\{\auf 0,n\zu : 0<n<k\}\subseteq R.
\end{equation}

\begin{theorem}\label{t:kundec}
Let $L$ be any Kripke complete logic having an $\omega$-fan
among its frames. Then $\K\dprod L$ is undecidable.
\end{theorem}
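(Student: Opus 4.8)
The plan is to reduce the $\Pi_1^0$-hard \textsc{CM non-termination} problem to $\K\dprod L$-satisfiability, using the `faulty approximation' of Prop.~\ref{p:approx} as the conceptual backbone. Given a counter machine $M$, I would build a bimodal $\delta$-formula $\mfw$ (depending on $M$) such that $\mfw$ is satisfiable in some $\F_h\dprod\F_v$ with $\F_h$ a $\K$-frame and $\F_v\in\Fr L$ if and only if $M$ has an infinite reliable run. The intended model uses the $\omega$-fan available in $\Fr L$ as the \emph{vertical} (second) component: its root $0$ together with the points $1,2,3,\dots$ provides an unbounded supply of `time-stamps' along $\Rv$, which will index the successive configurations $\vec c(n)$ of the run. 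The $\K$-component provides the \emph{horizontal} direction, which I will use to encode, at each fixed vertical level, the finite data of a single configuration (current state and the contents of the counters coded in unary by $\Rh$-reachable points). Crucially, the language has no `next-step' operator linking level $n$ to level $n+1$, so the transition relation of $M$ cannot be enforced directly; this is exactly where the diagonal does the work.

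The key idea, following the stated proof method, is that a single run \emph{cannot} be forced, but the unique diagonal lets me force \emph{two} faulty runs simultaneously over the same instruction sequence $\vec\tau$: a lossy one and an \ierror\ one. Concretely, I would use propositional variables to mark, along the horizontal $\K$-direction, the counter values of a \emph{lossy} computation, and use the vertical $\Sfive$-like universal access (the fan edges $0\to n$) together with the diagonal to read off, at the mirrored position, the counter values of an \emph{\ierror} computation for the same instructions. The horizontal and vertical uniqueness of $\delta$ (properties \eqref{unique1}--\eqref{unique2}) guarantees there is exactly one diagonal point per row and per column, and this lets the two faulty runs `see' each other across the diagonal so that the monotone inequalities $c_i^\lrun(n)\le c_i^\irun(n)$ can be imposed and maintained. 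The formula $\mfw$ will thus assert: (1) $0$ is a root that universally sees the fan points $1,2,\dots$ along $\Rv$; (2) along each such vertical fiber the state $q_n$ and the lossy/\ierror\ counter encodings are well-formed; (3) consecutive instructions $\langle\alpha_n,q_n\rangle$ are consistent with some $I_{q_{n-1}}$; (4) the lossy steps satisfy $\lstepi$ and the \ierror\ steps satisfy $\istepi$; and (5) a formula asserting the run is infinite (e.g.\ no terminal state is ever reached, or the instruction sequence never halts). By Prop.~\ref{p:approx}, the existence of \emph{both} faulty runs over the same $\vec\tau$ is equivalent to the existence of a reliable $\vec\tau$-run, so $\mfw$ is satisfiable iff $M$ has an infinite reliable run.

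The two directions would then be argued as follows. For soundness ($\mfw$ satisfiable $\Rightarrow$ infinite reliable run), from any satisfying model I extract an instruction sequence $\vec\tau$ together with a lossy and an \ierror\ $\vec\tau$-run read off the horizontal encodings, then invoke Prop.~\ref{p:approx} to reconstruct a reliable infinite run. For completeness ($\Rightarrow$ satisfiability), given an infinite reliable run I build the model over $\F_h\dprod\F_v$, taking $\F_v$ to be the $\omega$-fan and $\F_h$ an appropriate $\K$-frame (e.g.\ a large enough fan or a finite-branching tree sufficient to code the counter values that actually occur), and verify each conjunct of $\mfw$; since a reliable run is simultaneously lossy and \ierror, all the faulty-step constraints hold automatically.

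The main obstacle I anticipate is the modal \emph{definability} of the transition constraints in the weak $\K\dprod L$ language. Because $\Rh$ need not be transitive and $\Rv$ on the fan gives only one-step access from the root (the defining property \eqref{fan}), I cannot freely quantify over `all later configurations' or iterate along the horizontal direction. The delicate engineering is to encode the \emph{unary} counter contents and the local per-step inequalities ($c_i'\le c_i+1$ for $\lstepi$, $c_i'\ge c_i+1$ for $\istepi$, and the emptiness tests) purely in terms of $\Bh,\Dh,\Bv,\Dv$ and the single diagonal constant $\delta$, exploiting \eqref{unique1}--\eqref{unique2} to `pair up' the matching cells of the two faulty runs. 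Getting the diagonal to simultaneously (i) anchor a unique reference point in each row/column and (ii) transport the comparison $c_i^\lrun(n)\le c_i^\irun(n)$ between the lossy and \ierror\ encodings, while respecting that only one diagonal point exists per line, is the technically demanding core of the construction; everything else reduces to routine bookkeeping over the instruction set $I_q$.
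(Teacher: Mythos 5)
You have the right target (reduce \textsc{CM non-termination} to satisfiability) and the right backbone (force a lossy and an \ierror\ run over the same instruction sequence, then invoke Prop.~\ref{p:approx}), but the construction as you describe it has its coordinates the wrong way round, and this is not a cosmetic issue. You propose to index time by the fan points $1,2,3,\dots$ along $\Rv$ and to encode each configuration horizontally. But the fan points have no successor structure: by \eqref{fan} each point $n\geq 1$ is reachable in one step from the root and need have no $\Rv$-successors at all, and the fan points are mutually indistinguishable from the root. So from ``row $n$'' there is no modal access whatsoever to ``row $n+1$'', and no formula evaluated at the root can tell which fan point is supposed to come next; the transition relation of $M$ cannot be stated. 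Any ordering of the fibres must be imported from the $\K$-component, which means time has to run \emph{horizontally}. The first and most important job of the diagonal --- which your proposal never makes concrete --- is exactly to make this possible: the formula $\gridfw$ of \eqref{hgen}--\eqref{vgen} uses $\Dh\delta\land\Bh\delta$ at a vertically accessible point $w$ to force $w$ to be the \emph{unique} $\Rh$-successor of the current grid point (Claim~\ref{c:grid}(iv)), turning $\Bh$ into a genuine next-time operator \eqref{nexttime} even though $\K$-frames are not transitive. Counter contents are then stored as cardinalities of sets of \emph{vertical} points (all one $\Rv$-step from the root), not as lengths of $\Rh$-chains; unary coding along $\Rh$-reachable points cannot work in a non-transitive frame with a formula of fixed modal depth and unbounded counter values.

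Two further points. First, you assign the diagonal the task of ``transporting the comparison $c_i^\lrun(n)\le c_i^\irun(n)$'' between the two faulty runs; in fact no such comparison needs to be, or is, forced by any formula --- the inequalities appear only inside the proof of Prop.~\ref{p:approx}, where the reliable run is interpolated between the two faulty ones. The diagonal's second role in the encoding is different and more local: since each row meets the diagonal in at most one cell (by \eqref{unique1}--\eqref{unique2}), guarding the exceptional clause of the counting formulas by $\delta$ (as in $\Bh\cminus\to\cminus\lor\delta$ for lossy increment, and dually $\cplus\to\Bh\cplus\lor\delta$ for \ierror\ decrement) limits the change of each counter to at most one unit per step, which is precisely what the definitions of $\lstepi$ and $\istepi$ require. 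Second, the entire quantitative machinery --- the grid-generation claim, the six counting formulas and the verification that they implement $\subseteq$/$\supseteq$ with a one-element margin, and the soundness model built over a spy-point frame $\HH_\omega$ --- is the actual content of the proof, and your proposal explicitly defers all of it. As it stands the argument does not go through.
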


\begin{corollary}\label{co:kundec}
$\K\dprod\Sfive$ is undecidable.
\end{corollary}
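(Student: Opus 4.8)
The plan is to prove Theorem~\ref{t:kundec}, from which Corollary~\ref{co:kundec} follows immediately by observing that $\Sfive$ (the logic of all reflexive, symmetric, transitive frames, i.e.\ equivalence relations) has an $\omega$-fan among its frames: take the universal frame $\langle\omega,\omega\times\omega\rangle$, in which $\{\langle 0,n\rangle:0<n<\omega\}$ is certainly contained in the accessibility relation, so condition \eqref{fan} holds. Thus $\K\dprod L$ being undecidable for every such $L$ specialises to $\K\dprod\Sfive$.

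The heart of the matter is therefore Theorem~\ref{t:kundec} itself. The strategy I would follow is the one advertised in the introduction and in Section~\ref{cm}: reduce a counter-machine problem to $\K\dprod L$-satisfiability via \emph{faulty approximation} (Prop.~\ref{p:approx}). Concretely, given a counter machine $M$, I would build a bimodal-plus-diagonal formula $\varphi_M$ (or $\mfw$ in the paper's notation) that is satisfiable in some model based on a frame in $\Fr\K\dprod\Fr L$ if and only if $M$ has an infinite reliable run (using {\sc CM non-termination}, which is $\Pi_1^0$-hard, to get undecidability). The $\K$-component supplies the temporal/step dimension: because $\K$ is the logic of \emph{all} frames, I can take the first component to be an infinite tree-like or linear-ish structure whose $\Rh$-steps index the stages $n<\omega$ of the computation, and crucially $\K$ does \emph{not} force transitivity, so successive stages stay distinct and the `next step' can be isolated by a single $\Dh$-move. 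The $L$-component, via its $\omega$-fan, supplies the unbounded \emph{width} needed to store counter values: at each stage the contents of the counters are encoded by how far out along the fan certain propositional markers are placed, i.e.\ a counter holding value $m$ is recorded by truth of a counter-variable at the $m$-th vertical point.

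The key device, and the reason the diagonal is indispensable, is that a single model is made to carry \emph{two} simultaneous faulty runs over the \emph{same} instruction sequence $\vec\tau$: a \emph{lossy} run and an \emph{\ierror} run. The formula $\varphi_M$ would fix, along the horizontal ($\K$) spine, a single shared sequence of instructions $\langle\langle\alpha_n,q_n\rangle\rangle$ by propositional variables encoding states and operations, and then assert, using the vertical fan, that the counter markers evolve lossily in one layer and with insertion-errors in the other. The horizontal and vertical \emph{uniqueness} of $\delta$ (properties \eqref{unique1}--\eqref{unique2}) is what glues the two layers to a common coordinate system: the unique diagonal point at each vertical column lets the formula `anchor' and compare the two faulty computations cell-by-cell, forcing the lossy counter value to lie below and the \ierror\ value above a common witness, exactly the sandwiching $c^\lrun_i(n)\le c_i(n)\le c^\irun_i(n)$ of Prop.~\ref{p:approx}(a). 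By that proposition, the coexistence of a lossy and an \ierror\ $\vec\tau$-run is equivalent to the existence of a reliable one, so $\varphi_M$ is satisfiable iff $M$ has an infinite reliable run.

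I expect the main obstacle to be step~(ii) from the introduction: with no transitivity and no `next-time' operator in the $\K$-component, one must nonetheless \emph{refer to neighbouring grid cells} — that is, relate the counter contents at stage $n$ to those at stage $n+1$ — and enforce the per-operation constraints ($c_i'\le c_i+1$ etc.\ for lossy, $c_i'\ge c_i+1$ etc.\ for \ierror) using only the diagonal and the $\B_h^+,\B_v^+$ modalities. The delicate part is writing formulas that simulate the increment/decrement/zero-test transitions correctly under the faulty semantics while preventing the $\omega$-fan from being collapsed or its markers from leaking between columns; this is where the horizontal/vertical uniqueness of $\delta$ must be invoked repeatedly to pin down the intended grid structure. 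I would organise the proof around a formula $\varphi_M$ together with helper conjuncts (a grid/uniqueness part such as $\univf$-style axioms, a `do$(\alpha_n)$' part implementing each operation lossily and with insertion-errors, and a bookkeeping part threading the shared state sequence), prove a soundness lemma (any satisfying model yields a lossy and an \ierror\ run) and a completeness lemma (any pair of faulty runs over $\vec\tau$ yields a satisfying model on an $\omega$-fan$\times\K$-frame), and then close by combining these with Prop.~\ref{p:approx} and the $\Pi_1^0$-hardness of {\sc CM non-termination}.
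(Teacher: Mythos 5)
Your proposal is correct and follows essentially the same route as the paper: the corollary is obtained from Theorem~\ref{t:kundec} by observing that the universal frame $\langle\omega,\omega\times\omega\rangle$ is an $\omega$-fan for $\Sfive$, and the theorem itself is proved exactly as you describe, by reducing CM non-termination via a simultaneous encoding of a lossy and an \ierror\ run for a shared instruction sequence, with the diagonal both turning $\Bh$ into a next-time operator on the generated grid and anchoring the counter updates, and then invoking Prop.~\ref{p:approx}. The one detail to adjust when writing the formulas out is that counter values must be stored as \emph{cardinalities} of sets of marked fan points (the paper's $\OnSetd{i}(n)$ and $\OnSetu{i}(n)$) rather than by position of a marker along the fan, since the successors of the root of an $\omega$-fan for $\Sfive$ carry no usable order.
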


We prove Theorem~\ref{t:kundec} 
by  reducing the `CM non-termination' problem to $L_h\dprod L_v$-satisfiability.
Let $\M$ be a model based on the $\delta$-product of some frame $\F_h=\auf \Wh,\Rh\zu$ in $\Fr L_h$ and 
some frame $\F_v=\auf \Wv,\Rv\zu$ in $\Fr L_v$. First, we generate an $\omega\times\omega$-grid in $\M$.
Let $\gridfw$ be the conjunction of the formulas
\begin{align}
\label{hgen}
& \Uv\Dh\delta,\\
\label{vgen}
& \Bh\Dv(\Dh\delta\land\Bh\delta).
\end{align}

\begin{claim}\label{c:grid} {\bf (grid generation)}\\
If $\M,\auf r_h,r_v\zu\models\gridfw$ then there exist points
$\auf x_n\in \Wh\cap \Wv : n <\omega \zu$ such that, for all $n<\omega$,
\begin{itemize}
\item[{\rm (i)}] 
$r_h \Rh x_n$,
\item[{\rm (ii)}]  
$x_0=r_v$, and if $n>0$ then $x_0 \Rv x_n$,
\item[{\rm (iii)}]  
if $n>0$ then $x_{n-1}\Rh x_n$, 
\item[{\rm (iv)}]  
if $n>0$ then $x_n$ is the only  $\Rh$-successor of $x_{n-1}$.
\end{itemize}
(We do not claim that all the $x_n$ are distinct.)
\end{claim}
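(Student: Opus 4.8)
The plan is to construct the sequence $\auf x_n\zu$ by induction on $n$, extracting one clause at a time from the two conjuncts of $\gridfw$. First I would translate the relevant subformulas at a point $\auf x,y\zu$ of $\M$ into conditions on the component relations, using that $\id=\{\auf z,z\zu:z\in\Wh\cap\Wv\}$. A horizontal successor $\auf x',y\zu$ of $\auf x,y\zu$ (so $x\Rh x'$) lies on $\delta$ exactly when $x'=y\in\Wh\cap\Wv$; hence $\auf x,y\zu\models\Dh\delta$ iff $x\Rh y$ and $y\in\Wh\cap\Wv$, while $\auf x,y\zu\models\Bh\delta$ iff every $\Rh$-successor of $x$ equals $y$. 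Consequently $\auf x,y\zu\models\Dh\delta\land\Bh\delta$ iff $y$ is the \emph{unique} $\Rh$-successor of $x$ and $y\in\Wh\cap\Wv$. This equivalence is the engine of the proof: the diagonal converts the bare existence of a successor on $\delta$ into functionality of $\Rh$ at $x$, which is precisely what clauses (iii) and (iv) demand.

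For the base case I would read off the conjunct $\Dh\delta$ of $\gridfw$ (recall $\Uv\Dh\delta=\Dh\delta\land\Bv\Dh\delta$) at $\auf r_h,r_v\zu$: it gives $r_h\Rh r_v$ with $r_v\in\Wh\cap\Wv$, so $x_0:=r_v$ satisfies (i), (ii) and the membership $x_0\in\Wh\cap\Wv$. For the inductive step, assume $x_n\in\Wh\cap\Wv$ with $r_h\Rh x_n$. Then $\auf x_n,r_v\zu$ is a horizontal successor of $\auf r_h,r_v\zu$, so (vgen) $=\Bh\Dv(\Dh\delta\land\Bh\delta)$ supplies a vertical successor $\auf x_n,y\zu$, that is $r_v\Rv y$, with $\auf x_n,y\zu\models\Dh\delta\land\Bh\delta$. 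Setting $x_{n+1}:=y$, the equivalence of the previous paragraph makes $x_{n+1}$ the unique $\Rh$-successor of $x_n$ and puts $x_{n+1}\in\Wh\cap\Wv$, giving (iii) and (iv); and $r_v\Rv x_{n+1}$ together with $x_0=r_v$ gives (ii).

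The only clause not already secured is (i), $r_h\Rh x_{n+1}$, and I expect this to be the main obstacle: since $\Rh$ need not be transitive, we cannot compose $r_h\Rh x_n\Rh x_{n+1}$. Here the second conjunct $\Bv\Dh\delta$ of (hgen) is essential. The key observation is that $r_v\Rv x_{n+1}$ is a fact about $\Wv$ alone, so $\auf r_h,x_{n+1}\zu$ is a \emph{vertical} successor of $\auf r_h,r_v\zu$; transporting the witness found over the column $x_n$ back to the column $r_h$ along this vertical edge, $\Bv\Dh\delta$ forces $\auf r_h,x_{n+1}\zu\models\Dh\delta$, i.e.\ $r_h\Rh x_{n+1}$, which closes the induction. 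Thus the uniqueness of the diagonal enters twice: once to read $\Dh\delta\land\Bh\delta$ as functionality for (iii)--(iv), and once to ensure that the $y$ witnessing the vertical step really is an $\Rv$-successor of $r_v$, so that (hgen) can re-create the horizontal edge from $r_h$ and deliver (i).
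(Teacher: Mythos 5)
Your proposal is correct and follows essentially the same route as the paper: set $x_0:=r_v$ using the $\Dh\delta$ conjunct of $\Uv\Dh\delta$, use \eqref{vgen} at the horizontal successor $\auf x_n,r_v\zu$ to obtain $x_{n+1}$ with $\auf x_n,x_{n+1}\zu\models\Dh\delta\land\Bh\delta$ (whence (ii)--(iv)), and recover (i) from $\Bv\Dh\delta$ at $\auf r_h,r_v\zu$ via the vertical edge $r_v\Rv x_{n+1}$. Your explicit reading of $\Dh\delta\land\Bh\delta$ as ``$y$ is the unique $\Rh$-successor of $x$, and $y\in\Wh\cap\Wv$'' is exactly the mechanism the paper uses, just stated more explicitly.
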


\begin{proof}
By induction on $n$. Let $x_0:= r_v$. Then (i) holds by \eqref{hgen}. Now suppose inductively
that we have $\auf x_k : k<n\zu$ satisfying (i)--(iv) for some $0<n<\omega$.
Then by \eqref{vgen}, there is $x_n\in \Wv$ such that $x_0\Rv x_n$ and $\M,\auf x_{n-1},x_n\zu\models\Dh\delta\land\Bh\delta$.
Therefore, $x_n\in \Wh$, $x_{n-1}\Rh x_n$, and $x_n$ is the only  $\Rh$-successor of $x_{n-1}$.
By \eqref{hgen}, $\M,\auf r_h,x_n\zu\models\Dh\delta$. So $r_h \Rh x_n$ follows, as required.
\end{proof}

Observe that because of Claim~\ref{c:grid}(iii) and (iv), $\Bh$ in fact expresses `horizontal next-time' in
our grid. For any formula $\psi$ and any $w\in W_v$, 
\begin{equation}\label{nexttime}
\M,\auf x_n,w\zu\models\Bh\psi
\qquad\mbox{iff}\qquad
\M,\auf x_{n+1},w\zu\models\psi,\quad\mbox{for all $n<\omega$}.
\end{equation}
Using this,
we will force a pair of infinite lossy and \ierror\ $\vec{\tau}$-runs,
for the same sequence $\vec{\tau}$ of instructions. 
Given any counter machine $M$,  for each $i<N$ of its counters, we take 
two fresh propositional variables $\cminus$ and $\cplus$. At each moment $n$ of time, the actual
content of counter $c_i$ during the lossy run will be represented by the set of points
\[
\OnSetd{i}(n):=\{w\in \Wv : x_0 \Rvr w\mbox{ and }\M,\auf x_n,w\zu\models\cminus\},
\]
and during the \ierror\ run by the set of points
\[
\OnSetu{i}(n):=\{w\in \Wv : x_0 \Rvr w\mbox{ and }\M,\auf x_n,w\zu\models\cplus\}.
\]
For each $i<N$, the following formulas force the possible changes in the counters during the
lossy and \ierror\ runs, respectively:
\begin{align*}
\fixd_i & : = \  \Uv(\Bh\cminus\to\cminus),\\
\incd_i & : = \ \Uv\bigl(\Bh\cminus\to (\cminus\lor\delta)\bigr),\\
\decd_i & : = \ \Uv(\Bh\cminus\to\cminus)\land\Ev(\cminus\land\Bh\neg\cminus),
\end{align*} 
and
\begin{align*}
\fixu_i & :=\  \Uv(\cplus\to\Bh\cplus),\\
\incu_i & :=\  \Uv(\cplus\to\Bh\cplus)\land\Ev(\neg\cplus\land\Bh\cplus),\\
\decu_i & :=\  \Uv\bigl(\cplus\to (\Bh\cplus\lor\delta)\bigr).
\end{align*}

\begin{claim}\label{c:counting} {\bf (lossy and \ierror\ counting)}\\
Suppose that $\M,\auf r_h,r_v\zu\models\gridfw$. Then for all $n<\omega$ and $i<N$:
\begin{itemize}
\item[{\rm (i)}] 
If $\M,\auf x_n,x_0\zu\models\fixd_i$ then $\OnSetd{i}(n+1)\subseteq \OnSetd{i}(n)$.
\item[{\rm (ii)}] 
If $\M,\auf x_n,x_0\zu\models\incd_i$ then $\OnSetd{i}(n+1)\subseteq \OnSetd{i}(n)\cup \{x_n\}$.
\item[{\rm (iii)}] 
If $\M,\auf x_n,x_0\zu\models\decd_i$ then $\OnSetd{i}(n+1)\!\subseteq \OnSetd{i}(n)- \{z\}$ for 
some $z\in \OnSetd{i}(n)$.
\item[{\rm (iv)}] 
If $\M,\auf x_n,x_0\zu\models\fixu_i$ then $\OnSetu{i}(n+1)\supseteq \OnSetu{i}(n)$.
\item[{\rm (v)}] 
If $\M,\auf x_n,x_0\zu\models\incu_i$ then there is $z$ such that $x_0 \Rvr z$, 
$z\notin \OnSetu{i}(n)$, and
$\OnSetu{i}(n+1)\supseteq \OnSetu{i}(n)\cup \{z\}$.
\item[{\rm (vi)}] 
If $\M,\auf x_n,x_0\zu\models\decu_i$ then $\OnSetu{i}(n+1)\supseteq \OnSetu{i}(n)- \{x_n\}$.
\end{itemize}
\end{claim}

\begin{proof}
We show items (ii) and(v). The proofs of the other items are similar and left to the reader.

(ii): Suppose $w\in \OnSetd{i}(n+1)$. Then 
$x_0 \Rvr w$ and $\M,\auf x_{n+1},w\zu\models\cminus$.
By \eqref{nexttime},
we have $\M,\auf x_{n},w\zu\models\Bh\cminus$.
Therefore, $\M,\auf x_{n},w\zu\models\cminus\lor\delta$ by $\incd_i$, and so either
$w\in \OnSetd{i}(n)$ or $w=x_n$.

(v): By $\incu_i$, there is $z$ with $x_0 \Rvr z$ and
$\M,\auf x_n,z\zu\models\neg\cplus\land\Bh\cplus$. Thus $z\notin \OnSetu{i}(n)$. Also, we have 
$\M,\auf x_{n+1},z\zu\models\cplus$ 
by \eqref{nexttime},
and so $z\in \OnSetu{i}(n+1)$.
Now suppose $w\in \OnSetu{i}(n)$. Then $x_0 \Rvr w$ and
$\M,\auf x_{n},w\zu\models\cplus$. By $\incu_i$, we have $\M,\auf x_{n},w\zu\models\Bh\cplus$.
Thus $\M,\auf x_{n+1},w\zu\models\cplus$ 
by \eqref{nexttime},
and so $w\in \OnSetu{i}(n+1)$.
\end{proof}

Using the above counting machinery, we can encode lossy and \ierror\ steps.
For each $\alpha\in\textit{Op}_C$, we define 
\[
\lexei:=\ \left\{
\begin{array}{ll}
\displaystyle\incd_i\land\bigwedge_{i\ne j<N}\fixd_j, & \mbox{ if $\alpha=\finci$},\\
\displaystyle\decd_i\land\bigwedge_{i\ne j<N}\fixd_j, & \mbox{ if $\alpha=\fdeci$},\\
\displaystyle\Uv\Bh\neg\cminus\land\bigwedge_{i\ne j<N}\fixd_j, & \mbox{ if $\alpha=\ftest$},\\
\end{array}
\right.
\]
and
\[
\iexei:=\ \left\{
\begin{array}{ll}
\displaystyle\incu_i\land\bigwedge_{i\ne j<N}\fixu_j, & \mbox{ if $\alpha=\finci$},\\
\displaystyle\decu_i\land\bigwedge_{i\ne j<N}\fixu_j, & \mbox{ if $\alpha=\fdeci$},\\
\displaystyle\Uv\neg\cplus\land\bigwedge_{i\ne j<N}\fixu_j, & \mbox{ if $\alpha=\ftest$}.\\
\end{array}
\right.
\]
Now we can force runs of $M$ that start at $\qini$ with all-zero counters.
For each state $q\in Q$, we introduce a fresh propositional variable $\mathsf{S}_q$, and define 
\begin{equation}\label{uniq}
	\St_q := \mathsf{S}_q \land\!\!\! \bigwedge_{q\ne q'\in Q} \!\!\!\neg \mathsf{S}_{q'}.
\end{equation}
Let $\mfw$ be the conjunction of
\begin{align}
\label{mini}
& \Bh\Bigl(\delta\to\bigl(\St_{\qini}\land\Uv(\neg\cminus\land\neg\cplus)\bigr)\Bigr),\\
\label{mstep}
& \Bh\bigwedge_{q\in Q-H}\Bigl(\St_q\to\bigvee_{\auf\alpha,q'\zu\in I_q}\bigl(\Bh\St_{q'}\land\lexei\land\iexei\bigr)\Bigr),\\
\label{minf}
& \Bh\bigvee_{q\in Q-H}\St_q.
\end{align}

\begin{lemma}\label{l:run} {\bf (lossy and \ierror\ run-emulation)}\\
Suppose that $\M,\auf r_h,r_v\zu\models\gridfw\land\mfw$. Let
$q_0:=\qini$, and for all $i<N$, $n<\omega$, let $c_i^\lrun(n):=|\OnSetd{i}(n)|$ and 
\[
c_i^\irun(n):=\left\{
\begin{array}{ll}
c_i^\irun(n-1)+1, & \mbox{if $\OnSetu{i}(n)$ is infinite,}\\[3pt]
|\OnSetu{i}(n)|,  & \mbox{otherwise}.
\end{array}
\right.
\]
Then there exists an infinite sequence
$\vec{\tau}=\bigl\auf\auf\alpha_n,q_n\zu: 0<n<\omega\bigr\zu$ of instructions 
such that
\begin{itemize}
\item
$\bigl\auf\auf q_n,\vec{c}^{\;\lrun}(n)\zu : n<\omega\bigr\zu$ is a lossy $\vec{\tau}$-run of $M$, and
\item
$\bigl\auf\auf q_n,\vec{c}^{\;\irun}(n)\zu : n<\omega\bigr\zu$ is an \ierror\ $\vec{\tau}$-run of $M$.
\end{itemize}
\end{lemma}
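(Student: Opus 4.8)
The plan is to read a single infinite instruction sequence $\vec\tau$ off the model and then verify that the two prescribed counter sequences constitute, respectively, a lossy and an \ierror\ $\vec\tau$-run. Throughout I work along the grid points $\auf x_n,x_0\zu$ produced by Claim~\ref{c:grid}, using \eqref{nexttime} to read $\Bh$ as the horizontal successor. Since every $\auf x_n,x_0\zu$ is a $\Bh$-successor of $\auf r_h,r_v\zu$, formula \eqref{minf} forces $\bigvee_{q\in Q-H}\St_q$ there, and the uniqueness clause \eqref{uniq} guarantees that exactly one disjunct $\St_{q_n}$ holds, with $q_n\in Q-H$; this defines $q_n$. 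Formula \eqref{mini}, together with $\M,\auf x_0,x_0\zu\models\delta$ (as $x_0=r_v\in\Wh\cap\Wv$ by Claim~\ref{c:grid}), yields $q_0=\qini$ and $\M,\auf x_0,x_0\zu\models\Uv(\neg\cminus\land\neg\cplus)$, so $\OnSetd{i}(0)=\OnSetu{i}(0)=\emptyset$ and all counters start at $0$.

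Next I extract the instructions: at each $\auf x_n,x_0\zu$ the antecedent $\St_{q_n}$ of \eqref{mstep} fires, so I may choose a disjunct $\auf\alpha_{n+1},q'\zu\in I_{q_n}$ for which $\M,\auf x_n,x_0\zu\models\Bh\St_{q'}\land\lexei\land\iexei$. By \eqref{nexttime} we get $\M,\auf x_{n+1},x_0\zu\models\St_{q'}$, whence $q'=q_{n+1}$ by uniqueness; this builds $\vec\tau=\bigl\auf\auf\alpha_n,q_n\zu:0<n<\omega\bigr\zu$ and records $\auf\alpha_{n+1},q_{n+1}\zu\in I_{q_n}$, which is the membership part of every step relation. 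It then remains to check the counter inequalities step by step, feeding $\lexei$ and $\iexei$ into Claim~\ref{c:counting}. For the lossy side, a side induction shows each $\OnSetd{i}(n)$ is finite --- it starts empty and grows by at most one element per step (Claim~\ref{c:counting}(i),(ii)) --- so $c_i^\lrun(n)=|\OnSetd{i}(n)|$ is a genuine natural number; then Claim~\ref{c:counting}(ii)/(iii)/(i) (resp.\ the conjunct $\Uv\Bh\neg\cminus$ for a test, which empties $\OnSetd{i}(n+1)$) deliver exactly $c_i^\lrun(n+1)\leq c_i^\lrun(n)+1$, $\leq c_i^\lrun(n)-1$, and $=0$, with $c_j^\lrun(n+1)\leq c_j^\lrun(n)$ for the untouched counters, i.e.\ a lossy $\alpha_{n+1}$-step.

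The main obstacle, and the reason for the unusual definition of $c_i^\irun$, is that $\OnSetu{i}(n)$ may be \emph{infinite}, so its cardinality is not directly usable. Here I would first observe that each of Claim~\ref{c:counting}(iv),(v),(vi) shrinks $\OnSetu{i}$ by at most one point, so once $\OnSetu{i}(n)$ is infinite it stays infinite; hence from that point the recursive clause $c_i^\irun(n)=c_i^\irun(n-1)+1$ governs and simply keeps incrementing. With this in hand the \ierror\ inequalities $c_i^\irun(n+1)\geq c_i^\irun(n)+1$, $\geq c_i^\irun(n)-1$, and $c_j^\irun(n+1)\geq c_j^\irun(n)$ all hold: in the finite-to-finite case they come from Claim~\ref{c:counting}(v),(vi),(iv), while whenever $\OnSetu{i}(n+1)$ turns infinite the value rises by exactly one, which dominates every required bound. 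For an \ierror\ test, $\Uv\neg\cplus$ forces $\OnSetu{i}(n)=\emptyset$, hence $c_i^\irun(n)=0$ as the test demands.

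Assembling these pointwise verifications over all $n$ exhibits $\bigl\auf\auf q_n,\vec{c}^{\,\lrun}(n)\zu:n<\omega\bigr\zu$ as a lossy and $\bigl\auf\auf q_n,\vec{c}^{\,\irun}(n)\zu:n<\omega\bigr\zu$ as an \ierror\ $\vec\tau$-run, both sharing $q_0=\qini$ and zero initial counters, which is exactly what the lemma asserts. The two genuinely delicate points, to be handled with care rather than left to the reader, are the consistency step $q'=q_{n+1}$ (so that $\vec\tau$ is a legitimate instruction sequence for \emph{both} runs simultaneously) and the infinity bookkeeping for $c_i^\irun$; everything else is a routine case analysis on the shape of $\alpha_{n+1}$.
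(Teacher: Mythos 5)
Your proposal is correct and follows essentially the same route as the paper's proof: an induction along the grid points $\auf x_n,x_0\zu$ that extracts the instruction sequence from \eqref{mstep} via \eqref{nexttime} and uniqueness of $\St_q$, verifies the step relations through Claim~\ref{c:counting}, and resolves the $\ftest$ case by noting that $\Uv\neg\cplus$ forces $\OnSetu{i}(n)=\emptyset$. Your explicit bookkeeping for the finiteness of $\OnSetd{i}(n)$ and the persistence of infinitude of $\OnSetu{i}(n)$ merely spells out details the paper leaves to the reader.
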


\begin{proof}
We define $\bigl\auf\auf\alpha_n,q_n\zu: 0<n<\omega\bigr\zu$ by induction on $n$ such that
for all $0<n<\omega$,
\begin{itemize}
\item
$q_n\in Q-H$ and
$\M,\auf x_n,x_0\zu\models\St_{q_n}$,
\item
$\auf q_{n-1},\vec{c}^{\;\lrun}(n-1)\zu\lstepin\auf q_n,\vec{c}^{\;\lrun}(n)\zu$ and
$\auf q_{n-1},\vec{c}^{\;\irun}(n-1)\zu\istepin\auf q_n,\vec{c}^{\;\irun}(n)\zu$.
\end{itemize}
As $\vec{c}^{\;\lrun}(0)=\vec{c}^{\;\irun}(0)=\vec{0}$ by \eqref{mini}, the lemma will follow.

To this end, 
take some $n$ with $0<n<\omega$. Then we have $q_{n-1}\in Q-H$ and $\M,\auf x_{n-1},x_0\zu\models\St_{q_{n-1}}$,
by \eqref{mini} and \eqref{minf} if $n=1$, and by the IH if \mbox{$n>1$.}
Therefore, by Claim~\ref{c:grid}(i) and \eqref{mstep}, there is $\auf \alpha_n,q_n\zu\in I_{q_{n-1}}$ such that
$\M,\auf x_{n-1},x_0\zu\models \Bh\St_{q_n}\land\lexein\land\iexein$.
So $\M,\auf x_{n},x_0\zu\models \St_{q_n}$ by Claim~\ref{c:grid}(iii), and so
$q_n\in Q-H$ by Claim~\ref{c:grid}(i) and \eqref{minf}.
Using Claim~\ref{c:counting}(i)--(iii), it is easy to check that
$\auf q_{n-1},\vec{c}^{\;\lrun}(n-1)\zu\lstepin\auf q_n,\vec{c}^{\;\lrun}(n)\zu$.
Finally, in order to show that $\auf q_{n-1},\vec{c}^{\;\irun}(n-1)\zu\istepin\auf q_n,\vec{c}^{\;\irun}(n)\zu$,
we need to use Claim~\ref{c:counting}(iv)--(vi) and the following observation. 
As for each $i<N$ either $\OnSetu{i}(n-1)$ is infinite or $c_i^\irun(n-1)=|\OnSetu{i}(n-1)|$,
if $c_i^\irun(n-1)\ne 0$ then $\OnSetu{i}(n-1)\ne\emptyset$, and so
$\alpha_n\ne\ftest$ follows by $\M,\auf x_{n-1},x_0\zu\models\iexein$.
\end{proof}

For each $k\leq\omega$, let $\HH_k$ be the frame obtained from $\auf k,+1\zu$ by adding a
`spy-point', that is,
let $\HH_k:=\auf k+1,S_k\zu$, where
\begin{equation}\label{spyframes}
S_k = \{\auf k,n\zu : n<k\} \cup \{\auf n-1,n\zu : 0<n<k\}.
\end{equation}

\begin{lemma}\label{l:sound} {\bf (soundness)}\\
If $M$ has an infinite reliable run, then $\gridfw\land\mfw$ is satisfiable in a model
over $\HH_\omega\dprod\F$ for some $\omega$-fan $\F$.
\end{lemma}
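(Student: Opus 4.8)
The plan is to build the required model directly from a fixed infinite reliable run of $M$, letting the horizontal spy-framework $\HH_\omega$ carry the discrete time axis and letting the $\omega$-fan $\F$ carry the counter contents. Since $M$ has an infinite reliable run, there is a unique sequence $\vec{\tau}=\bigl\auf\auf\alpha_n,q_n\zu:0<n<\omega\bigr\zu$ and a reliable $\vec{\tau}$-run $\bigl\auf\auf q_n,\vec{c}(n)\zu:n<\omega\bigr\zu$ with $q_0=\qini$, $\vec{c}(0)=\vec 0$. First I would fix some $\omega$-fan $\F=\auf\omega,R\zu\in\Fr L$ (available by hypothesis) and work in $\HH_\omega\dprod\F$, whose universe is $(\omega+1)\times\omega$. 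The spy-point of $\HH_\omega$ is the element $\omega$ (in the notation of \eqref{spyframes}, the point $k=\omega$ that $S_\omega$-accesses every $n<\omega$), so I would put $r_h:=\omega$ and $r_v:=0$. The diagonal points $x_n$ promised by Claim~\ref{c:grid} will be exactly the pairs $\auf n,n\zu$: indeed $\HH_\omega$ is just $\auf\omega,+1\zu$ with a spy, so $n-1\mathrel{S_\omega}n$ gives clause~(iii) and~(iv) (each $n-1$ has the single $S_\omega$-successor $n$), while $\omega\mathrel{S_\omega}n$ gives clause~(i) and the root accesses $x_n$ in the fan gives clause~(ii). This makes $\gridfw$ (that is, \eqref{hgen}--\eqref{vgen}) true at $\auf r_h,r_v\zu$ essentially by construction of $\HH_\omega$ and $\F$.

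The substantive step is to define the valuation so that $\mfw$ holds. The key design decision is to realise each counter value $c_i(n)$ as the cardinality of an initial segment of the fan: concretely I would set $\cminus,\cplus$ both true at $\auf x_n,w\zu=\auf n,w\zu$ exactly when $w<c_i(n)$ (for $w$ with $x_0\Rvr w$, i.e.\ $w$ in the fan reachable from $0$), and set $\St_{q_n}$ true precisely at the diagonal column $\auf n,\cdot\zu$ according to the state $q_n$. With this choice $\OnSetd{i}(n)=\OnSetu{i}(n)=\{w:w<c_i(n)\}$, so $c_i^\lrun(n)=c_i^\irun(n)=c_i(n)$, matching the reliable run on the nose. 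I would then verify the three conjuncts of $\mfw$ in turn: \eqref{mini} follows because $\vec c(0)=\vec 0$ forces every $\cminus,\cplus$ false in column $0$ and forces $\St_{\qini}$; \eqref{minf} holds because the run is infinite so every $q_n\in Q-H$; and \eqref{mstep} is the heart of the check — for each $n$ one reads off the actual instruction $\auf\alpha_n,q_n\zu\in I_{q_{n-1}}$ taken by the reliable run and shows that the corresponding disjunct $\Bh\St_{q_n}\land\lexein\land\iexein$ is satisfied at $\auf x_{n-1},x_0\zu$. Because the run is \emph{reliable}, each counter changes by exactly the nominal amount, and a reliable step is simultaneously a lossy and an \ierror\ step; hence both $\lexein$ and $\iexein$ are validated by the \emph{same} monotone change of the initial segment $\{w:w<c_i(\cdot)\}$, using the "next-time" equivalence \eqref{nexttime}.

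The one place demanding genuine care — and the step I expect to be the main obstacle — is checking the formulas $\incd_i,\decd_i,\incu_i,\decu_i$ and the test-clauses against my initial-segment encoding, matching them to the exact increment/decrement/test semantics of a reliable step. For an increment $\alpha_n=\finci$ I must exhibit the \emph{new} witness point $z$ required by $\incu_i$ (namely $z=c_i(n-1)$, the newly-activated fan point, which satisfies $x_0\Rvr z$, $z\notin\OnSetu{i}(n-1)$, and $\Bh\cplus$ at $z$ via \eqref{nexttime}) and dually verify $\Ev(\cminus\land\Bh\neg\cminus)$ for the lossy decrement; for a decrement I must check the boundary point $z=c_i(n)$ drops out correctly; and for a test $\alpha_n=\ftest$ I must confirm $c_i(n-1)=0$ makes $\Uv\Bh\neg\cminus$ and $\Uv\neg\cplus$ true at the appropriate column. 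All of these reduce, via \eqref{nexttime} and Claim~\ref{c:counting}, to elementary facts about how the single integer $c_i(n)$ evolves, so once the encoding "$\cminus,\cplus$ true below the threshold $c_i(n)$" is fixed, each verification is routine though slightly tedious; the risk is purely bookkeeping, in keeping the fan-accessibility side-condition $x_0\Rvr w$ aligned with the threshold and in handling the $\Ev$/$\Uv$ existential witnesses. I would organise the proof so that, after defining the model, the three bullet-checks for \eqref{mini}, \eqref{mstep}, \eqref{minf} are presented as a single induction-free case analysis on the shape of $\alpha_n$, citing Claim~\ref{c:grid} and \eqref{nexttime} freely.
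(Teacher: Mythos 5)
Your overall architecture (evaluate at the spy-point $\auf\omega,0\zu$, diagonal points $x_n=n$, read off the actual instructions of the reliable run to witness the disjunction in \eqref{mstep}) matches the paper's proof, but your valuation of the counter variables is where the argument genuinely breaks, and it is not a bookkeeping issue. You propose to make $\cminus$ and $\cplus$ true at $\auf n,w\zu$ iff $w<c_i(n)$, so that both $\OnSetd{i}(n)$ and $\OnSetu{i}(n)$ are the initial segment $\{0,\dots,c_i(n)-1\}$. This fails to satisfy $\incd_i$ and $\decu_i$. Concretely, $\incd_i=\Uv\bigl(\Bh\cminus\to(\cminus\lor\delta)\bigr)$ forces any point that is \emph{newly added} to the lossy set at step $n\to n+1$ to be the diagonal point $x_n=n$ (Claim~\ref{c:counting}(ii)); under your encoding the newly added point is $w=c_i(n)$, which equals $n$ only if counter $i$ was incremented at every previous step. (Run: increment $c_0$, test $c_1$, increment $c_0$; at the third step the new point is $1$ but the diagonal point is $2$, so $\incd_0$ fails.) Dually, $\decu_i=\Uv\bigl(\cplus\to(\Bh\cplus\lor\delta)\bigr)$ forces the point \emph{removed} from the \ierror\ set to be the diagonal point $x_n$, whereas your encoding removes the boundary point $c_i(n+1)$, again generally different from $n$. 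Your sentence identifying $z=c_i(n-1)$ as the witness handles only $\incu_i$ and $\decd_i$, which are the two formulas that place no constraint on \emph{which} point changes; the two constrained formulas are exactly the ones your encoding violates.

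Fixing this is the real content of the paper's proof and requires two different, asymmetric valuations rather than a single threshold. For $\cminus$ the paper adds the diagonal point $n$ itself on each increment and deletes the minimum on each decrement. For $\cplus$ the constraint is harder: since only the diagonal point may ever leave the \ierror\ set, the point switched on at the $m$-th increment of counter $i$ must be chosen with \emph{look-ahead} to be $\lambda_m^i$, the time of the matching $m$-th future decrement (using the enumerations in \eqref{lseqfirst}--\eqref{lseqlast} and the fact that reliability guarantees $\lambda_m^i>\xi_m^i$); increments that are never matched by a decrement get an arbitrary fresh point. Without this pairing device, or some equivalent re-design of the valuation, the verification of \eqref{mstep} does not go through, so the proposal as written has a genuine gap.
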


\begin{proof}
Suppose that 
$\bigl\auf\auf q_n,\vec{c}(n)\zu : n<\omega\bigr\zu$ is a reliable $\vec{\tau}$-run of $M$,
for some sequence $\vec{\tau}=\bigl\auf\auf\alpha_n,q_n\zu: 0<n<\omega\bigr\zu$ of instructions.
We define a model $\Mfw=\auf\HH_\omega\dprod\F,\efw\zu$ as follows.
For each $q\in Q$, we let
\[
\efw(\mathsf{S}_q):=\{\auf n,0\zu : n<\omega,\ q_n=q\}.
\]
Further, for all $i<N$, $n<\omega$, we will define inductively the sets $\efw_n(\cminus)$ and
$\efw_n(\cplus)$, and then put
\[
\efw(\cminus):=\{\auf n,m\zu : m\in\efw_n(\cminus)\}\ \ \mbox{and}\ \ 
\efw(\cplus):=\{\auf n,m\zu : m\in\efw_n(\cplus)\}.
\]
To begin with, we let $\efw_0(\cminus)=\efw_0(\cplus):=\emptyset$, and
\[
\efw_{n+1}(\cminus):=\left\{
\begin{array}{ll}
\efw_n(\cminus)\cup\{ n\}, & \mbox{ if }\alpha_{n+1}=\finci,\\
\efw_n(\cminus)-\{\min \efw_n(\cminus)\},  & \mbox{ if }\alpha_{n+1}=\fdeci,\\
\efw_n(\cminus), & \mbox{ otherwise}.
\end{array}
\right.
\]
It is straightforward to check that
\begin{equation}\label{lok}
|\efw_{n}(\cminus)|=c_i(n)\ \ \mbox{and}\ \
\Mfw,\auf n,0\zu\models\lexeinp,\quad\mbox{for all $i<N,\ n<\omega$}.
\end{equation}
We need to be a bit more careful when defining $\efw_n(\cplus)$. As the formulas 
$\iexein$ permit decrementing 
the \ierror\ counters only at diagonal points, we must be sure that only previously
incremented points get decremented. To this end, for every $i<N$, we let
\begin{equation}
\label{lseqfirst}
\Lambda_i  := \{ k<\omega : \alpha_{k+1}=\fdeci\},\qquad
\Xi_i  := \{ k<\omega : \alpha_{k+1}=\finci\},
\end{equation}
and let
\begin{align}
& \auf\lambda^i_m : m<L_i\zu\mbox{ be the enumeration of $\Lambda_i$ in ascending order, and}\\
\label{lseqlast}
& \auf\xi^i_m : m<K_i\zu\mbox{ be the enumeration of $\Xi_i$ in ascending order,}
\end{align}
for some $L_i,K_i\leq\omega$.
As in a run only non-zero counters can be decremented and our run is reliable, we always have 
$L_i\leq K_i$,
and $\lambda_m^i> \xi_m^i$ for all $m<L_i$.
Then we let
\[
\efw_{n+1}(\cplus):=\left\{
\begin{array}{ll}
\efw_n(\cplus)\cup\{ \lambda_m^i\}, & \mbox{ if }\alpha_{n+1}=\finci,\ n=\xi_m^i,\\
& \hspace*{3.1cm}m<L_i,\\
\efw_n(\cplus)\cup \bigl\{\min\bigl(\omega-\efw_n(\cplus)\bigr)\bigr\}, & \mbox{ if }\alpha_{n+1}=\finci,\ n=\xi_m^i,\ \\
& \hspace*{2.1cm}L_i\leq m<K_i,\\
\efw_n(\cplus)-\{n\},  & \mbox{ if }\alpha_{n+1}=\fdeci,\\
\efw_n(\cplus), & \mbox{ otherwise}.
\end{array}
\right.
\]
We claim that if $\alpha_{n+1}=\fdeci$ then $n\in \efw_n(\cplus)$,
and so $|\efw_{n+1}(\cplus)|=|\efw_n(\cplus)|-1$.
Indeed, 
if $\alpha_{n+1}=\fdeci$ then $n=\lambda_m^i$ for some $m<L_i$.
So $\efw_{\xi_m^i+1}(\cplus)= \efw_{\xi_m^i}(\cplus)\cup\{\lambda_m^i\}$, and
so $n\in \efw_{\xi_m^i+1}(\cplus)$. It follows that $n\in  \efw_{k}(\cplus)$ for 
every $k$ with $\xi_m^i+1\leq k <n+1$, as required.

Now it is not hard to see that $|\efw_{n}(\cplus)|=c_i(n)$ and $\Mfw,\auf n,0\zu\models\iexeinp$, 
for all $i<N$ and $n<\omega$.
Using this and  \eqref{lok}, it is easy to check that
 $\Mfw,\auf\omega,0\zu\models\gridfw\land\mfw$.
\end{proof}

Now Theorem~\ref{t:kundec} follows from Prop.~\ref{p:approx},  Lemmas~\ref{l:run}
and \ref{l:sound}. 


\bigskip
Note that it is easy to generalise the proof to obtain undecidability of $\T\dprod L$
(where $\T$ is the unimodal logic of all reflexive frames),
by using a version of the `tick-' or `chessboard'-trick 
(see e.g.\ \cite{Spaan93,Reynolds&Z01,gkwz05a} for more details):
Take a fresh propositional variable $\tvar$, and define a new `horizontal' modal operator by setting, 
for all formulas $\phi$,
\begin{equation}\label{tickbox}
\Bhr\phi := \bigl(\tvar\to \Bh(\neg\tvar\to\phi)\bigr)\land \bigl(\neg\tvar\to \Bh(\tvar\to\phi)\bigr).
\end{equation}
Then replace each occurrence of $\Bh$ in the formula $\gridfw\land\mfw$ with $\Bhr$, and add the conjunct
\begin{equation}\label{tickformula}
\Bh\bigl((\tvar\leftrightarrow\Bv\tvar)\land(\neg\tvar\leftrightarrow\Bv\neg\tvar)\bigr).
\end{equation}
It is not hard to check that the resulting formula is $\T\dprod L$-satisfiable iff $M$ has an infinite reliable run.

Next, recall $k$-fans from \eqref{fan}, and the frames $\HH_k$ from \eqref{spyframes}.

\begin{theorem}\label{t:kundecfin}
Let $\Ch$ and $\Cv$ be any classes of frames such that 
\begin{itemize}
\item 
either $\Ch$ or $\Cv$ contains only finite frames,
\item
either $\HH_\omega\in\Ch$, or $\HH_k\in\Ch$ for every $k<\omega$,
\item 
either $\Cv$ contains an $\omega$-fan, or
$\Cv$ contains a $k$-fan for every $k<\omega$.
\end{itemize}
Then $\Log(\Ch\dprod \Cv)$ is not recursively enumerable.
\end{theorem}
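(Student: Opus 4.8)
The plan is to reduce the \emph{complement} of the {\sc CM reachability} problem to membership in $\Log(\Ch\dprod\Cv)$. Since {\sc CM reachability} is $\Sigma_1^0$-hard, its complement is $\Pi_1^0$-hard and so not recursively enumerable; a computable reduction to validity therefore gives the claim. Concretely, for each counter machine $M$ with a distinguished terminal state $q_\fin\in H$ I would construct a formula $\sigma_M$ with the property
\[
\sigma_M\text{ is satisfiable in some member of }\Ch\dprod\Cv\quad\Longleftrightarrow\quad M\text{ has a reliable run reaching }q_\fin,
\]
so that $\neg\sigma_M\in\Log(\Ch\dprod\Cv)$ iff $M$ has \emph{no} such run, i.e.\ iff $M$ lies in the (non-r.e.) complement of reachability. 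The map $M\mapsto\neg\sigma_M$ is clearly computable, which suffices.

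The formula $\sigma_M$ reuses the grid-generation and lossy/\ierror\ counting machinery of Theorem~\ref{t:kundec}, with one essential change: instead of forcing the computation to run forever (as the conjunct \eqref{minf} does), I would condition both the vertical grid step \eqref{vgen} and the instruction formula \eqref{mstep} on the current state being \emph{non-terminal}, so that no further column is generated once $q_\fin$ is entered, and add a conjunct (evaluated at the root, along the lines of $\Dh(\delta\land\St_{q_\fin})$) forcing the terminal state to be reached at some generated diagonal column. Thus, just as in Lemma~\ref{l:run}, $\sigma_M$ forces, for a single instruction sequence $\vec\tau$ chosen by the model, a lossy and an \ierror\ $\vec\tau$-run that both halt in $q_\fin$.

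For the $(\Leftarrow)$ soundness direction I would adapt Lemma~\ref{l:sound}: given a reliable $\vec\tau$-run of $M$ of some finite length $\ell$ reaching $q_\fin$ with counter values bounded by $V$, condition~(2) supplies a horizontal frame whose chain has length $\geq\ell$ ($\HH_\omega$, or $\HH_m$ with $m>\ell$), and condition~(3) supplies a fan with at least $\max(\ell,V)$ points (needed both for the counters and for the diagonal points $x_0,\dots,x_\ell$); the model of Lemma~\ref{l:sound}, truncated at step $\ell$, then satisfies $\sigma_M$. For the $(\Rightarrow)$ completeness direction I would run the argument of Lemma~\ref{l:run} inside any model based on a frame $\F_h\dprod\F_v\in\Ch\dprod\Cv$, and here condition~(1) is decisive. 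Since one of $\Wh,\Wv$ is finite, the set $\Wh\cap\Wv$ in which the diagonal points $x_n$ of Claim~\ref{c:grid} must lie is finite; because the horizontal relation of $\HH_k$ (resp.\ $\HH_\omega$) is an \emph{acyclic path} and each $x_n$ is the unique $\Rh$-successor of $x_{n-1}$ (Claim~\ref{c:grid}(iv)), the $x_n$ are pairwise distinct, so the generated chain $x_0,x_1,\dots$ is necessarily finite. Consequently the forced computation cannot proceed indefinitely and must reach the terminal column, yielding finite lossy and \ierror\ $\vec\tau$-runs that both reach $q_\fin$. Applying \textbf{faulty approximation} (Prop.~\ref{p:approx}) with $B=\ell+1$ then produces a reliable $\vec\tau$-run reaching $q_\fin$, so $M$ is reachable, as required.

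The main obstacle I expect is exactly this completeness step: making precise that the finiteness hypothesis forces \emph{termination} rather than looping. On an infinite frame the chain could continue forever and $\sigma_M$ would instead capture \emph{non-termination} — which is why the infinite-frame logics of Theorem~\ref{t:kundec} are merely r.e.-undecidable (cf.\ Theorem~\ref{t:re}), whereas here the finite diagonal caps the length of the computation and thereby turns satisfiability into the $\Sigma_1^0$ property of reachability and validity into its $\Pi_1^0$ complement. Care is also needed to treat the two cases of condition~(1) uniformly: when $\Ch$ is finite the horizontal chain is bounded directly by $|\Wh|$, while when $\Cv$ is finite the bound comes from the $x_n$ being forced to lie in the finite fan; in both cases acyclicity of the $\HH_k$-paths is what rules out the spurious loops that would otherwise let a finite frame encode an infinite run.
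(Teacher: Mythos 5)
Your overall strategy coincides with the paper's: reduce {\sc CM reachability} to $\Ch\dprod\Cv$-satisfiability, so that $M\mapsto\neg\sigma_M$ reduces its $\Pi_1^0$-hard complement to validity, with the finiteness hypothesis doing the work of turning ``non-termination'' into ``reachability''. The soundness direction is also essentially as in the paper. However, there is a genuine gap in your completeness argument, at exactly the point you flag as the main obstacle. You derive the pairwise distinctness of the grid points $x_0,x_1,\dots$ from the fact that ``the horizontal relation of $\HH_k$ is an acyclic path'' together with Claim~\ref{c:grid}(iv). But the completeness direction must hold for \emph{every} frame in $\Ch\dprod\Cv$, and the hypotheses of the theorem only say that $\Ch$ \emph{contains} the frames $\HH_k$ (or $\HH_\omega$) --- they do not say it contains nothing else; for instance $\Ch$ could be the class of \emph{all} finite frames. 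Claim~\ref{c:grid}(iv) (each $x_{n-1}$ has $x_n$ as its unique $\Rh$-successor) does not exclude cycles such as $x_{n+p}=x_n$: iterating a successor function on a finite set is eventually periodic. On a finite frame with such a cycle your formula can be satisfied by a computation that loops forever through non-terminal states without ever reaching $q_{\fin}$, while your reachability conjunct (written as $\Dh(\delta\land\St_{q_{\fin}})$, which in any case would pin $q_{\fin}$ to column $0$, since the only diagonal point in row $r_v$ is $\auf x_0,x_0\zu$) can be discharged by a horizontal successor that is not one of the generated columns. Then $\neg\sigma_M$ fails to be valid although $M$ never reaches $q_{\fin}$, and the reduction breaks.

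The missing ingredient is a \emph{syntactic} forcing of distinctness that works on arbitrary frames. The paper achieves this with the extra conjunct \eqref{infgrid}, namely $\Bh\Uv(p\lor\delta\to\Bh(p\land\neg\delta))$ for a fresh variable $p$: since $\delta$ holds at $\auf x_n,x_n\zu$, the conjunct propagates $p\land\neg\delta$ to $\auf x_m,x_n\zu$ for all $m>n$, whence $x_m\ne x_n$. With distinctness in hand, finiteness of $\Wh$ or $\Wv$ really does cap the chain, and the paper then lets the grid \emph{generation itself} terminate --- replacing \eqref{vgen} by the finitary version $\Bh\Dv(\stop\lor(\Dh\delta\land\Bh\delta))$ for a fresh variable $\stop$, relaxing \eqref{minf} to admit $q_{\fin}$, and adding $\Bh(\Dv\stop\to\St_{q_{\fin}})$ --- so that the column at which generation is forced to stop must, by the formula itself rather than by an external existential conjunct, carry the state $q_{\fin}$. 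Your idea of gating the grid step on the current state being non-terminal is workable in spirit, but without the distinctness conjunct and without a mechanism tying ``the chain stops here'' to ``the state here is $q_{\fin}$'', the finiteness hypothesis does no work and the completeness direction fails.
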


\begin{proof}
We sketch how to modify the proof of Theorem~\ref{t:kundec} to obtain a
 reduction of the `CM reachability' problem to $\Ch\dprod\Cv$-satisfiability.
To begin with, observe that if we add the conjunct
\begin{equation}\label{infgrid}
\Bh\Uv\bigl(p\lor\delta\to\Bh(p\land\neg\delta)\bigr)
\end{equation}
to the formula $\gridfw$ defined in \eqref{hgen}--\eqref{vgen}, then the grid-points $x_n$ generated
in Claim~\ref{c:grid} are all different.
 Now we introduce a fresh propositional variable $\stop$, and let $\gridfwfin$ be the conjunction
 of \eqref{hgen}, \eqref{infgrid} and the following `finitary' version of \eqref{vgen}:
\begin{equation}\label{vgenfin}
\Bh\Dv\bigl(\stop\lor(\Dh\delta\land\Bh\delta)\bigr).
\end{equation}
Given any counter machine $M$ and a state $q_{\fin}$,
let $\mfwfin$ be obtained from $\mfw$ by replacing \eqref{minf} with
\[
 \Bh\bigvee_{q\in(Q-H)\cup \{q_{\fin} \}}\St_q.
 \]
It is not hard to see that  
$\gridfwfin\land\mfwfin\land\Bh(\Dv\stop\to\St_{q_{\fin}})$ is $\Ch\dprod\Cv$-satisfiable iff
 there is a reliable run of $M$ reaching $q_{\fin}$.
\end{proof}

Note that it is also possible to give another proof of Theorem~\ref{t:kundec} by doing everything
`\emph{backwards}'. The conjunction of the following formulas generates a grid backwards
in $\K\dprod L$-frames, and is used in \cite{frocos09} to show that these logics lack
the finite model property w.r.t.\ \emph{any} (not necessarily product) frames:
\begin{align*}
& \Dv\Dh(\delta\land\Bh\bot),\\
& \Bv\bigl(\Dh\delta\to\Dh(\neg\delta\land\Dh\delta\land\Bh\delta)\bigr),\\
& \Bh\Dv\delta.
\end{align*}
Then the conjunction of the following formulas emulates counter machine runs,
again by going backwards along the generated grid:
\begin{align*}
& \Bh\Bigl(\Bh\bot\to\bigl(\St_{\qini}\land\Uv(\neg\cminus\land\neg\cplus)\bigr)\Bigr),\\
& \Bh\bigwedge_{q\in Q-H}\Bigl(\Dh\St_q\to\bigvee_{\auf\alpha,q'\zu\in I_q}\bigl(\St_{q'}\land\lexeibw\land\iexeibw\bigr)\Bigr),\\
& \Bh\bigvee_{q\in Q-H}\St_q,
\end{align*}
where 
\begin{align*}
\lexeibw & :=\ \left\{
\begin{array}{ll}
\displaystyle\incdbw_i\land\bigwedge_{i\ne j<N}\fixdbw_j, & \mbox{ if $\alpha=\finci$},\\
\displaystyle\decdbw_i\land\bigwedge_{i\ne j<N}\fixdbw_j, & \mbox{ if $\alpha=\fdeci$},\\
\displaystyle\Uv\neg\cminus\land\bigwedge_{i\ne j<N}\fixdbw_j, & \mbox{ if $\alpha=\ftest$},\\
\end{array}
\right.\\[3pt]
%
\iexeibw &:=\ \left\{
\begin{array}{ll}
\displaystyle\incubw_i\land\bigwedge_{i\ne j<N}\fixubw_j, & \mbox{ if $\alpha=\finci$},\\
\displaystyle\decubw_i\land\bigwedge_{i\ne j<N}\fixubw_j, & \mbox{ if $\alpha=\fdeci$},\\
\displaystyle\Uv\Bh\neg\cplus\land\bigwedge_{i\ne j<N}\fixubw_j, & \mbox{ if $\alpha=\ftest$},\\
\end{array}
\right.\\[3pt]
\fixdbw_i & : = \  \Uv(\cminus\to\Bh\cminus),\\
\incdbw_i & : = \ \Uv\bigl(\cminus\to (\Bh\cminus\lor\delta)\bigr),\\
\decdbw_i & : =\  \Uv(\cminus\to\Bh\cminus)\land\Ev(\neg\cminus\land\Bh\cminus),\\
\fixubw_i & := \ \Uv(\Bh\cplus\to\cplus),\\
%
\incubw_i & := \ \Uv(\Bh\cplus\to\cplus)\land\Ev(\cplus\land\Bh\neg\cplus),\\
\decubw_i & := \ \Uv\bigl(\Bh\cplus\to (\cplus\lor\delta)\bigr),
\end{align*}
for $i<N$.


\section{Undecidable $\delta$-products with a `linear' component}\label{linprod}

\begin{theorem}\label{t:linundec} 
Let $L_h$ be any Kripke complete logic such that $L_h$ contains $\Kfourt$ and 
$\auf\omega,<\zu$ is a frame for $L_h$.
Let $L_v$ be any Kripke complete logic having an $\omega$-fan among its frames. Then 
$L_h\dprod L_v$  is undecidable.
\end{theorem}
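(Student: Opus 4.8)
The plan is to adapt the reduction used for Theorem~\ref{t:kundec} to the setting where the `horizontal' component is a linear logic $L_h$ containing $\Kfourt$ rather than $\K$. As before, I would reduce the $\Pi_1^0$-hard \textsc{CM non-termination} problem to $L_h\dprod L_v$-satisfiability, forcing a pair of infinite lossy and \ierror\ $\vec{\tau}$-runs for a single sequence of instructions, and then invoke Prop.~\ref{p:approx} to recover a reliable run. The key difference is that $\Rh$ is now transitive (and weakly connected), so $\Bh$ no longer acts as a `horizontal next-time' operator: from a grid-point $x_n$ the box $\Bh$ sees \emph{all} later points $x_{n+1},x_{n+2},\dots$ at once, not just the immediate successor. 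The main conceptual work is therefore to re-engineer the grid generation and the next-time idiom \eqref{nexttime} so that everything still goes through.

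First I would regenerate the grid using the diagonal to pin down a genuine successor relation inside a transitive order. The idea is to exploit the uniqueness of $\delta$: along the row through $x_0$ we place the diagonal points $x_n$, and since $\auf\omega,<\zu$ is a frame for $L_h$ we can arrange the horizontal component to be (isomorphic to) $\omega$ with its strict order, so the $x_n$ form a strictly increasing sequence. To recover a \emph{next-time} operator I would define a derived box that quantifies only over the \emph{immediately} following grid-point rather than all successors; concretely, one identifies `$x_{n+1}$ seen from $x_n$' as the $\Rh$-minimal proper successor carrying $\delta$ in the appropriate column, which is expressible because $L_h\supseteq\Kfourt$ gives weak connectedness and hence a definable immediate-successor pattern via formulas of the shape $\Dh(\delta\land\Bh\neg\delta)$ or the standard $\Kfourt$ `strict-next' encoding. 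Once I have such a derived operator playing the role of the old $\Bh$, I would rewrite $\gridfw$, the counting formulas $\fixd_i,\incd_i,\dots,\decu_i$, and the run-emulation formula $\mfw$ with this operator substituted for $\Bh$, and re-verify the analogues of Claim~\ref{c:grid}, Claim~\ref{c:counting}, and Lemma~\ref{l:run}.

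For soundness (the analogue of Lemma~\ref{l:sound}) I would build the satisfying model over $\auf\omega,<\zu\dprod\F$ for an $\omega$-fan $\F$, placing the state-markers $\St_q$ and the counter-contents $\cminus,\cplus$ exactly as in the $\K$-case, using the same bookkeeping \eqref{lseqfirst}--\eqref{lseqlast} to guarantee that only previously incremented diagonal points of the \ierror\ counter ever get decremented. The transitivity of $<$ does not interfere here because the counting predicates are read off a single column at each time-step, and the derived next-time operator selects the correct adjacent column. I expect the completeness direction to transfer almost verbatim once the derived operator is in place, since the faulty-approximation argument of Prop.~\ref{p:approx} is entirely independent of the component logics.

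The main obstacle will be step two: simulating a reliable `next-time' step when $\Bh$ is transitive. Under transitivity a naive box leaks information from all future grid-points into the current counting formulas, so the increment/decrement encodings can be corrupted by far-away columns. The crux is to produce a \emph{robust} definable successor idiom that (i) genuinely isolates $x_{n+1}$ from $x_n$, (ii) survives the presence of the extra $L_h$-points lying strictly between consecutive diagonal points (there may be many, since $L_h$ need only \emph{contain} $\Kfourt$ and have $\auf\omega,<\zu$ as a frame, not be $\auf\omega,<\zu$ itself), and (iii) still lets the counting formulas quantify over the correct column only. Handling the intermediate non-diagonal points cleanly --- ensuring they carry no spurious $\cminus/\cplus$ or $\St_q$ marks that could be mistaken for grid data --- is where the careful formula engineering (and the genuine use of the $\delta$-uniqueness conditions \eqref{unique1}--\eqref{unique2}) will be concentrated.
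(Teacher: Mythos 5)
There is a genuine gap at exactly the point you identify as "the crux". Your plan is to build a derived horizontal \emph{next-time} operator and then substitute it for $\Bh$ in the formulas of the $\K$-case, keeping the single-variable counter encoding via $\cminus,\cplus$. The mechanism you propose --- picking out $x_{n+1}$ from $x_n$ as the $\Rh$-minimal proper successor carrying $\delta$ ``in the appropriate column'', via patterns like $\Dh(\delta\land\Bh\neg\delta)$ --- cannot work for the counting formulas, because those formulas must compare truth values at $\auf x_n,w\zu$ and $\auf x_{n+1},w\zu$ for \emph{every} $w$ in the vertical fan, and in a column $w\ne x_{n+1}$ the point $\auf x_{n+1},w\zu$ does not satisfy $\delta$: by \eqref{unique1}--\eqref{unique2} the diagonal meets column $w$ only at $\auf w,w\zu$. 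So $\delta$ lets you locate the next grid point only on the diagonal itself, not in an arbitrary column, and there is no uniformly definable ``immediate successor'' box over a transitive, weakly connected $\Rh$ that you could plug into $\fixd_i,\incd_i,\dots$ in place of the genuine next-time idiom \eqref{nexttime}.

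The paper resolves this not by defining a successor operator but by changing the \emph{counter encoding}: each counter gets two variables per run type ($\linminus,\loutminus$ and $\linplus,\loutplus$), the formula $\couf$ forces all of them to be persistent ($\pvar\to\Bh\pvar$) along $\Rh$, and the counter content at time $n$ is the set of points satisfying $\linminus\land\neg\loutminus$ (respectively $\linplus\land\neg\loutplus$). For persistent $\psi$ the weak-successor property of Claim~\ref{c:lingrid}(iii) yields the restricted equivalence \eqref{linnexttime}, $\M,\auf x_n,w\zu\models\Bh\psi$ iff $\M,\auf x_{n+1},w\zu\models\psi$, which is all that is needed; decrements are realised not by switching a variable off (impossible for a persistent variable) but by switching the corresponding ``Out'' variable on. Your single-variable encoding, by contrast, requires $\cminus$ to become false at a later column, which destroys the monotonicity that makes the transitive $\Bh$ usable, so Claims~\ref{c:counting}(i)--(vi) would not transfer. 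The grid-generation and soundness parts of your sketch are essentially on the right track (the paper uses $\lingridfw=\delta\land\Uh\Dv(\Dh\delta\land\Bh\Bh\neg\delta)$ and builds the soundness model over $\auf\omega,<\zu\dprod\F$ much as you describe), but without the In/Out re-encoding the reduction does not go through.
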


\begin{corollary}\label{co:linundec}
$\Kfourt\dprod\Sfive$ and $\Kfourt\dprod\K$ are both undecidable.
\end{corollary}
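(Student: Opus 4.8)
The plan is to adapt the reduction behind Theorem~\ref{t:kundec} to the setting where the horizontal component is a strict linear order rather than a $\K$-frame. The overall architecture stays the same: I would reduce \textsc{CM non-termination} to $L_h\dprod L_v$-satisfiability by generating an $\omega\times\omega$-grid, using the diagonal $\delta$ to synchronise a lossy and an \ierror\ run of the same instruction sequence $\vec\tau$, and then invoke Prop.~\ref{p:approx} to extract a reliable run. The soundness direction would satisfy the encoding formula in a model over $\auf\omega,<\zu\dprod\F$ for an $\omega$-fan $\F$, which is legitimate since $\auf\omega,<\zu\in\Fr L_h$ and $L_v$ has an $\omega$-fan by hypothesis. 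The counting machinery (Claim~\ref{c:counting}) and the run-emulation argument (Lemma~\ref{l:run}) can be reused almost verbatim once the grid is in place, because they only rely on the `horizontal next-time' behaviour captured by \eqref{nexttime}.

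The first genuine difference is grid generation. In the $\K$-case, Claim~\ref{c:grid} uses that $\Bh$ can pin down a \emph{unique} immediate horizontal successor via $\Dh\delta\land\Bh\delta$; over a transitive linear order $\Bh$ sees \emph{all} later points, so $\Bh$ no longer acts as `next-time'. First I would replace the single diagonal-successor idea with a scheme that distinguishes the \emph{immediate} successor along the linear order from the rest. The natural device is to use $\delta$ together with transitivity and linearity (which $\Kfourt$ guarantees) to force, at each horizontal level $x_n$, a unique `next' diagonal point $x_{n+1}$ such that $x_{n+1}$ is $\Rh$-accessible, carries $\delta$ on the appropriate vertical line, and is the $<$-least such point above $x_n$. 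Linearity is exactly what makes `least successor carrying a marker' definable: one asserts that the marked successor has no marked predecessor strictly between it and $x_n$. I would accordingly redefine \eqref{nexttime} so that the role of $\Bh$ is played by a derived `next on the $\delta$-chain' modality, and re-prove the grid claim by induction using $\auf\omega,<\zu$ to supply the infinite ascending chain of distinct points.

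The technically delicate point, and the one I expect to be the main obstacle, is re-establishing the counting equivalences of Claim~\ref{c:counting} once $\Bh$ is transitive. The formulas $\fixd_i,\incd_i,\decd_i$ and their $\irun$-duals currently read $\Bh$ as a step to the single next column, so in a linear order they would quantify over the entire future rather than over the adjacent level. I would therefore rewrite every occurrence of $\Bh\,(\cdot)$ in the counting and step formulas as `holds at the immediate $\delta$-successor level', expressed through the derived next-modality from the previous paragraph, and then check that each containment $\OnSetd{i}(n+1)\subseteq\OnSetd{i}(n)\cup\{x_n\}$ etc.\ still goes through. The subtlety is that a vertical point $w$ now lies on an entire linearly ordered horizontal line, so I must verify that the derived modality genuinely isolates the single successor column and does not leak information from further-future columns; the $\emph{uniqueness}$ of the diagonal on each vertical line is what rules out such leakage, and making this precise is where the real work lies.

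Once grid generation and counting are repaired, the remaining pieces transfer directly: the run-emulation lemma (the analogue of Lemma~\ref{l:run}) yields a lossy/\ierror\ pair from any satisfying model, and the soundness lemma (the analogue of Lemma~\ref{l:sound}) builds a model over $\auf\omega,<\zu\dprod\F$ from an infinite reliable run, reusing the explicit valuations $\efw_n(\cminus),\efw_n(\cplus)$ with the bookkeeping sequences $\Lambda_i,\Xi_i$ of \eqref{lseqfirst}--\eqref{lseqlast} unchanged. Combining these with Prop.~\ref{p:approx} gives the undecidability of $L_h\dprod L_v$, and Corollary~\ref{co:linundec} then follows by noting that $\Kfourt$ and $\Log\auf\omega,<\zu$ both satisfy the hypotheses on $L_h$, while $\Sfive$ and $\K$ both possess $\omega$-fans and so qualify as $L_v$.
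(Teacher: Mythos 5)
Your overall architecture (reduce CM non-termination, force a lossy and an insertion-error run of the same instruction sequence, apply Prop.~\ref{p:approx}) matches the paper, and your grid-generation idea is essentially the paper's: the formula $\lingridfw$ uses $\Dh\delta\land\Bh\Bh\neg\delta$ together with transitivity and weak connectedness to pin down the immediate diagonal successor, exactly the ``least marked successor'' device you describe. But there is a genuine gap in your treatment of the counting machinery, and it is precisely at the point you yourself flag as ``where the real work lies.'' Your plan is to replace each $\Bh$ in the counting formulas by a derived ``holds at the immediate $\delta$-successor column'' modality. This cannot work as stated: to evaluate a counter variable you must compare its truth values at $\auf x_n,w\zu$ and $\auf x_{n+1},w\zu$ for an \emph{arbitrary} vertical point $w$, and on the horizontal line through such a $w$ the only diagonal point is $\auf w,w\zu$. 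The uniqueness of $\delta$ therefore gives you no marker distinguishing the column $x_{n+1}$ from the columns $x_m$ with $m>n+1$ on that line, so no formula built from the transitive $\Bh$ and $\delta$ isolates the successor column there; trying to introduce a fresh marker for ``column $n+1$'' is circular, since forcing it to sit exactly one column ahead on every horizontal line already requires a next-time operator.

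The paper's actual solution is a different idea that your proposal is missing. It never defines a next-time operator for arbitrary formulas; instead it observes (\eqref{linnexttime}) that for \emph{persistent} formulas $\psi$ (those satisfying $\psi\to\Bh\psi$ along the grid), $\Bh\psi$ at column $n$ is equivalent to $\psi$ at column $n+1$. It then doubles the counter variables: each counter in each run is represented by a pair $\linminus,\loutminus$ (resp.\ $\linplus,\loutplus$), both forced monotone by the formula $\couf$, and the counter's content at time $n$ is the set of points where $\linminus\land\neg\loutminus$ holds --- a difference of two monotone sets, so that increments and decrements are both realised by \emph{adding} points to one of the monotone sets. Consequently Claim~\ref{c:counting} and Lemma~\ref{l:run} are not reusable ``almost verbatim'' as you assert; they are replaced by Claim~\ref{c:lincounting} and Lemma~\ref{l:linrun} with reworked formulas (including a different zero-test, e.g.\ $\Uv(\Bh\linminus\to\Bh\loutminus)$), states are read off at the diagonal points $\auf x_n,x_n\zu$ rather than at $\auf x_n,x_0\zu$, and the soundness model needs four monotone valuations per counter rather than your two. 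Without the In/Out representation your reduction does not go through.
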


We prove Theorem~\ref{t:linundec} 
by reducing the `CM non-termination' problem to $L_h\dprod L_v$-satis\-fiability.
Let $\M$ be a model based on the $\delta$-product of a frame $\F_h=\auf \Wh,\Rh\zu$ for $L_h$
(so $\Rh$ is transitive and weakly connected%
\footnote{A relation $R$ is called \emph{weakly connected\/} if
$\forall x,y,z\,\bigl(xRy\land xR z\to (\mbox{$y=z$}\lor yRz\lor zRy)\bigr)$.}%
), and
some frame $\F_v=\auf \Wv,\Rv\zu$ for $L_v$. First, we again generate an $\omega\times\omega$-grid in $\M$.
Let 
\[
\lingridfw:= \ \delta\land\Uh\Dv(\Dh\delta\land\Bh\Bh\neg\delta).
\]

\begin{claim}\label{c:lingrid} {\bf (grid generation)}\\
If $\M,\auf r_h,r_v\zu\models\lingridfw$ then there exist points
$\auf x_n\in \Wh\cap \Wv : n <\omega \zu$ such that,
for all $n<\omega$,
\begin{itemize}
\item[{\rm (i)}]  
$x_0=r_v$, and if $n>0$ then $x_0 \Rv x_n$,
\item[{\rm (ii)}]
if $n>0$ then $\M,\auf x_{n-1},x_n\zu\models\Dh\delta\land\Bh\Bh\neg\delta$,
\item[{\rm (iii)}]  
if $n>0$ then, for every $z$,  $x_{n-1}\Rh z$ implies that $z=x_n$ or $x_n\Rh z$,
\item[{\rm (iv)}] 
$x_0=r_h$ and $x_m \Rh x_n$ for all $m<n$.
\end{itemize}
\end{claim}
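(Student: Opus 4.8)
The plan is to construct the points $x_n$ by induction on $n$, in the spirit of Claim~\ref{c:grid}, but now exploiting the transitivity and weak connectedness of $\Rh$ in place of the `horizontal next-time' behaviour available in the $\K$-case. The base point is forced by the conjunct $\delta$ of $\lingridfw$: since $\M,\auf r_h,r_v\zu\models\delta$ we have $\auf r_h,r_v\zu\in\id$, whence $r_h=r_v\in\Wh\cap\Wv$. I set $x_0:=r_h=r_v$, which at once yields the initial parts of (i) and (iv).

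For the inductive step, suppose $x_0,\dots,x_n$ have been constructed. First I note that $\auf x_n,x_0\zu$ lies in the horizontal scope of the $\Uh$-conjunct of $\lingridfw$; recall that $\Uh\chi$ abbreviates $\chi\land\Bh\chi$, so this conjunct guarantees $\Dv(\Dh\delta\land\Bh\Bh\neg\delta)$ both at $\auf r_h,r_v\zu$ and at every $\auf x',r_v\zu$ with $r_h\Rh x'$. For $n=0$ the relevant point is $\auf x_0,x_0\zu=\auf r_h,r_v\zu$ itself, and for $n>0$ hypothesis (iv) gives $x_0\Rh x_n$, so $\auf x_n,x_0\zu$ is covered; in either case $\M,\auf x_n,x_0\zu\models\Dv(\Dh\delta\land\Bh\Bh\neg\delta)$. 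Since the $\Dv$-step keeps the horizontal coordinate fixed, there is a point $x_{n+1}$ with $x_0\Rv x_{n+1}$ and $\M,\auf x_n,x_{n+1}\zu\models\Dh\delta\land\Bh\Bh\neg\delta$; this secures (i) and (ii). The conjunct $\Dh\delta$ then pins down the horizontal witness via the shape of $\id$: the only column coordinate $x'$ with $\auf x',x_{n+1}\zu\in\id$ is $x'=x_{n+1}$, which moreover forces $x_{n+1}\in\Wh\cap\Wv$; hence $x_n\Rh x_{n+1}$. Property (iv) for $n+1$ now follows by transitivity of $\Rh$: combining $x_m\Rh x_n$ (IH) with $x_n\Rh x_{n+1}$ gives $x_m\Rh x_{n+1}$ for every $m\le n$.

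The crux --- and the step I expect to be the main obstacle --- is the `immediate successor' property (iii), namely that every $\Rh$-successor of $x_n$ either equals $x_{n+1}$ or lies $\Rh$-above it. Here the conjunct $\Bh\Bh\neg\delta$ is essential. Suppose some $z$ satisfies $x_n\Rh z$ but neither $z=x_{n+1}$ nor $x_{n+1}\Rh z$. Weak connectedness of $\Rh$, applied to $x_n\Rh z$ and $x_n\Rh x_{n+1}$, then forces $z\Rh x_{n+1}$, yielding a two-step horizontal path $x_n\Rh z\Rh x_{n+1}$. Instantiating $\Bh\Bh\neg\delta$ at $\auf x_n,x_{n+1}\zu$ along this path gives $\auf x_{n+1},x_{n+1}\zu\notin\id$, contradicting $x_{n+1}\in\Wh\cap\Wv$. (The same instantiation with $z=x_{n+1}$ rules out $x_{n+1}\Rh x_{n+1}$, so the witness is automatically irreflexive, and the three cases of weak connectedness collapse onto the two disjuncts of (iii).) This closes the induction and proves the claim.
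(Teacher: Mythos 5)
Your proof is correct and follows essentially the same route as the paper's: induction on $n$, with $x_0:=r_h=r_v$ forced by the $\delta$-conjunct, the $\Uh\Dv$-conjunct supplying each new $x_{n+1}$, the uniqueness of the diagonal pinning down the $\Dh\delta$-witness, and weak connectedness plus transitivity giving (iii) and (iv). The only difference is one of exposition: you make explicit that $\Bh\Bh\neg\delta$ is what excludes the case $z\Rh x_{n+1}$ (and reflexivity of $x_{n+1}$) in the weak-connectedness case split, a point the paper's proof leaves implicit.
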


\begin{proof}
By induction on $n$. Let $x_0:= r_h$. As $\M,\auf r_h,r_v\zu\models\delta$, we have $r_h=r_v$.
Now suppose inductively
that we have $\auf x_k : k<n\zu$ satisfying (i)--(iv) for some $0<n<\omega$.
Then there is $x_n\in \Wv$ such that $x_0\Rv x_n$ and $\M,\auf x_{n-1},x_n\zu\models\Dh\delta\land\Bh\Bh\neg\delta$.
Therefore, $x_n\in \Wh$, $x_{n-1}\Rh x_n$, and 
for every $z$,  $x_{n-1}\Rh z$ implies that $z=x_n$ or $x_n\Rh z$, by the
weak connectedness of $\Rh$.
So by the IH and the transitivity of $\Rh$,
we have $x_m \Rh x_n$ for all $m<n$.
\end{proof}

Next,
given any counter machine $M$, we will again force both an infinite lossy and an infinite 
\ierror\ $\vec{\tau}$-run, for the same sequence $\vec{\tau}$ of instructions. 
As $\Rh$ is transitive, we do not have a general `horizontal next-time' operator in our grid,
like we had in \eqref{nexttime}. However, because of Claim~\ref{c:lingrid}(iii) and (iv), we still
can have the following: For any formula $\psi$ and any $w\in W_v$, 
\begin{multline}\label{linnexttime}
\mbox{if $\psi$ is such that $\M,\auf x_{n+1},w\zu\models\psi\to\Bh\psi$, then}\\
\M,\auf x_n,w\zu\models\Bh\psi
\qquad\mbox{iff}\qquad
\M,\auf x_{n+1},w\zu\models\psi,\quad\mbox{for all $n<\omega$}.
\end{multline}
In order to utilise this,
 for each counter $i<N$ of $M$, we introduce two pairs 
of propositional variables: $\linminus$, $\loutminus$ for emulating lossy behaviour, and 
$\linplus$, $\loutplus$ for emulating \ierror\
behaviour. The following formula ensures that the condition in \eqref{linnexttime} hold for each of these
variables, at all the relevant points in $\M$:
\begin{multline*}
 \couf:=\ 
\bigwedge_{i<N}\Uh\Uv\bigl((\linminus\to\Bh\linminus)\land
(\loutminus\to\Bh\loutminus)\\
\land(\linplus\to\Bh\linplus)\land
(\loutplus\to\Bh\loutplus)\bigr).
\end{multline*}
At each moment $n$ of time, the actual
content of counter $c_i$ during the lossy run will be represented by the set of points
\[
\linOnSetd(n):=\{w\in \Wv : x_0 \Rvr w\mbox{ and }\M,\auf x_n,w\zu\models
\linminus\land\neg\loutminus\},
\]
and during the \ierror\ run by the set of points
\[
\linOnSetu(n):=\{w\in \Wv : x_0 \Rvr w\mbox{ and }\M,\auf x_n,w\zu\models
\linplus\land\neg\loutplus\}.
\]
For each $i<N$, the following formulas force the possible changes in the counters during the
lossy and \ierror\ runs, respectively:
\begin{align*}
\linfixd_i & : = \ \Uv(\Bh\linminus\to\linminus),\\
\linincd_i & : =\  \Uv\bigl(\Bh\linminus\to (\linminus\lor\delta)\bigr),\\
\lindecd_i & : = \ \Uv(\Bh\linminus\to\linminus)\land\Ev(\linminus\land\neg\loutminus\land\Bh\loutminus),
\end{align*} 
and
\begin{align*}
\linfixu_i & :=\  \Uv(\Bh\loutplus\to\loutplus),\\
\linincu_i & :=\  \Uv(\Bh\loutplus\to\loutplus)\land\Ev(\neg\linplus\land\neg\loutplus\land\Bh\linplus),\\
\lindecu_i & :=\  \Uv\bigl(\Bh\loutplus\to (\loutplus\lor\delta)\bigr).
\end{align*}
\begin{claim}\label{c:lincounting} {\bf (lossy and \ierror\ counting)}\\
Suppose that $\M,\auf r_h,r_v\zu\models\lingridfw\land\couf$. Then for all $n<\omega$, $i<N$:
\begin{itemize}
\item[{\rm (i)}] 
If $\M,\auf x_n,x_0\zu\models\linfixd_i$ then $\linOnSetd(n+1)\subseteq \linOnSetd(n)$.
\item[{\rm (ii)}] 
If $\M,\auf x_n,x_0\zu\models\linincd_i$ then $\linOnSetd(n+1)\subseteq \linOnSetd(n)\cup \{x_n\}$.
\item[{\rm (iii)}] 
If $\M,\auf x_n,x_0\zu\models\lindecd_i$ then $\linOnSetd(n+1)\!\subseteq \linOnSetd(n)- \{z\}$ for 
some $z\in \linOnSetd(n)$.
\item[{\rm (iv)}] 
If $\M,\auf x_n,x_0\zu\models\linfixu_i$ then $\linOnSetu(n+1)\supseteq \linOnSetu(n)$.
\item[{\rm (v)}] 
If $\M,\auf x_n,x_0\zu\models\linincu_i$ then there is $z$ such that $x_0 \Rvr z$, 
$z\notin \linOnSetu(n)$, and
$\linOnSetu(n+1)\supseteq \linOnSetu(n)\cup \{z\}$.
\item[{\rm (vi)}] 
If $\M,\auf x_n,x_0\zu\models\lindecu_i$ then $\linOnSetu(n+1)\supseteq \linOnSetu(n)- \{x_n\}$.
\end{itemize}
\end{claim}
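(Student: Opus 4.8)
The plan is to follow the proof of Claim~\ref{c:counting} almost verbatim, the only structural change being that the clean biconditional \eqref{nexttime} of the $\K$-case is no longer available (since $\Rh$ is merely transitive) and must be replaced everywhere by \eqref{linnexttime}. The whole point of the auxiliary formula $\couf$ is to supply, uniformly at every relevant point and for each of the four variables $\linminus,\loutminus,\linplus,\loutplus$, exactly the monotonicity hypothesis ``$\psi\to\Bh\psi$'' that \eqref{linnexttime} requires; so once $\M,\auf r_h,r_v\zu\models\lingridfw\land\couf$ is assumed (giving the grid points $x_n$ of Claim~\ref{c:lingrid}), each of these variables behaves at the $x_n$ as if $\Bh$ were a genuine next-step operator. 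As in Claim~\ref{c:counting}, I would prove two representative items in full --- the lossy increment (ii) and the \ierror\ increment (v) --- and leave the analogous items (i), (iii), (iv), (vi) to the reader.

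For (ii), assuming $\M,\auf x_n,x_0\zu\models\linincd_i$, I would take $w\in\linOnSetd(n+1)$, so $x_0\Rvr w$ and $\M,\auf x_{n+1},w\zu\models\linminus\land\neg\loutminus$. From $\linminus$ at $\auf x_{n+1},w\zu$ and the $\couf$-clause $\linminus\to\Bh\linminus$, \eqref{linnexttime} (with $\psi=\linminus$) yields $\M,\auf x_n,w\zu\models\Bh\linminus$; since $x_0\Rvr w$, the $\Uv$-formula $\linincd_i$ also holds at $\auf x_n,w\zu$ and gives $\M,\auf x_n,w\zu\models\linminus\lor\delta$. In the $\delta$-disjunct the uniqueness of the diagonal forces $w=x_n$; in the $\linminus$-disjunct I still need $\neg\loutminus$ at $\auf x_n,w\zu$, which follows because $\loutminus$ there would, via $\couf$ and \eqref{linnexttime}, propagate $\loutminus$ to $\auf x_{n+1},w\zu$, contradicting $w\in\linOnSetd(n+1)$. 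Hence $\linOnSetd(n+1)\subseteq\linOnSetd(n)\cup\{x_n\}$.

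For (v), assuming $\M,\auf x_n,x_0\zu\models\linincu_i$, the $\Ev$-conjunct produces a witness $z$ with $x_0\Rvr z$ and $\M,\auf x_n,z\zu\models\neg\linplus\land\neg\loutplus\land\Bh\linplus$; the first conjunct gives $z\notin\linOnSetu(n)$. Applying \eqref{linnexttime} to $\Bh\linplus$ sends $\linplus$ to $\auf x_{n+1},z\zu$, while $\neg\loutplus$ at $\auf x_n,z\zu$ together with the $\linfixu_i$-clause $\Bh\loutplus\to\loutplus$ gives $\neg\Bh\loutplus$ there, whence \eqref{linnexttime} yields $\neg\loutplus$ at $\auf x_{n+1},z\zu$; so $z\in\linOnSetu(n+1)$. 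The same two transfers applied to any $w\in\linOnSetu(n)$ show $w\in\linOnSetu(n+1)$, giving $\linOnSetu(n+1)\supseteq\linOnSetu(n)\cup\{z\}$.

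The main obstacle --- and the reason the ``linear'' encoding needs the extra $\loutminus,\loutplus$ variables that the $\K$-encoding did not --- is precisely this management of truth-value transfer across the transitive relation $\Rh$: a single variable cannot simultaneously be switched on and off along $\Rh$, so membership in a counter is recorded as $\linminus\land\neg\loutminus$ (resp.\ $\linplus\land\neg\loutplus$), with both components forced monotone by $\couf$. The delicate points are therefore exactly the $\loutminus/\loutplus$ inferences --- deducing $\neg\loutminus$ at column $n$ from its failure at column $n+1$, and $\neg\loutplus$ at column $n+1$ from its failure at column $n$ --- which rely on combining the monotonicity clauses of $\couf$ with the genuinely bidirectional equivalence \eqref{linnexttime}; every other step is the direct analogue of the corresponding line in Claim~\ref{c:counting}.
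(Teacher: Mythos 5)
Your proposal is correct and follows essentially the same route as the paper's own proof: both rely on $\couf$ together with Claim~\ref{c:lingrid}(iv) and \eqref{linnexttime} to transfer the truth values of $\linminus,\loutminus,\linplus,\loutplus$ between columns $n$ and $n+1$, and on the $\mathsf{In}/\mathsf{Out}$ encoding of counter membership. The only (cosmetic) difference is which two representative items are worked out in detail --- you choose the increment cases (ii) and (v), mirroring Claim~\ref{c:counting}, whereas the paper details the decrement cases (iii) and (vi); the omitted items are handled identically in both.
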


\begin{proof}
We show items (iii) and (vi). The proofs of the other items are similar and left to the reader.

(iii): By $\lindecd_i$, there is $z$ such that $x_0 \Rvr z$ and
\[
\M,\auf x_n,z\zu\models\linminus\land\neg\loutminus\land\Bh\loutminus.
\]
So $z\in \linOnSetd(n)$. Also, by Claim~\ref{c:lingrid}(iv), 
\begin{equation}\label{zout}
\M,\auf x_{n+1},z\zu\models\loutminus.
\end{equation}
Now suppose $w\in \linOnSetd(n+1)$.
Then $x_0 \Rvr w$ and
$\M,\auf x_{n+1},w\zu\models\linminus\land\neg\loutminus$.
Then $\M,\auf x_{n},w\zu\models\neg\loutminus$ by $\couf$ and Claim~\ref{c:lingrid}(iv),
and $\M,\auf x_{n},w\zu\models\Bh\linminus$ by $\couf$ and \eqref{linnexttime}.
So we have $\M,\auf x_{n},w\zu\models\linminus$ by $\lindecd_i$, and so 
$w\in \linOnSetd(n)$. Finally, $w\ne z$ by \eqref{zout}.

(vi): Suppose that $w\in \linOnSetu(n)- \{x_n\}$. Then $x_0 \Rvr w$ and
$\M,\auf x_n,w\zu\models\linplus\land\neg\loutplus\land\neg\delta$.
Then $\M,\auf x_{n+1},w\zu\models\linplus$ by $\couf$ and Claim~\ref{c:lingrid}(iv),
and $\M,\auf x_{n},w\zu\models\neg\Bh\loutplus$ by $\lindecu_i$.
Therefore, $\M,\auf x_{n+1},w\zu\models\neg\loutplus$ by $\couf$ and \eqref{linnexttime},
 and so we have $w\in \linOnSetu(n+1)$.
\end{proof}

For each $\alpha\in\textit{Op}_C$, we define 
\[
\linlexei:=\ \left\{
\begin{array}{ll}
\displaystyle\linincd_i\land\bigwedge_{i\ne j<N}\linfixd_j, & \mbox{ if $\alpha=\finci$},\\
\displaystyle\lindecd_i\land\bigwedge_{i\ne j<N}\linfixd_j, & \mbox{ if $\alpha=\fdeci$},\\
\displaystyle\Uv(\Bh\linminus\to\Bh\loutminus)\land\bigwedge_{i\ne j<N}\linfixd_j, & \mbox{ if $\alpha=\ftest$},\\
\end{array}
\right.
\]
and
\[
\liniexei:=\ \left\{
\begin{array}{ll}
\displaystyle\linincu_i\land\bigwedge_{i\ne j<N}\linfixu_j, & \mbox{ if $\alpha=\finci$},\\
\displaystyle\lindecu_i\land\bigwedge_{i\ne j<N}\linfixu_j, & \mbox{ if $\alpha=\fdeci$},\\
\displaystyle\Uv(\linplus\to\loutplus)\land\bigwedge_{i\ne j<N}\linfixu_j, & \mbox{ if $\alpha=\ftest$}.\\
\end{array}
\right.
\]
For each state $q\in Q$, we introduce a fresh propositional variable $\mathsf{S}_q$, and define 
the formula $\St_q$ as in \eqref{uniq}.
Let $\linmfw$ be the conjunction of $\couf$ and the following formulas:
\begin{align}
\label{linmini}
& \St_{\qini}\land\Uv(\neg\linminus\land\neg\loutminus\land\neg\linplus\land\neg\loutplus),\\
\nonumber
& \Uh\bigwedge_{q\in Q-H}\Bigl[\Ev\St_q\to
\bigvee_{\auf\alpha,q'\zu\in I_q}\Bigl(\linlexei\land\liniexei\,\land\\
\label{linmstep}
& \hspace*{4.8cm}
\Uv\bigl(\Dh\delta\land\Bh\Bh\neg\delta\to\Bh(\delta\to\St_{q'})\bigr)\Bigr)\Bigr],\\
\label{linminf}
& \Uh\Uv\bigl(\delta\to\bigvee_{q\in Q-H}\St_q\bigr).
\end{align}

\begin{lemma}\label{l:linrun} {\bf (lossy and \ierror\ run-emulation)}\\
Suppose that $\M,\auf r_h,r_v\zu\models\lingridfw\land\linmfw$. Let
$q_0:=\qini$, and for all $i<N$, $n<\omega$, let $c_i^\lrun(n):=|\linOnSetd(n)|$ and 
\[
c_i^\irun(n):=\left\{
\begin{array}{ll}
c_i^\irun(n-1)+1, & \mbox{if $\linOnSetu(n)$ is infinite,}\\[3pt]
|\linOnSetu(n)|,  & \mbox{otherwise}.
\end{array}
\right.
\]
Then there exists an infinite sequence
$\vec{\tau}=\bigl\auf\auf\alpha_n,q_n\zu: 0<n<\omega\bigr\zu$ of instructions 
such that
\begin{itemize}
\item
$\bigl\auf\auf q_n,\vec{c}^{\;\lrun}(n)\zu : n<\omega\bigr\zu$ is a lossy $\vec{\tau}$-run of $M$, and
\item
$\bigl\auf\auf q_n,\vec{c}^{\;\irun}(n)\zu : n<\omega\bigr\zu$ is an \ierror\ $\vec{\tau}$-run of $M$.
\end{itemize}
\end{lemma}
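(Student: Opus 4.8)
The plan is to mirror the proof of Lemma~\ref{l:run}: I would build $\vec\tau=\bigl\auf\auf\alpha_n,q_n\zu:0<n<\omega\bigr\zu$ by induction on $n$, maintaining for every $0<n<\omega$ the invariant that $q_n\in Q-H$, that the state marker sits on the diagonal, $\M,\auf x_n,x_n\zu\models\St_{q_n}$, and that both $\auf q_{n-1},\vec{c}^{\;\lrun}(n-1)\zu\lstepin\auf q_n,\vec{c}^{\;\lrun}(n)\zu$ and $\auf q_{n-1},\vec{c}^{\;\irun}(n-1)\zu\istepin\auf q_n,\vec{c}^{\;\irun}(n)\zu$ hold. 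The essential new feature, compared with the $\K$-component case, is that $\Rh$ is transitive, so $\Bh$ no longer behaves as a `horizontal next-time' operator and a state marker cannot simply be copied along a row. I would instead keep each $\St_{q_n}$ on the diagonal point $\auf x_n,x_n\zu$, detect it from the column point $\auf x_n,x_0\zu$ through $\Ev$, and reinstate the successor marker using the diagonal-transfer conjunct of \eqref{linmstep}.

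For the base case, \eqref{linmini}---evaluated at the root $\auf r_h,r_v\zu=\auf x_0,x_0\zu$---yields $\M,\auf x_0,x_0\zu\models\St_{\qini}$ and, via its $\Uv$-conjunct, $\linOnSetd(0)=\linOnSetu(0)=\emptyset$, so $\vec{c}^{\;\lrun}(0)=\vec{c}^{\;\irun}(0)=\vec 0$. For the step, from $\M,\auf x_{n-1},x_{n-1}\zu\models\St_{q_{n-1}}$ and $x_0\Rvr x_{n-1}$ (Claim~\ref{c:lingrid}(i)) I get $\M,\auf x_{n-1},x_0\zu\models\Ev\St_{q_{n-1}}$; since $\auf x_{n-1},x_0\zu$ is in the $\Uh$-scope of \eqref{linmstep} (Claim~\ref{c:lingrid}(iv)) and $q_{n-1}\in Q-H$, there is $\auf\alpha_n,q_n\zu\in I_{q_{n-1}}$ with $\M,\auf x_{n-1},x_0\zu\models\linlexein\land\liniexein\land\Uv\bigl(\Dh\delta\land\Bh\Bh\neg\delta\to\Bh(\delta\to\St_{q_n})\bigr)$. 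Reading the last conjunct at the vertical successor $\auf x_{n-1},x_n\zu$, which satisfies $\Dh\delta\land\Bh\Bh\neg\delta$ by Claim~\ref{c:lingrid}(ii), and using $x_{n-1}\Rh x_n$ together with $\auf x_n,x_n\zu\in\id$, I would conclude $\M,\auf x_n,x_n\zu\models\St_{q_n}$, restoring the invariant; that $q_n\in Q-H$ (including $q_0=\qini$) then follows from \eqref{linminf} and the uniqueness built into $\St$.

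It remains to check the two faulty steps. For the lossy side I would plug $\M,\auf x_{n-1},x_0\zu\models\linlexein$ into Claim~\ref{c:lincounting}(i)--(iii) to bound $|\linOnSetd(n)|$ against $|\linOnSetd(n-1)|$ as the lossy semantics requires, noting that these sets stay finite since they grow by at most one point per step; the case $\alpha_n=\ftest$ gives $\linOnSetd(n)=\emptyset$ after transporting $\linminus$ and $\loutminus$ across one step by \eqref{linnexttime}, whose `$\psi\to\Bh\psi$' hypotheses are exactly what $\couf$ provides. For the \ierror\ side I would use Claim~\ref{c:lincounting}(iv)--(vi) together with the two-case definition of $c_i^\irun$, splitting on whether $\linOnSetu(n)$ is finite. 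The main obstacle is the \ierror\ test step, whose semantics demands the \emph{pre}-value $c_i^\irun(n-1)=0$: I would establish this exactly as in Lemma~\ref{l:run}, observing that $c_i^\irun(n-1)\ne 0$ would force $\linOnSetu(n-1)\ne\emptyset$ whether or not it is infinite, contradicting the conjunct $\Uv(\linplus\to\loutplus)$ of $\liniexein$, which makes $\linOnSetu(n-1)$ empty. The remaining cases are routine and I would leave them to the reader, as in the $\K$-component case.
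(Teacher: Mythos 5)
Your proposal is correct and follows essentially the same route as the paper's proof: the same inductive invariant (state marker on the diagonal at $\auf x_n,x_n\zu$, detected from $\auf x_n,x_0\zu$ via $\Ev$ and reinstated through the diagonal-transfer conjunct of \eqref{linmstep}), the same use of Claim~\ref{c:lincounting} for the faulty steps, and the same observation that $c_i^\irun(n-1)\ne 0$ forces $\linOnSetu(n-1)\ne\emptyset$, ruling out $\ftest$ on the \ierror\ side. You even spell out a few details the paper leaves to the reader (e.g.\ the lossy $\ftest$ case via $\couf$ and \eqref{linnexttime}), all correctly.
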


\begin{proof}
We define $\bigl\auf\auf\alpha_n,q_n\zu: 0<n<\omega\bigr\zu$ by induction on $n$ such that
for all $0<n<\omega$
\begin{itemize}
\item
$q_n\in Q-H$ and
$\M,\auf x_n,x_n\zu\models\St_{q_n}$,
\item
$\auf q_{n-1},\vec{c}^{\;\lrun}(n-1)\zu\lstepin\auf q_n,\vec{c}^{\;\lrun}(n)\zu$ and
$\auf q_{n-1},\vec{c}^{\;\irun}(n-1)\zu\istepin\auf q_n,\vec{c}^{\;\irun}(n)\zu$.
\end{itemize}
As $\vec{c}^{\;\lrun}(0)=\vec{c}^{\;\irun}(0)=\vec{0}$ by \eqref{linmini}, the lemma will follow.

To this end, 
take some $n$ with $0<n<\omega$. Then we have $q_{n-1}\in Q-H$ and 
$\M,\auf x_{n-1},x_{n-1}\zu\models\St_{q_{n-1}}$,
by \eqref{linmini} and \eqref{linminf} if $n=1$, and by the IH if \mbox{$n>1$.}
So by Claim~\ref{c:lingrid}(i), we have $\M,\auf x_{n-1},x_{0}\zu\models\Ev\St_{q_{n-1}}$.
Thus by Claim~\ref{c:lingrid}(iv) and \eqref{linmstep}, there is $\auf \alpha_n,q_n\zu\in I_{q_{n-1}}$ such that
$\M,\auf x_{n-1},x_0\zu\models \linlexein\land\liniexein$ and
\begin{equation}\label{nextstate}
\M,\auf x_{n-1},x_0\zu\models\Uv\bigl(\Dh\delta\land\Bh\Bh\neg\delta\to\Bh(\delta\to\St_{q'})\bigr).
\end{equation}
Now it is easy to check that
$\auf q_{n-1},\vec{c}^{\;\lrun}(n-1)\zu\lstepin\auf q_n,\vec{c}^{\;\lrun}(n)\zu$ holds,
using Claim~\ref{c:lincounting}(i)--(iii).
In order to show that 
$\auf q_{n-1},\vec{c}^{\;\irun}(n-1)\zu\istepin\auf q_n,\vec{c}^{\;\irun}(n)\zu$,
we need to use Claim~\ref{c:lincounting}(iv)--(vi) and the following observation. 
As for each $i<N$ either $\linOnSetu(n-1)$ is infinite or $c_i^\irun(n-1)=|\linOnSetu(n-1)|$,
if $c_i^\irun(n-1)\ne 0$ then $\linOnSetu(n-1)\ne\emptyset$, and so
$\alpha_n\ne\ftest$ follows by $\M,\auf x_{n-1},x_0\zu\models\liniexein$.
Finally, we have 
$\M,\auf x_{n},x_n\zu\models \St_{q_n}$ by \eqref{nextstate} and Claim~\ref{c:lingrid}(ii),(iv),
and so
$q_n\in Q-H$ by Claim~\ref{c:lingrid}(i),(iv) and \eqref{linminf}.
\end{proof}

\begin{lemma}\label{l:linsound} {\bf (soundness)}\\
If $M$ has an infinite reliable run, then $\lingridfw\land\linmfw$ is satisfiable in a model
over $\auf\omega,<\zu\dprod\F$ for some countably infinite one-step rooted frame $\F$.
\end{lemma}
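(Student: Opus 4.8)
The plan is to mirror the construction in the proof of the soundness Lemma~\ref{l:sound}, adapting it to the transitive (hence `next-time'-free) horizontal component and to the In/Out encoding of the counters. Fix a reliable $\vec\tau$-run $\bigl\langle\langle q_n,\vec c(n)\rangle : n<\omega\bigr\rangle$ of $M$ with $\vec\tau=\bigl\langle\langle\alpha_n,q_n\rangle : 0<n<\omega\bigr\rangle$. As horizontal frame I take $\langle\omega,<\rangle$ (a frame for $L_h$ by hypothesis) and as vertical frame the $\omega$-fan $\F$ among the frames of $L_v$ (a countably infinite one-step rooted frame), so that $\Wh=\Wv=\omega$ and the root is $\langle 0,0\rangle$. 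I identify the grid points of Claim~\ref{c:lingrid} with $x_n:=n$, so the diagonal points are exactly the $\langle n,n\rangle$; one checks directly that with this choice $\langle n-1,n\rangle\models\Dh\delta\land\Bh\Bh\neg\delta$ and that row $n$ is the unique such row in column $n$, which is precisely what makes the `next-time' principle \eqref{linnexttime} usable. I place the states on the diagonal by setting $\nu(\mathsf{S}_q):=\{\langle n,n\rangle : q_n=q\}$, so that $\langle n,n\rangle\models\St_{q_n}$ for every $n$, and (since at row $n$ the only point carrying any $\mathsf{S}_q$ is $\langle n,n\rangle$) the antecedent $\Ev\St_q$ of \eqref{linmstep} is true at $\langle n,0\rangle$ exactly for $q=q_n$.

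Next I interpret each of the four markers as an \emph{upward-closed} set along every column, which is exactly what the persistence conjuncts of $\couf$ demand: since I never switch a marker off as the row index grows, all the `$\varphi\to\Bh\varphi$' parts of $\couf$ hold globally and automatically. On the lossy side I define $\linminus,\loutminus$ by induction on $n$: at an increment step ($\alpha_{n+1}=\finci$) I switch on $\linminus$ in the diagonal column $n$ from row $n+1$ upwards; at a decrement step I switch on $\loutminus$ in one currently active column; at a test step I change nothing. Since a column is counted iff $\linminus\land\neg\loutminus$ holds, this keeps $|\linOnSetd(n)|=c_i(n)$ and, by Claim~\ref{c:lincounting}(i)--(iii), makes $\linlexeinp$ true at $\langle n,0\rangle$ for every $n$; the test conjunct $\Uv(\Bh\linminus\to\Bh\loutminus)$ holds for free because the run is reliable, so the tested counter is already empty and all its In-columns are already Out.

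The \ierror\ side is the delicate part, and I expect it to be the main obstacle. Because $\lindecu_i$ permits a column to acquire $\loutplus$ only when it is the \emph{current diagonal} column, I must arrange in advance that every \ierror\ decrement lands on a column that is still active. I reuse the bookkeeping of \eqref{lseqfirst}--\eqref{lseqlast}: let $\lambda^i_m$ and $\xi^i_m$ enumerate the decrement- and increment-times of $c_i$ in ascending order, and recall that reliability gives $L_i\le K_i$ and $\lambda^i_m>\xi^i_m$; consequently the matched increments (the first $L_i$ of them) all \emph{precede} every unmatched increment. For the $m$-th increment with $m<L_i$ I switch on $\linplus$ in the column $\lambda^i_m$ (the column that will carry the matching decrement) from row $\xi^i_m+1$ upwards; for an unmatched increment I use a fresh column $\min(\omega\setminus\text{In}_i)$; and at the $m$-th decrement I switch on $\loutplus$ in the diagonal column $\lambda^i_m$. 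Because $\lambda^i_m>\xi^i_m$, at row $\lambda^i_m$ the column $\lambda^i_m$ is already $\linplus$ and not yet $\loutplus$, so it is genuinely active and the decrement is legitimate, giving $|\linOnSetu(n)|=c_i(n)$ via Claim~\ref{c:lincounting}(iv)--(vi) and making $\liniexeinp$ true at $\langle n,0\rangle$; the only subtle case is again the \ierror\ test, whose conjunct $\Uv(\linplus\to\loutplus)$ at row $n$ says no column is active, and this holds because at a zero-test of a reliable run every earlier matched increment has already been decremented ($\lambda^i_m<n$) and no unmatched increment can precede a zero-test.

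Finally I verify $\linMfw,\langle 0,0\rangle\models\lingridfw\land\linmfw$ directly. The conjunct $\lingridfw$ holds by the choice of the diagonal together with $\Wv=\omega$, and $\couf$ holds by the monotonicity of all four markers; \eqref{linmini} holds since $q_0=\qini$ and every marker is off at row $0$; \eqref{linminf} holds since the run is infinite, so each $q_n\in Q-H$. For \eqref{linmstep} I evaluate at $\langle n,0\rangle$, pick the disjunct $\langle\alpha_{n+1},q_{n+1}\rangle\in I_{q_n}$ supplied by $\vec\tau$, invoke the two do-formulas established above, and note that the state-transfer conjunct $\Uv\bigl(\Dh\delta\land\Bh\Bh\neg\delta\to\Bh(\delta\to\St_{q_{n+1}})\bigr)$ reduces, by the uniqueness established in Claim~\ref{c:lingrid}(ii), to the single requirement $\langle n+1,n+1\rangle\models\St_{q_{n+1}}$, which holds by the definition of $\nu$.
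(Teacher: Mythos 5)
Your construction is essentially identical to the paper's proof: the same choice of frames with $\Wh=\Wv=\omega$ and states placed on the diagonal, the same monotone (horizontally upward-closed) interpretation of the four markers so that $\couf$ holds automatically, the same lossy encoding via $\linminus/\loutminus$, and crucially the same pre-matching of \ierror\ increments to their future decrement positions via the $\lambda^i_m>\xi^i_m$ bookkeeping of \eqref{lseqfirst}--\eqref{lseqlast}, so that each $\loutplus$ is switched on exactly at the current diagonal column. The argument is correct; nothing further is needed.
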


\begin{proof}
We may assume
that $\F=\auf \omega, S\zu$ and $\{\auf 0,n\zu : 0<n<\omega\}\subseteq S$.
Suppose that 
$\bigl\auf\auf q_n,\vec{c}(n)\zu : n<\omega\bigr\zu$ is a reliable run of $M$,
for some sequence $\vec{\tau}=\bigl\auf\auf\alpha_n,q_n\zu: 0<n<\omega\bigr\zu$ of instructions.
We define a model
\[ 
\linMfw=\bigl\auf\auf\omega,<\zu\dprod\F,\linefw\bigr\zu
\]
 as follows.
For each $q\in Q$, we let
\[
\linefw(\mathsf{S}_q):=\{\auf n,n\zu : n<\omega,\ q_n=q\}.
\]
Further, for all $i<N$, $n<\omega$, we will define inductively the sets 
$\linefw_n(\linminus)$, $\linefw_n(\loutminus)$, $\linefw_n(\linplus)$, and $\linefw_n(\loutplus)$,
and then put
\[
\linefw(\pvar):=\{\auf n,m\zu : m\in\linefw_n(\pvar)\},
\]
for $\pvar\in\{\linminus,\loutminus,\linplus,\loutplus\}$.
To begin with, we let $\linefw_0(\linminus)=\linefw_0(\loutminus)=\linefw_0(\linplus)=
\linefw_0(\loutplus):=\emptyset$, and
\begin{align*}
\linefw_{n+1}(\linminus) & :=\left\{
\begin{array}{ll}
\linefw_n(\linminus)\cup\{ n\}, & \mbox{ if }\alpha_{n+1}=\finci,\\
\linefw_n(\linminus), & \mbox{ otherwise},
\end{array}
\right.\\
\linefw_{n+1}(\loutminus) & :=\left\{
\begin{array}{ll}
\linefw_n(\loutminus)\cup\{\min\bigl(\linefw_n(\linminus)\!-\!\linefw_n(\loutminus)\bigr)\},  & \mbox{if }\alpha_{n+1}=\fdeci,\\
\linefw_n(\loutminus), & \mbox{ otherwise},
\end{array}
\right.\\
\linefw_{n+1}(\loutplus) & :=\left\{
\begin{array}{ll}
\linefw_n(\loutplus)\cup\{ n\}, & \mbox{ if }\alpha_{n+1}=\fdeci,\\
\linefw_n(\loutplus), & \mbox{ otherwise}.
\end{array}
\right.
\end{align*}
Next, recall the notation introduced in \eqref{lseqfirst}--\eqref{lseqlast}. We let
\[
\linefw_{n+1}(\linplus):=\left\{
\begin{array}{ll}
\linefw_n(\linplus)\cup\{ \lambda_m^i\}, & \mbox{if }\alpha_{n+1}=\finci,\ n=\xi_m^i,\\
& \hspace*{3cm}m<L_i,\\
\linefw_n(\linplus)\cup\bigl\{ \min\bigl(\omega-\linefw_n(\linplus)\bigr)\bigr\}, & \mbox{if }\alpha_{n+1}=\finci,\ n=\xi_m^i,\\
& \hspace*{2.1cm} L_i\leq m<K_i,\\
\linefw_n(\linplus), & \mbox{otherwise}.
\end{array}
\right.
\]
We claim that if $\alpha_{n+1}=\fdeci$ then $n\in \linefw_n(\cplus)=\linefw_{n+1}(\cplus)$,
and so 
\[
|\linefw_{n+1}(\linplus)-\linefw_{n+1}(\loutplus)|=
|\linefw_{n}(\linplus)-\linefw_{n}(\loutplus)|-1.
\]
Indeed, 
if $\alpha_{n+1}=\fdeci$ then $n=\lambda_m^i$ for some $m<L_i$.
So $\linefw_{\xi_m^i+1}(\linplus)= \linefw_{\xi_m^i}(\linplus)\cup\{\lambda_m^i\}$, and
so $n\in \linefw_{\xi_m^i+1}(\linplus)$. It follows that $n\in  \linefw_{k}(\linplus)$ for 
every $k$ with $\xi_m^i+1\leq k$. As $\lambda_m^i>\xi_m^i$, we have $n\in \linefw_n(\cplus)$ as required.

Now it is not hard to check that 
\[
|\linefw_n(\linminus)-\linefw_n(\loutminus)|=|\linefw_n(\linplus)-\linefw_n(\loutplus)|=c_i(n)
\]
 and
$\linMfw,\auf n,0\zu\models\linlexeinp\land\liniexeinp$,
for all $i<N$ and $n<\omega$, and so
$\linMfw,\auf 0,0\zu\models\lingridfw\land\linmfw$.
\end{proof}

Now Theorem~\ref{t:linundec} follows from Prop.~\ref{p:approx},  Lemmas~\ref{l:linrun}
and \ref{l:linsound}.


\bigskip
In some cases, we can have stronger lower bounds than in Theorem~\ref{t:linundec}.
We call a frame $\auf W,R\zu$ \emph{modally discrete} if it satisfies the following aspect of discreteness:
there are no points $x_0,x_1,\dots,x_n,\dots,x_\infty$ in $W$ such that
$x_0 R x_1 R x_2 R \dots Rx_nR\dots
Rx_\infty,\ x_n \ne x_{n+1}$ and $x_\infty \neg R x_n$,  for
all $n < \omega$.
We denote by $\disKfourt$ the logic  of all modally discrete linear orders.
Several well-known `linear' modal logics are extensions of
$\disKfourt$, for example, $\Log\auf\omega,<\zu$,
$\Log\auf\omega,\leq\zu$, $\GLt$ (the unimodal logic of all Noetherian%
\footnote{$\auf W,R\zu$ is
\emph{Noetherian} if it contains no infinite ascending chains $x_0
R x_1 R x_2 R \dots$ where $x_i \ne x_{i+1}$ for $i<\omega$.}
linear orders),
and $\Grzt$ (the unimodal logic of all Noetherian reflexive linear orders).
Unlike `real' discreteness, modal discreteness can be captured by modal formulas, and 
each of the logics above is finitely axiomatisable \cite{Segerberg70,Fine85}.

\begin{theorem}\label{t:linundecdisc} 
Let $L_h$ be any Kripke complete logic such that $L_h$ contains $\disKfourt$ and 
$\auf\omega,<\zu$ is a frame for $L_h$.
Let $L_v$ be any Kripke complete logic having an $\omega$-fan among its frames. Then 
both $L_h\dprod L_v$  and $L_h\sqprod L_v$ are $\Pi_1^1$-hard.
\end{theorem}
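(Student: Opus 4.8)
The plan is to adapt the reduction behind Theorem~\ref{t:linundec} so that it reduces the \textbf{CM recurrence} problem (rather than \textbf{CM non-termination}) to $L_h\dprod L_v$-satisfiability. Since \textbf{CM recurrence} is $\Sigma_1^1$-hard and $\phi$ is valid iff $\neg\phi$ is unsatisfiable, a satisfiability-preserving reduction makes $L_h\dprod L_v$-satisfiability $\Sigma_1^1$-hard, whence $L_h\dprod L_v$-validity is $\Pi_1^1$-hard. I would reuse verbatim the grid generation $\lingridfw$, the counting machinery, and the run-emulation formula $\linmfw$ from the proof of Theorem~\ref{t:linundec} (which already forces an \emph{infinite} run), and add a single recurrence-forcing conjunct
\[
\rho\ :=\ \Bh\Dh\,\Dv(\delta\land\St_{q_r}),
\]
setting $\phi_M:=\lingridfw\land\linmfw\land\rho$ for the given machine $M$ and recurrent state $q_r\in Q-H$. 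The only genuinely new work is to show that, thanks to $L_h\supseteq\disKfourt$, the formula $\rho$ expresses exactly that the emulated run visits $q_r$ infinitely often along the generated grid.

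The crucial point is a \emph{discreteness lemma}: in a model of $\lingridfw$ over a modally discrete weakly connected transitive frame $\F_h$, the $\Rh$-successors of the root $x_0=r_h$ are \emph{exactly} the grid points $\{x_n : 0<n<\omega\}$ produced by Claim~\ref{c:lingrid}. Indeed, Claim~\ref{c:lingrid}(iii) already rules out points strictly between consecutive $x_{n-1}$ and $x_n$, so a straightforward induction shows that any $z$ with $x_0\Rh z$ either equals some $x_n$ or satisfies $x_n\Rh z$ for every $n$; and the conjunct $\Bh\Bh\neg\delta$ in $\lingridfw$ forces the $x_n$ to be distinct and $\Rh$-irreflexive, so in the latter case we would obtain an ascending chain $x_0\Rh x_1\Rh\cdots\Rh z$ in which $z$ lies $\Rh$-above every $x_n$, contradicting modal discreteness. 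Using vertical uniqueness \eqref{unique1} of $\delta$ together with Claim~\ref{c:lingrid}(i), one then checks that $\Dv(\delta\land\St_{q_r})$ holds at a grid column $\auf x_l,x_0\zu$ iff $\St_{q_r}$ holds at the unique diagonal point $\auf x_l,x_l\zu$, i.e.\ iff $q_l=q_r$. Hence, evaluated at $\auf x_0,x_0\zu$, the formula $\rho$ holds iff for every $k$ there is $l>k$ with $q_l=q_r$; as $\{x_n\}$ has order type $\omega$, this says precisely that $q_r$ occurs infinitely often among the states of the emulated runs.

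With $\rho$ so understood, the two directions follow the pattern of Lemmas~\ref{l:linrun} and \ref{l:linsound}. For soundness, given a reliable $\vec\tau$-run of $M$ visiting $q_r$ infinitely often, I would build the model of Lemma~\ref{l:linsound} over $\auf\omega,<\zu\dprod\F$ for an $\omega$-fan $\F$; since $\auf\omega,<\zu$ is modally discrete and a frame for $L_h$, and the state markings place $\St_{q_r}$ on cofinally many diagonal points, the extra conjunct $\rho$ is satisfied at $\auf 0,0\zu$. For the converse, if $\phi_M$ is satisfied in any $L_h\dprod L_v$-model, then Lemma~\ref{l:linrun} extracts a sequence $\vec\tau$ together with a lossy and an \ierror\ $\vec\tau$-run, while the discreteness lemma applied to $\rho$ guarantees that $q_r$ appears infinitely often in $\vec\tau$; Prop.~\ref{p:approx} then yields a reliable $\vec\tau$-run, which has the same state sequence and therefore visits $q_r$ infinitely often.

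Consequently $\phi_M$ is $L_h\dprod L_v$-satisfiable iff $M$ has a reliable run visiting $q_r$ infinitely often, giving the desired $\Pi_1^1$-hardness. The same formula handles $L_h\sqprod L_v$: the soundness model already has $\Wh=\Wv=\omega$ and is rooted, so it is a legitimate $\sqprod$-witness, while any $\sqprod$-model satisfying $\phi_M$ is in particular a $\dprod$-model, so the extraction argument applies unchanged. The main obstacle is exactly the discreteness lemma: with only $\Kfourt$ (non-discrete linear orders) the chain $\{x_n\}$ need not exhaust the future of the root, so $\rho$ could be satisfied by diagonal points lying ``beyond'' the run, decoupling $\Bh\Dh$-recurrence from genuine recurrence of the counter machine; modal discreteness is precisely what forbids such limit behaviour.
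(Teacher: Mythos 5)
Your proposal is correct and follows essentially the same route as the paper: the paper's own (sketched) proof adds exactly the conjunct $\Bh\Dh\Dv(\delta\land\St_{q_r})$ to $\lingridfw\land\linmfw$ and invokes Claim~\ref{c:lingrid}(ii)--(iv) together with modal discreteness to conclude that the $\Dh$-witnesses must be grid points $x_k$ with $k>n$, which is precisely your ``discreteness lemma''. Your write-up merely spells out in more detail (distinctness and irreflexivity of the $x_n$, hence $z\mathrel{\neg\Rh}x_n$ by transitivity, so the forbidden limit configuration arises) what the paper leaves implicit.
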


\begin{proof}
We sketch how to modify the proof of Theorem~\ref{t:linundec} to obtain a
 reduction of the `CM recurrence' problem to $L_h\dprod L_v$-satisfiability.
 Observe that by Claim~\ref{c:lingrid}(ii),(iv), the generated grid-points $x_n$ are 
 such that $x_n\ne x_{n+1}$ for all $n<\omega$.
 Therefore, if $\M$ is a model based on a \mbox{$\delta$-product} frame with a modally discrete 
 `horizontal'
 component and
 \[
 \M,\auf r_h,r_v\zu\models\lingridfw\land\linmfw\land\Bh\Dh\Dv(\delta\land\St_{q_r})
 \]
 for some state $q_{r}$,
then by Claim~\ref{c:lingrid}(iii),(iv), for every $n<\omega$ there is $k$ such that $n<k<\omega$ and $\M,\auf x_k,x_k\zu\models\St_{q_r}$.
\end{proof}

However, the formula $\lingridfw$ is clearly not satisfiable when $L_h$ has only
reflexive and/or dense frames (like $\Sfourt$, the unimodal logic of all reflexive linear orders, or 
the unimodal logic $\Log\auf\mathbb Q,<\zu$ over the rationals).
It is not hard to see that a `linear' version of the `tick-trick'  in
\eqref{tickbox}--\eqref{tickformula} can be used to generalise the proof of 
Theorem~\ref{t:linundec} for these cases.
Further, as by Claim~\ref{c:lingrid} the formula $\lingridfw$ forces an infinite ascending chain
of points, it is not satisfiable when $L_h$ has only Noetherian frames (like $\GLt$ or $\Grzt$). 
Similarly to the $\K$-case in Section~\ref{kprod},
it is also possible to generate an infinite grid and then emulate counter machine runs
by going \emph{backwards} in linear frames, and so 
to extend Theorem~\ref{t:linundec} to Noetherian cases.
The interested reader should consult  \cite{Hampson&Kurucz14}, where all
these issues are addressed in detail.


\section{Decidable $\delta$-products}\label{dec}

The following theorem shows that
the unbounded width of the second-component frames is
essential in obtaining the undecidability result of Theorem~\ref{t:kundec}:

\begin{theorem}\label{t:dec}
$L \dprod\Alt(n)$ is decidable in {\sc coNExpTime}, whenever $L$ is $\K$ or $\Alt(m)$,
for $0<n,m<\omega$.
\end{theorem}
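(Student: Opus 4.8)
The plan is to prove the dual statement—that $L\dprod\Alt(n)$-\emph{satisfiability} is in {\sc NExpTime}—by establishing an exponential-size model property via selective filtration. Fix a formula $\psi$, and suppose it is satisfied at a world $\auf x_0,y_0\zu$ of a model $\M$ over $\F_h\dprod\F_v$, where $\F_h=\auf\Wh,\Rh\zu\in\Fr L$ and $\F_v=\auf\Wv,\Rv\zu\in\Fr\Alt(n)$. Let $\Sigma$ be the set of subformulas of $\psi$ together with their negations, and let $d$ be the modal depth of $\psi$, so $d\le|\psi|$. The crucial point—and the reason the bound on the second component matters, in contrast to the $\omega$-fan feeding Theorem~\ref{t:kundec}—is that along the vertical axis each point has at most $n$ successors, so the set of vertical coordinates reachable from $y_0$ in at most $d$ $\Rv$-steps has size at most $n^{d+1}$.

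First I would select \emph{coordinate sets} rather than individual grid points: since the relations of a $\delta$-product act coordinatewise, one cannot simply discard worlds, but must choose subsets $W_h'\subseteq\Wh$ and $W_v'\subseteq\Wv$ and then take the full sub-grid $\F_h'\dprod\F_v'$ on $W_h'\times W_v'$, where $\F_i'$ carries the restriction of $R_i$. For $W_v'$ I would take the whole vertical $d$-ball around $y_0$; for $W_h'$ I would proceed by selective filtration, keeping $x_0$ and, for every already-selected world $\auf x,y\zu$ of product-distance $<d$ from the root and every $\Dh\chi\in\Sigma$ with $\M,\auf x,y\zu\models\Dh\chi$, one horizontal witness $x'\in\Wh$ with $x\Rh x'$ and $\M,\auf x',y\zu\models\chi$. (When $L=\Alt(m)$ one may instead simply take the whole horizontal $d$-ball, already bounded by $m^{d+1}$.) Because subframes of $\Alt(n)$- (resp.\ $\Alt(m)$-) frames are again such frames, and every frame is a $\K$-frame, the resulting $\F_h'$ and $\F_v'$ lie in $\Fr L$ and $\Fr\Alt(n)$ respectively, so $\F_h'\dprod\F_v'$ is a legitimate frame in $\Fr L\dprod\Fr\Alt(n)$.

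Next I would endow $\F_h'\dprod\F_v'$ with the restriction of the valuation of $\M$ and prove the expected truth lemma: for every selected world $w$ of distance $r$ from the root and every $\chi\in\Sigma$ of modal depth $\le d-r$, we have $\M,w\models\chi$ iff the sub-model does. The $\Box$-cases are immediate because we only retain edges already present in $\M$, so successor sets can only shrink; the $\Dh$/$\Dv$-cases hold because the construction retains a witness for every satisfied diamond in $\Sigma$ at worlds of distance $<d$. The key observation making the diagonal harmless is that $\delta$ is \emph{intrinsic} to the sub-product: its extension in $\F_h'\dprod\F_v'$ is exactly $\{\auf x,x\zu:x\in W_h'\cap W_v'\}$, which coincides with the restriction of $\id$ to $W_h'\times W_v'$ since $W_h'\subseteq\Wh$ and $W_v'\subseteq\Wv$; in particular the horizontal and vertical uniqueness of $\delta$ are inherited for free. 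Thus $\psi$ is satisfied in the sub-model at $\auf x_0,y_0\zu$.

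Finally I would bound the size. We have $|W_v'|\le n^{d+1}$, while the horizontal selection spawns at most $|W_v'|\cdot|\Sigma|$ new coordinates per existing one over at most $d$ rounds, giving $|W_h'|\le(|W_v'|\cdot|\Sigma|)^{d+1}$; as $n$ (and $m$) are fixed and $d\le|\psi|$, the total number of worlds $|W_h'|\cdot|W_v'|$ is $2^{O(|\psi|^2)}$, i.e.\ exponential. Hence $\psi$ is satisfiable iff it is satisfied in some model of exponential size over a frame in $\Fr L\dprod\Fr\Alt(n)$; a {\sc NExpTime} procedure guesses such a model together with the separate factor frames $\F_h',\F_v'$, checks in exponential time that their out-degrees obey the $\Alt(\cdot)$-constraints and that the guessed world satisfies $\psi$, and accepts. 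Dualising yields the claimed {\sc coNExpTime} bound for validity. The main obstacle is not any single step but the simultaneous bookkeeping that keeps all three requirements true—genuine membership in $\Fr L\dprod\Fr\Alt(n)$, preservation of the uniqueness-bearing diagonal, and the exponential size bound; the bounded vertical branching of $\Alt(n)$ is precisely what reconciles them, since it both keeps $W_v'$ small and caps the horizontal branching of the filtration, formally explaining why the $\omega$-fan of Theorem~\ref{t:kundec} cannot be reproduced here.
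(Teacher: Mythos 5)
Your proposal is correct and follows essentially the same route as the paper: a selective filtration that keeps the entire vertical ball of radius the modal depth (bounded by $n^{d+1}$ thanks to $\Alt(n)$), selects horizontal witnesses level by level for the $\Dh$-subformulas, verifies a depth-parameterised truth lemma, and observes that the diagonal restricts correctly to the sub-product, yielding an exponential-size model and hence a {\sc NExpTime} satisfiability test. The only cosmetic difference is that you take the full restriction of $\Rh$ to the selected horizontal coordinates, whereas the paper keeps only the chosen witness edges; both variants support the truth lemma and the required closure of $\Fr\K$ and $\Fr\Alt(m)$ under the operation.
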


\begin{proof}
We prove the theorem for $\K \dprod\Alt(n)$. The other cases are similar and left to the reader.
We show (by selective filtration) that if some formula $\phi$ does not belong to $\K \dprod\Alt(n)$,
then there exists a $\delta$-product frame for
$\K \dprod\Alt(n)$ whose size is exponential in $\phi$ where $\phi$ fails. 
It will also be clear that the presence or absence of the diagonal is irrelevant in our argument.

To begin with, we let
$\sub(\phi)$ denote the set of all subformulas of $\phi$. For any $\psi\in\sub(\phi)$, we denote
by $\hd(\psi)$ the maximal number of nested `horizontal' modal operators ($\Dh$ and $\Bh$) in $\psi$.
Similarly, $\vd(\psi)$ denotes the `vertical' nesting depth of $\psi$.
Now suppose that $\M,\auf r_h,r_v\zu\not\models\phi$ in some model $\M$ that is based on the
$\delta$-product of $\F_h=\auf\Wh,\Rh\zu$ and some frame $\F_v=\auf\Wv,\Rv\zu$ for $\Alt(n)$.
(Note that with $\delta$ in our language it is possible to force cycles in the component frames of a $\delta$-product, so
 we cannot assume that $\F_h$ and $\F_v$ are trees.)
For every $k\leq\vd(\phi)$, we define
\[
U_v^k:=\{y\in \Wv : \mbox{there is a $k$-long $\Rv$-path from $r_v$ to $y$}\}.
\] 
The $U_v^k$ are not necessarily disjoint sets for different $k$, but we always have
\begin{equation}\label{vbound}
|U_v^k|\leq 1+n+n^2+\dots+ n^k\leq 1+k\cdot n^k.
\end{equation}
Then
 we define $\F'_v:=\auf\Wv',\Rv'\zu$ by taking
\[
\Wv':=\bigcup_{k\leq\vd(\phi)}U_v^k,\hspace*{3cm}
\Rv':=\Rv\cap (\Wv'\times\Wv').
\]
Next, for every $m\leq\hd(\phi)$, we define inductively $U_h^m$ and $S_h^m$ as follows.
We let $U_h^0:=\{r_h\}$ and $S_h^0:=\emptyset$. Now suppose inductively that we have
defined $U_h^m$ and $S_h^m$ for some $m<\hd(\phi)$. For all $x\in U_h^m$, $y\in\Wv'$, and
$\Dh\psi\in\sub(\phi)$ with $\M,\auf x,y\zu\models\Dh\psi$, choose some $z_{x,y,\psi}$ from $\Wh$
such that $x\Rh z_{x,y,\psi}$ and $\M,\auf z_{x,y,\psi},y\zu\models\psi$. Then define
\begin{align*}
U_h^{m+1}& :=\{z_{x,y,\psi} : x\in U_h^m,\ y\in\Wv', \Dh\psi\in\sub(\phi),\ \M,\auf x,y\zu\models\Dh\psi\},\\
S_h^{m+1}&:=\{\auf x,z_{x,y,\psi}\zu : x\in U_h^m,\ y\in\Wv', \Dh\psi\in\sub(\phi),\ \M,\auf x,y\zu\models\Dh\psi\}.
\end{align*}
Again,
the $U_h^m$ are not necessarily disjoint sets for different $m$, but by \eqref{vbound} we always have that
\begin{equation}\label{hbound}
|U_h^m|\leq \bigl(\vd(\phi)\cdot n^{\vd(\phi)}\cdot |\sub(\phi)|\bigr)^m.
\end{equation}
Then we define $\F'_h:=\auf\Wh',\Rh'\zu$ by taking
\[
\Wh':=\bigcup_{m\leq\hd(\phi)}U_h^m,\hspace*{3cm}
\Rh':=\bigcup_{m\leq\hd(\phi)}S_h^m.
\]
Clearly, by \eqref{vbound} and \eqref{hbound} the size of $\F'_h\dprod\F'_v$ is exponential in the size of $\phi$. Let $\M'$ be the restriction of $\M$ to $\F'_h\dprod\F'_v$. Now a straightforward induction on $k$, $m$ and the structure of formulas shows that
 for all $k\leq\vd(\phi)$, $m\leq\hd(\phi)$, $\psi\in\sub(\phi)$,
 \[
 \M,\auf x,y\zu\models\psi\qquad\mbox{iff}\qquad
  \M',\auf x,y\zu\models\psi,
  \]
  whenever $x\in U_h^{\hd(\phi)-m}$, $y\in U_v^{\vd(\phi)-k}$, $\hd(\psi)\leq m$, and $\vd(\psi)\leq k$.
  It follows that $\M',\auf r_h,r_v\zu\not\models\phi$, as required. 
  \end{proof}

In certain cases the above proof gives polynomial upper bounds on the size of the falsifying
$\delta$-product model, so we have:

\begin{theorem}\label{t:altnp}
The validity problems of both $\Sfive\dprod\Alt(1)$ and $\Alt(1)\dprod\Alt(1)$  are {\sc coNP}-complete.
\end{theorem}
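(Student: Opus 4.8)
The plan is to prove each logic \textsc{coNP}-complete by establishing \textsc{coNP}-hardness and \textsc{coNP}-membership separately. Hardness is cheap and uniform, while membership requires extracting \emph{polynomial}-size countermodels from the selective filtration used in the proof of Theorem~\ref{t:dec}, specialised to the branching bound $n=1$.

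For hardness, I would observe that both $\Sfive$ and $\Alt(1)$ have a one-point frame, so each $\delta$-product has a one-point frame on which any Boolean assignment is realisable. Consequently, for a purely Boolean formula $\phi$ (no modal operator, no $\delta$) we have $\phi\in\Sfive\dprod\Alt(1)$ (resp.\ $\phi\in\Alt(1)\dprod\Alt(1)$) iff $\phi$ is a classical propositional tautology. Since tautology-checking is \textsc{coNP}-hard, the identity reduction on Boolean inputs gives \textsc{coNP}-hardness of validity, and the same frame witnesses the hardness of the $\sqprod$-variants as well.

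For membership I would show a polynomial-size countermodel property and then conclude by guess-and-check: non-validity becomes an NP problem (guess a polynomially bounded $\delta$-product model in the relevant frame class together with a refuting world, and verify frame-membership and the refutation in polynomial time), so validity lies in \textsc{coNP}. The countermodels come from re-reading the filtration of Theorem~\ref{t:dec}. For $\Alt(1)\dprod\Alt(1)$ this is immediate: setting $n=1$ in \eqref{vbound} gives $|U_v^k|\le k+1$, hence $|\Wv'|\le\vd(\phi)+1$; and since in an $\Alt(1)$-frame every point has at most one $\Rh$-successor, each $x\in U_h^m$ contributes only its unique successor to $U_h^{m+1}$, so $|U_h^m|\le 1$ for all $m$ and $|\Wh'|\le\hd(\phi)+1$. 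Thus $\F'_h\dprod\F'_v$ has size $O(|\phi|^2)$ and the truth lemma of Theorem~\ref{t:dec} applies verbatim (with $\delta$ handled exactly as there).

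For $\Sfive\dprod\Alt(1)$ the vertical bound is as above, but the generic horizontal bound \eqref{hbound} is $\bigl(\vd(\phi)\cdot|\sub(\phi)|\bigr)^{\hd(\phi)}$, still exponential, because the construction iterates $\hd(\phi)$ times. The fix is to exploit that $\Rh$ is universal: in a product over an $\Sfive$-frame every horizontal point is reached in a single $\Rh$-step, so nested horizontal boxes collapse and no iteration is needed. Concretely I would replace the layered construction by one round: put $\Wh':=\{r_h\}\cup\{z_{y,\psi}\}$, where for each $y\in\Wv'$ and each $\Dh\psi\in\sub(\phi)$ with $\M,\auf x,y\zu\models\Dh\psi$ for some (equivalently every) $x$ we choose one witness $z_{y,\psi}$ with $\M,\auf z_{y,\psi},y\zu\models\psi$, and set $\Rh':=\Wh'\times\Wh'$. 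Then $|\Wh'|\le 1+|\sub(\phi)|\cdot(\vd(\phi)+1)$, the restricted horizontal relation is again universal (so $\F'_h\in\Fr\Sfive$), and $\F'_h\dprod\F'_v$ has size $O(|\phi|^3)$. The induction proving $\M,\auf x,y\zu\models\psi$ iff $\M',\auf x,y\zu\models\psi$ (for $x\in\Wh'$, $y\in U_v^{\vd(\phi)-k}$, $\vd(\psi)\le k$) goes through; the only new point is the $\Dh$-step, where universality of $\Rh'$ guarantees the chosen witness is visible from every $x\in\Wh'$. I expect this $\Sfive$ case to be the main obstacle: the crux is seeing that S5-universality lets one collapse the exponential horizontal iteration of \eqref{hbound} to a single witness-selection round while keeping the restricted frame in $\Fr\Sfive$ and preserving the filtration truth-lemma. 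Once that is in place, the \textsc{coNP} upper bound for both logics follows by the routine guess-and-check argument.
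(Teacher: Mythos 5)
Your proposal is correct and takes essentially the same route as the paper, whose proof of this theorem consists of the single observation that the selective filtration of Theorem~\ref{t:dec} yields polynomial-size falsifying $\delta$-product models in these cases (the {\sc coNP}-hardness half being folklore and left implicit). Your one-round witness selection for the $\Sfive$ component, keeping the restricted horizontal frame universal, is exactly the adaptation the paper leaves unstated, and your size bounds and guess-and-check conclusion are right.
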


Note that all the above results hold with $\Alt(n)$ being replaced by its \emph{serial}%
\footnote{A frame $\auf W,R\zu$ is called \emph{serial\/}, if for every $x$ in $W$ there is $y$ with $xRy$.}
version
$\DAlt(n)$. One should simply make the `final' points in the filtrated
component frames reflexive.


\section{Open problems}\label{disc}

We have shown that in many cases adding a diagonal to product logics results in a dramatic increase
in their computational complexity (Sections~\ref{kprod} and \ref{linprod}), while in other cases upper bounds similar to diagonal-free product logics can be obtained (Section~\ref{dec}). Here are some related open problems:
 
\begin{enumerate}
\item
Theorems~\ref{t:linundec} and \ref{t:linundecdisc} 
do not apply when the first component logic has transitive but not necessarily weakly connected (linear) frames. In particular, while $\Kfour\mprod\Sfive$ is decidable
in {\sc coN2ExpTime} \cite{Gabbay&Shehtman98},
it is not known whether $\Kfour\dprod\Sfive$ remains decidable.
Note that it is not clear either whether we could somehow use Theorem~\ref{t:kundec} here, 
that is, whether
$\K\dprod\Sfive$ could be reduced to  $\Kfour\dprod \Sfive$. Note that 
the reduction of \cite{GollerJL12} from $\K\mprod L$ to $\Kfour\mprod L$ uses that 
$\K\mprod L$ is determined by product frames having intransitive trees as first components, and
this is no longer true for $\K\dprod L$. As is shown in Lemma~\ref{l:sound} and Claim~\ref{c:grid},
the formula $\gridfw$ defined in \eqref{hgen}--\eqref{vgen} is satisfiable in a $\delta$-product
frame for $\K\dprod L$, but forces a `horizontal' non-tree structure.

\item
By the above, $\K\dprod\K$ is properly contained in 
\[
\Log(\mbox{`Intransitive trees'}\dprod\mbox{`Intransitive trees'}),
\]
and Theorem~\ref{t:kundec} does not imply the undecidability of the latter. Is this logic decidable?
Note that it is not clear either whether the selective filtration proof of Theorem~\ref{t:dec} could be used here, as both component frames could be of arbitrary width. However, it might be possible to generalise one of the several proofs showing the decidability of $\K\mprod\K$
\cite{Gabbay&Shehtman98,gkwz03}.

\item
It can be proved using 2D type-structures called quasimodels that the diagonal-free product logic
$\K\mprod\Alt(1)$ is decidable in {\sc ExpTime} \cite[Thm.6.6]{gkwz03}. Is 
$\K\dprod\Alt(1)$ also decidable in {\sc ExpTime}?

\item
While
$\delta$-product logics are determined by $\delta$-product frames by definition, 
there exist other
(non-product, `abstract') $\delta$-frames for these logics. 
The \emph{finite frame problem} of a logic $L$ asks: ``Given a finite frame, is it a frame for $L$?''
If a logic $L$ is finitely axiomatisable, then its finite frame problem is of course decidable: one just has to check whether the finitely many axioms hold in the finite frame in question. However,
as is shown in  \cite{Kikot08}, many $\delta$-product logics ($\K\dprod\K$ and $\K\dprod\Kfour$ among them) are not finitely axiomatisable.
So the decidability of the finite frame problem is open for these logics. 
Note that if every finite frame for, say, $\K\dprod\K$ were the p-morphic image of a finite $\delta$-product frame, then we could enumerate finite frames for $\K\dprod\K$. As $\K\dprod\K$ is recursively enumerable by Theorem~\ref{t:re}, we can always enumerate those finite $\delta$-frames that are not frames for $\K\dprod\K$. So this would provide us with a decision algorithm for the finite frame problem of
$\K\dprod\K$.
However, consider the $\delta$-frame $\F=\auf W,\Rh,\Rv,D\zu$, where
\begin{align*}
& W=\{x,y,z\}, \hspace*{2cm} D=\{z\},\\
& \Rh =\{\auf x,x\zu,\auf y,y\zu,\auf z,z\zu,\auf y,z\zu,\auf z,x\zu,\auf y,x\zu\},\\
& \Rv =\{\auf x,x\zu,\auf y,y\zu,\auf z,z\zu,\auf x,z\zu,\auf z,y\zu,\auf x,y\zu\}.
\end{align*}
Then it is easy to see that $\F$ is a p-morphic image of 
$\auf\omega,\leq\zu\dprod\auf\omega\leq\zu$,
but $\F$ is not a p-morphic image of any finite $\delta$-product frame.
\end{enumerate}

\bibliographystyle{plain}

\begin{thebibliography}{10}

\bibitem{Alur&Henzinger}
R.~Alur and T.~Henzinger.
\newblock A really temporal logic.
\newblock {\em J. ACM}, 41:181--204, 1994.

\bibitem{Berger66}
R.~Berger.
\newblock The undecidability of the domino problem.
\newblock {\em Memoirs of the AMS}, 66, 1966.

\bibitem{Blackburnetal01}
P.~Blackburn, M.~{de Rijke}, and Y.~Venema.
\newblock {\em Modal Logic}.
\newblock Cambridge University Press, 2001.

\bibitem{Blackburn&Seligman95}
P.~Blackburn and J.~Seligman.
\newblock Hybrid languages.
\newblock {\em J. Logic, Language and Information}, 4:251--272, 1995.

\bibitem{cz}
A.~Chagrov and M.~Zakharyaschev.
\newblock {\em Modal Logic}, volume~35 of {\em Oxford Logic Guides}.
\newblock Clarendon Press, Oxford, 1997.

\bibitem{Fine85}
K.~Fine.
\newblock Logics containing {{\bf K4}}, part {II}.
\newblock {\em J. Symbolic Logic}, 50:619--651, 1985.

\bibitem{gkwz03}
D.~Gabbay, A.~Kurucz, F.~Wolter, and M.~Zakharyaschev.
\newblock {\em Many-Dimensional Modal Logics: Theory and Applications}, volume
  148 of {\em Studies in Logic and the Foundations of Mathematics}.
\newblock Elsevier, 2003.

\bibitem{Gabbay&Shehtman98}
D.~Gabbay and V.~Shehtman.
\newblock Products of modal logics. {P}art {I}.
\newblock {\em Logic J. of the IGPL}, 6:73--146, 1998.

\bibitem{Gabbay&Shehtman99}
D.~Gabbay and V.~Shehtman.
\newblock Products of modal logics. {P}art {II}.
\newblock {\em Logic J. of the IGPL}, 2:165--210, 2000.

\bibitem{gkwz05a}
D.~Gabelaia, A.~Kurucz, F.~Wolter, and M.~Zakharyaschev.
\newblock Products of `transitive' modal logics.
\newblock {\em J. Symbolic Logic}, 70:993--1021, 2005.

\bibitem{Godel33}
K.~G{\"o}del.
\newblock Zum {E}ntscheidungsproblem des logischen {F}unktionenkalk{\"u}ls.
\newblock {\em Monatshefte f. Mathematik u. Physik}, 40:433--443, 1933.

\bibitem{Goldfarb84}
W.~Goldfarb.
\newblock The unsolvability of the {G\"o}del class with identity.
\newblock {\em J. Symbolic Logic}, 49:1237--1252, 1984.

\bibitem{GollerJL12}
S.~G{\"o}ller, J.C. Jung, and M.Lohrey.
\newblock The complexity of decomposing modal and first-order theories.
\newblock In {\em Procs. LICS 2012}, pages 325--334. IEEE, 2012.

\bibitem{Graedeletal97}
E.~Gr{\"a}del, P.~Kolaitis, and M.~Vardi.
\newblock On the decision problem for two-variable first order logic.
\newblock {\em Bulletin of Symbolic Logic}, 3:53--69, 1997.

\bibitem{Gutierrez-BasultoJ014}
V.~Guti{\'e}rrez-Basulto, J.C. Jung, and T.~Schneider.
\newblock Lightweight description logics and branching time: a troublesome
  marriage.
\newblock In {\em Procs. KR 2014}. AAAI Press, 2014.

\bibitem{Halmos62}
P.~Halmos.
\newblock {\em Algebraic Logic}.
\newblock Chelsea Publishing Company, New York, 1962.

\bibitem{Hampson&Kurucz14}
C.~Hampson and A.~Kurucz.
\newblock Undecidable propositional bimodal logics and one-variable first-order
  linear temporal logics with counting.
\newblock {\em {ACM} Trans. Comput. Log.}, 16(3):27:1--27:36, 2015.

\bibitem{Henkinetal85}
H.~Henkin, J.D. Monk, and A.~Tarski.
\newblock {\em Cylindric Algebras, {P}art {II}}.
\newblock North Holland, 1985.

\bibitem{Kikot08}
S.~Kikot.
\newblock Axiomatization of modal logic squares with distinguished diagonal.
\newblock {\em Mathematical Notes}, 88:238--250, 2010.

\bibitem{Kuhn80}
S.T. Kuhn.
\newblock Quantifiers as modal operators.
\newblock {\em Studia Logica}, 39:145--158, 1980.

\bibitem{Kurucz07}
A.~Kurucz.
\newblock Combining modal logics.
\newblock In P.~Blackburn, J.~{van B}enthem, and F.~Wolter, editors, {\em
  Handbook of Modal Logic}, volume~3 of {\em Studies in Logic and Practical
  Reasoning}, pages 869--924. Elsevier, 2007.

\bibitem{frocos09}
A.~Kurucz.
\newblock Products of modal logics with diagonal constant lacking the finite
  model property.
\newblock In S.Ghilardi and R.Sebastiani, editors, {\em Procs. FroCoS-2009},
  volume 5749 of {\em LNCS}, pages 279--286. Springer, 2009.

\bibitem{Kurucz13}
A.~Kurucz.
\newblock Representable cylindric algebras and many-dimensional modal logics.
\newblock In H.~Andr{\'e}ka, M.~Ferenczi, and I.~N{\'e}meti, editors, {\em
  Cylindric-like Algebras and Algebraic Logic}, volume~22 of {\em Bolyai
  Society Mathematical Studies}, pages 185--203. Springer, 2013.

\bibitem{Marx99}
M.~Marx.
\newblock Complexity of products of modal logics.
\newblock {\em J. Logic and Computation}, 9:197--214, 1999.

\bibitem{Marx&Reynolds99}
M.~Marx and M.~Reynolds.
\newblock Undecidability of compass logic.
\newblock {\em J. Logic and Computation}, 9:897--914, 1999.

\bibitem{Mayr00}
R.~Mayr.
\newblock Undecidable problems in unreliable computations.
\newblock In G.H. Gonnet, D.~Panario, and A.~Viola, editors, {\em Procs.
  LATIN-2000}, volume 1776 of {\em LNCS}, pages 377--386. Springer, 2000.

\bibitem{Minsky67}
M.~Minsky.
\newblock {\em Finite and infinite machines}.
\newblock Prentice-Hall, 1967.

\bibitem{Mortimer75}
M.~Mortimer.
\newblock On languages with two variables.
\newblock {\em Zeitschrift f{\"u}r Mathematische Logik und Grundlagen der
  Mathematik}, 21:135--140, 1975.

\bibitem{OuaknineW06}
J.~Ouaknine and J.~Worrell.
\newblock On metric temporal logic and faulty {T}uring machines.
\newblock In L.~Aceto and A.~Ing{\'{o}}lfsd{\'{o}}ttir, editors, {\em Procs.
  FOSSACS-2006}, volume 3921 of {\em LNCS}, pages 217--230. Springer, 2006.

\bibitem{Quine71}
W.V. Quine.
\newblock Algebraic logic and predicate functors.
\newblock In R.~Rudner and I.~Scheffer, editors, {\em Logic and Art: Essays in
  Honor of Nelson Goodman}. Bobbs-Merrill, 1971.
\newblock Reprinted with amendments in {\em The Ways of Paradox and Other
  Essays}, 2nd edition, Harvard University Press, Cambridge, Massachussetts,
  1976.

\bibitem{Reynolds97}
M.~Reynolds.
\newblock A decidable temporal logic of parallelism.
\newblock {\em Notre Dame J. Formal Logic}, 38:419--436, 1997.

\bibitem{Reynolds&Z01}
M.~Reynolds and M.~Zakharyaschev.
\newblock On the products of linear modal logics.
\newblock {\em J. Logic and Computation}, 11:909--931, 2001.

\bibitem{Schmidt&Tishkovsky02}
R.~Schmidt and D.~Tishkovsky.
\newblock Combining dynamic logic with doxastic modal logics.
\newblock In P.~Balbiani, {N-Y.} Suzuki, F.~Wolter, and M.~Zakharyaschev,
  editors, {\em Advances in Modal Logic, Volume 4}, pages 371--391. King's
  College Publications, 2003.

\bibitem{Scott62}
D.~Scott.
\newblock A decision method for validity of sentences in two variables.
\newblock {\em J. Symbolic Logic}, 27:477, 1962.

\bibitem{Segerberg70}
K.~Segerberg.
\newblock Modal logics with linear alternative relations.
\newblock {\em Theoria}, 36:301--322, 1970.

\bibitem{Segerberg73}
K.~Segerberg.
\newblock Two-dimensional modal logic.
\newblock {\em J. Philosophical Logic}, 2:77--96, 1973.

\bibitem{Shehtman78}
V.~Shehtman.
\newblock Two-dimensional modal logics.
\newblock {\em Mathematical Notices of the USSR Academy of Sciences},
  23:417--424, 1978.
\newblock (Translated from Russian).

\bibitem{Shehtman11}
V.~Shehtman.
\newblock On squares of modal logics with additional connectives.
\newblock {\em Procs. Steklov Inst.Math.}, 274:317--325, 2011.

\bibitem{Spaan93}
E.~Spaan.
\newblock {\em Complexity of Modal Logics}.
\newblock PhD thesis, Universiteit van Amsterdam, 1993.

\bibitem{Tobies01}
S.~Tobies.
\newblock {\em Complexity results and practical algorithms for logics in
  knowledge representation}.
\newblock PhD thesis, Aachen, Techn. Hochsch., 2001.

\bibitem{Venema91}
Y.~Venema.
\newblock {\em {M}any-Dimensional Modal Logics}.
\newblock PhD thesis, Universiteit van Amsterdam, 1991.

\bibitem{Wajsberg33}
M.~Wajsberg.
\newblock Ein erweiterter {K}lassenkalk{\"u}l.
\newblock {\em Monatsh Math. Phys.}, 40:113--126, 1933.

\bibitem{Wolter99}
F.~Wolter.
\newblock The product of converse {{\bf PDL}} and polymodal {{\bf K}}.
\newblock {\em J. Logic and Computation}, 10:223--251, 2000.

\end{thebibliography}


\end{document}